\newcolumntype{.}{D{.}{.}{-1}}
\newcolumntype{d}[1]{D{.}{.}{#1}}
\newcolumntype{C}{>{$}c<{$}}
\theoremstyle{plain}
\newtheorem{theorem}{Theorem}
\newtheorem{proposition}{Proposition}
\newtheorem{assumption}{Assumption}
\newtheorem{lemma}{Lemma}
\newcommand{\qed}{\hfill \ensuremath{\Box}}
\newcommand{\mc}[1]{\mathbb{#1}}
\renewcommand{\d}{\mathrm{d}}
\DeclareMathOperator\Cov{Cov}
\DeclareMathOperator\Corr{Corr}
\DeclareMathOperator*{\argmin}{argmin}
\newenvironment{proof}{\vspace{1ex}\noindent{\bf Proof}\hspace{0.5em}}
	{\hfill\qed\vspace{1ex}}
\tikzset{auto,node distance =1 cm and 1 cm,semithick,
	state/.style ={circle, draw, minimum width = 0.7 cm},
	point/.style = {circle, draw, inner sep=0.04cm,fill,node contents={}},
	bidirected/.style={Latex-Latex,dashed},
	el/.style = {inner sep=2pt, align=left, sloped}
}
\newcommand{\blind}{0}
\begin{document}

\newcommand\ud{\mathrm{d}}
\newcommand\dist{\buildrel\rm d\over\sim}
\newcommand\ind{\stackrel{\rm indep.}{\sim}}
\newcommand\iid{\stackrel{\rm i.i.d.}{\sim}}
\newcommand\logit{{\rm logit}}
\renewcommand\r{\right}
\renewcommand\l{\left}
\newcommand\cO{\mathcal{O}}
\newcommand\cY{\mathcal{Y}}
\newcommand\cZ{\mathcal{Z}}
\newcommand\E{\mathbb{E}}
\newcommand\cL{\mathcal{L}}
\newcommand\V{\mathbb{V}}
\newcommand\cA{\mathcal{A}}
\newcommand\cB{\mathcal{B}}
\newcommand\cD{\mathcal{D}}
\newcommand\cE{\mathcal{E}}
\newcommand\cM{\mathcal{M}}
\newcommand\cU{\mathcal{U}}
\newcommand\cN{\mathcal{N}}
\newcommand\cT{\mathcal{T}}
\newcommand\cX{\mathcal{X}}
\newcommand\bA{\bm{A}}
\newcommand\bH{\bm{H}}
\newcommand\bB{\bm{B}}
\newcommand\bP{\bm{P}}
\newcommand\bQ{\bm{Q}}
\newcommand\bU{\bm{U}}
\newcommand\bD{\bm{D}}
\newcommand\bS{\bm{S}}
\newcommand\bx{\bm{x}}
\newcommand\bX{\bm{X}}
\newcommand\bV{\bm{V}}
\newcommand\bW{\bm{W}}
\newcommand\bM{\bm{M}}
\newcommand\bZ{\bm{Z}}
\newcommand\bY{\bm{Y}}
\newcommand\bT{\bm{T}}
\newcommand\bt{\bm{t}}
\newcommand\bbeta{\bm{\beta}}
\newcommand\bpi{\bm{\pi}}
\newcommand\bdelta{\bm{\delta}}
\newcommand\bgamma{\bm{\gamma}}
\newcommand\balpha{\bm{\alpha}}
\newcommand\bone{\mathbf{1}}
\newcommand\bzero{\mathbf{0}}
\newcommand\tomega{\tilde\omega}
\newcommand{\argmax}{\operatornamewithlimits{argmax}}

\newcommand{\R}{\textsf{R}}

\newcommand\spacingset[1]{\renewcommand{\baselinestretch}%
{#1}\small\normalsize}

\spacingset{1}

\newcommand{\tit}{\bf Statistical Inference for Heterogeneous
  Treatment Effects Discovered by Generic Machine Learning in
  Randomized Experiments}


\if1\blind
\title{\tit}
\fi

\if0\blind

{\title{\tit\thanks{We thank the Sloan foundation (Economics Program;
      2020--13946) for partial support. The proposed methodology is
      implemented through an open-source \R\ package, {\sf evalITR},
      which is freely available for download at the Comprehensive R
      Archive Network (CRAN;
      \url{https://CRAN.R-project.org/package=evalITR}).  We also
      thank Kevin Du, Lucas Janson, and Yash Nair for their feedback
      on an earlier version of this paper. }}

  \author{
  Kosuke Imai\thanks{Professor, Department of Government and
      Department of Statistics, Harvard University, Cambridge, MA
      02138. Phone: 617--384--6778, Email:
      \href{mailto:Imai@Harvard.Edu}{Imai@Harvard.Edu}, URL:
      \href{https://imai.fas.harvard.edu}{https://imai.fas.harvard.edu}} \hspace{.75in}
Michael Lingzhi Li\thanks{Operation Research
	Center, Massachusetts Institute of Technology, Cambridge, MA
	02139. \href{mailto:mlli@mit.edu}{mlli@mit.edu}}
}

  \date{\today
  }

  \fi
\maketitle

\pdfbookmark[1]{Title Page}{Title Page}

\thispagestyle{empty}
\setcounter{page}{0}

\begin{abstract}

  Researchers are increasingly turning to machine learning (ML)
  algorithms to investigate causal heterogeneity in randomized
  experiments.  Despite their promise, ML algorithms may fail to
  accurately ascertain heterogeneous treatment effects under practical
  settings with many covariates and small sample size.  In addition,
  the quantification of estimation uncertainty remains a challenge.
  We develop a general approach to statistical inference for
  heterogeneous treatment effects discovered by a generic ML
  algorithm.  We apply the Neyman's repeated sampling framework to a
  common setting, in which researchers use an ML algorithm to estimate
  the conditional average treatment effect and then divide the sample
  into several groups based on the magnitude of the estimated effects.
  We show how to estimate the average treatment effect within each of
  these groups, and construct a valid confidence interval.  In
  addition, we develop nonparametric tests of treatment effect
  homogeneity across groups, and rank-consistency of within-group
  average treatment effects.  The validity of our methodology does not
  rely on the properties of ML algorithms because it is solely based
  on the randomization of treatment assignment and random sampling of
  units.  Finally, we generalize our methodology to the cross-fitting
  procedure by accounting for the additional uncertainty induced by
  the random splitting of data.

  \bigskip
  \noindent {\bf Key Words:} causal inference, causal heterogeneity,
  conditional average treatment effect, cross-fitting, multiple
  testing, randomization inference, sample splitting

\end{abstract}

\clearpage
\spacingset{1.83}
\section{Introduction}
\label{sec:intro}

A growing number of researchers are turning to machine learning (ML)
algorithms to uncover causal heterogeneity in randomized
experiments. ML algorithms are appealing because in many applications
the structure of heterogeneous treatment effects is unknown.  Despite
their promise, however, relatively little theoretical properties have
been established for many of these algorithms.  In addition, the
choice of tuning parameter values remains to be often difficult and
consequential in practice.  As a result, ML algorithms may fail to
ascertain heterogeneous treatment effects under common settings with
many covariates and small sample size.  Furthermore, one major
challenge is the quantification of statistical uncertainty when
estimating heterogeneous treatment effects using ML algorithms.

In this paper, we develop a general approach to statistical inference
for heterogeneous treatment effects estimated through the application
of a generic ML algorithm to experimental data.  We apply the
\citet{neym:23}'s repeated sampling framework to a common setting, in
which researchers use ML algorithms to estimate the conditional
average treatment effect (CATE) given pre-treatment covariates and
then divide the sample into several groups based on the magnitude of
these estimated effects.  We show how to obtain a consistent estimate
of the average treatment effect within each of these groups --- the
sorted group average treatment effect (GATES; \cite{cher:etal:19}) ---
and construct an asymptotically valid confidence interval.

We also propose two nonparametric tests of treatment effect
heterogeneity that are of interest to applied researchers.  First, we
test whether there exists any treatment effect heterogeneity across
groups.  Second, we develop a statistical test of the rank-consistency
of GATES.  If an ML algorithm produces a reasonable scoring rule, the
rank ordering of GATES based on their magnitude should be monotonic.
To accommodate the use of various ML algorithms, we make no assumption
about their properties.  Specifically, ML algorithms do not have to be
consistent or unbiased.  This is possible because the validity of our
confidence intervals and nonparametric tests solely depends on the
randomization of treatment assignment and random sampling of units.
Thus, our approach imposes only a minimal set of assumptions on the
underlying data generating process.

We first consider the settings, in which an external data set is used
to estimate the CATE.  For example, researchers may apply an ML
algorithm to an observational dataset.  Alternatively, an experimental
dataset may be split into the training and validation data sets where
an ML algorithm is first applied to the training data to estimate the
CATE, and the validation data is then used to estimate the GATES.
Here, we treat the estimated CATE function as fixed and do not account
for the uncertainty that arises from its estimation.

To incorporate this additional source of uncertainty, we further
generalize our methodology to the cross-fitting procedure, which
randomly splits the data into multiple folds.  Each fold is used as
the validation data to estimate the GATES while the remaining folds
serve as the corresponding training data to estimate the CATE.  After
repeating this for each fold, we aggregate the GATES estimates to the
entire sample.  Unlike the sample-splitting case where we condition on
the split, we account for additional uncertainty induced by the
randomness of its cross-fitting procedure.  This directly addresses
the fact that when the sample size is small the GATES estimate may
vary considerably due to the random splitting of data.

\paragraph{Related Literature.}

The proposed methodology builds on the existing literature about
statistical inference for heterogeneous treatment effects.  In an
early work, \citet{crum:etal:08} propose nonparametric tests of
treatment effect heterogeneity.  The authors rely on the consistency
of sieve methods under the assumption that heterogeneous treatment
effects are a smooth function of covariates.  In contrast, our
methodology does not require the consistent estimation of the CATE by
ML algorithms.  Moreover, while \citeauthor{crum:etal:08} assume the
continuous differentiability of the CATE, we only require its
continuity.

\citet{ding:fell:mira:16} propose an alternative approach based on
Fisher's randomization test.  Similar to our proposed methodology,
this test neither requires modeling assumptions nor imposes
restrictive assumptions on data generating process.  In fact, their
test yields conservative $p$-values without asymptotic approximation
whereas other approaches including ours are only valid in large
samples.  The authors, however, test restrictive sharp null
hypotheses.  For example, \cite{ding:fell:mira:16} consider a null
hypothesis that the individual treatment effect is constant within
each group and the effect only varies across groups.  In contrast, we
focus on the null hypotheses about average treatment effects that may
vary within and across groups under the Neyman's repeated sampling
framework.  While our tests are valid only asymptotically, our
simulation studies show that they perform reasonably well in small
samples.  In addition, \citet{ding:fell:mira:19} use the Neyman's
repeated sampling framework to explore treatment effect heterogeneity
like we do, but rely entirely on the linear regression and does not
allow for the use of more flexible ML algorithms.

More recently, \cite{cher:etal:19} study the settings that are
identical to the ones considered in this paper.  Similar to our
methodology, the authors do not impose strong assumptions on the
properties of ML algorithms that are used to estimate the CATE.
However, to incorporate the additional uncertainty of the
cross-fitting procedure, \cite{cher:etal:19} propose to repeat the
procedure many times and aggregate the resulting $p$-values.  We avoid
such a computationally intensive procedure and instead use the
Neyman's repeated sampling framework to conduct valid statistical
inference under cross-fitting.  In simulation studies reported
elsewhere \citep{imai:li:23a}, we show that our confidence intervals
are less conservative than those proposed by \cite{cher:etal:19} in
finite samples.

Other researchers also have considered GATES and related quantities.
For example, \cite{yadlowsky2021evaluating} establish the asymptotic
properties for a related general class of metrics that summarize the
effect of treatment prioritization rules.  In addition to the
different focus, the authors assume that a treatment prioritization
rule of interest is fixed and do not consider the uncertainty that
arises from its estimation. \cite{dwivedi2020stable} also estimate the
GATES to explore treatment effect heterogeneity and develop
calibration methods.  However, they do not derive the asymptotic
distribution of GATES and hence stop short of providing formal
statistical methods.

Finally, \citet{imai2019experimental} show how to evaluate an
individualized treatment rule derived from the application of a
generic ML algorithm in general settings including the one based on
cross-fitting.  We build on this work and derive the asymptotic
properties of the GATES estimator.  \citet{imai:li:23b} further
extends the methodology proposed in this paper and develop uniform
asymptotic confidence bands.  This allows researchers to choose, with
a statistical guarantee, a group of individuals who are predicted to
benefit from or be harmed by the treatment, using the estimated CATE
based on a generic ML algorithm.  They do not, however, consider the
estimation uncertainty of the CATE.

\section{The Proposed Methodology}
\label{sec:sample_splitting}

We start by developing our methodology in a setting where the
conditional average treatment effect (CATE) function is estimated
using a separate data set, but is considered fixed when estimating the
sorted group average treatment effect (GATES) and conducting
statistical tests. For instance, the estimated CATE might come from an
external, possibly observational, dataset. An alternative is
\textit{sample splitting}, where the sample is divided randomly into
training and evaluation sets. The training data is used for CATE
estimation via a machine learning algorithm, and the evaluation data
for GATES estimation. In this section, we do not account for the
uncertainty in estimating the CATE. In
Section~\ref{sec:cross_fitting}, we extend our methodology to
cross-fitting, incorporating this estimation uncertainty.

\subsection{Setup}
\label{subsec:randexp}

Suppose that we have an independently and identically distributed
(i.i.d.) sample of $n$ units from a super-population $\mathcal{P}$.
Let $T_i$ represent the treatment assignment indicator variable, which
is equal to $1$ if unit $i$ is assigned to the treatment condition and
is equal to $0$ otherwise, i.e., $T_i \in \cT = \{0,1\}$. For each
unit, we observe the outcome variable $Y_i \in \cY$ and a vector of
pre-treatment covariates, $\bX_i \in \cX$, where $\cY$ and $\cX$
represent the support of the outcome variable and that of the
pre-treatment covariates, respectively.

We require the standard causal inference assumptions of consistency
and no interference between units, denoting the potential outcome for
unit $i$ under the treatment condition $T_i = t$ as $Y_i(t)$ for
$t=0,1$ \citep[e.g.,][]{neym:23,holl:86,rubi:90}.  The observed
outcome is given by $Y_i = Y_i(T_i)$.  For notational simplicity, we
assume that the treatment assignment is completely randomized with
exactly $n_1$ units assigned to the treatment condition though the
extensions to other experimental designs and unconfounded
observational designs are possible.  We formally state these
assumptions below.
\begin{assumption}[No Interference between Units] \label{asm:SUTVA}
	\spacingset{1} The potential outcomes for unit $i$ do not depend on
	the treatment status of other units.  That is, for all
	$t_1, t_2,\ldots,t_n \in \{0, 1\}$, we have,
	$Y_i(T_1 = t_1, T_2 = t_2, \ldots, T_n = t_n) = Y_i(T_i = t_i).$
\end{assumption}
\begin{assumption}[Random Sampling of Units] \label{asm:randomsample}
	\spacingset{1} Each of $n$ units, represented by a three-tuple
	consisting of two potential outcomes and pre-treatment covariates,
	is assumed to be independently sampled from a super-population
	$\mathcal{P}$, i.e.,
	$$(Y_i(1), Y_i(0), \bX_i) \ \iid \ \mathcal{P}$$
\end{assumption}
\begin{assumption}[Complete Randomization] \label{asm:comrand}
  \spacingset{1} For any $\bt \in \{0,1\}^n$ such that $\sum_{i=1}^n \bt_i=n_1$, the
  treatment assignment probability is given by,
	$$\Pr(\bT = \bt \mid \{Y_{i^\prime}(1), Y_{i^\prime}(0),
        \bX_{i^\prime}\}_{i^\prime=1}^n) \ = \ \frac{1}{{n_1\choose n}}.$$
\end{assumption}

Suppose that a researcher applies an ML algorithm to a training
dataset and estimate the CATE.  As noted earlier, this training
dataset can be obtained through the sample splitting or it may be an
external dataset.  The CATE is defined as,
\begin{equation*}
  \tau(\bx) \ = \ \E(Y_i(1)-Y_i(0) \mid \bX_i = \bx), \label{eq:CATE}
\end{equation*}
for any $\bx \in \cX$.  The ML algorithm produces the
following scoring rule,
\begin{equation}
  s: \cX \longrightarrow \mathcal{S} \subset \mathbb{R} \label{eq:scoring}
\end{equation}
where a greater score indicates a higher priority to receive the
treatment.  Without loss of generality, we assume that the scoring
rule is bijective, i.e., $s(\bx) \ne s(\bx^\prime)$ for any
$\bx, \bx^\prime \in \cX$ with $\bx \ne \bx^\prime$.  Note that one
can always redefine $\cX$ to satisfy this assumption.

As noted earlier, we assume almost nothing about the properties of
this scoring rule derived by the ML algorithm.  In particular, the
scoring rule does not have to be a consistent estimate of CATE.  In
fact, the scoring rule need not even be an estimate of CATE so long as
it satisfies the definition given in Equation~\eqref{eq:scoring}.

\subsection{Estimation and Inference}
\label{subsec:GATE}

Given the setup introduced above, we first consider the estimation and
inference for the sorted group average treatment effect (GATES), which
is a common quantity of interest in applied research and is studied by
\cite{cher:etal:19}.  The idea is that researchers sort units into a
total of $K$ groups based on the quantile of the scoring rule, and
then estimate the average treatment effect within each group.  For
simplicity, we assume that the number of treated and control units,
i.e., $n_1$ and $n_0$, are multiples of $K$.  The formal definition of
GATES is given by,
\begin{equation}
  \tau_k \ = \ \mc{E}(Y_i(1)-Y_i(0)\mid c_{k-1}(s)< s(\bX_i) \leq c_{k}(s))\label{eq:gCATE}
\end{equation}
for $k=1,2,\ldots,K$ where $c_k$ represents the cutoff between the
$(k-1)$th and $k$th groups and is defined as,
\begin{equation*}
  c_k(s) \ = \ \inf\{c \in \mc{R} \mid \Pr(s(\bX_i)\leq c)\geq k/K\},
\end{equation*}
for $k=1,2,\ldots,K$, with $c_0 = -\infty$. Equivalently, GATES can be
seen as a special case of the rank-weighted average treatment effect
(RATE) with $\alpha(u)=\mathbf{1}\{\frac{k-1}{K}< u\leq \frac{k}{K}\}$
\citep{yadlowsky2021evaluating}.

Thus, units that belong to the $K$th group, for example, represent
those who are likely to have the greatest treatment effect according
to the ML algorithm whereas those in the first group are likely to
have the least treatment effect.  However, we do not assume that the
GATES is monotonic, i.e., $\tau_k \le \tau_{k^\prime}$ for all
$k < k^\prime$.  This is important because we want to impose as little
restriction on the underlying scoring rule as possible.  Indeed, if
the scoring rule is not a good estimate of CATE, such an assumption
may be violated.  To address this problem, we later develop a
statistical test of this monotonicity assumption.

We consider the following estimator of GATES using the experimental
data,
\begin{equation}
\hat \tau_{k} \ = \ \frac{K}{n_1} \sum_{i=1}^n Y_i T_i \hat{f}_k(\bX_i) -
\frac{K}{n_0} \sum_{i=1}^n Y_i(1-T_i)\hat{f}_k(\bX_i),  \label{eq:gCATEest}
\end{equation}
for $k=1,2,\ldots,K$ where
$\hat{f}_k(\bX_i) = \mathbf{1}\{s(\bX_i) > \hat{c}_{k-1}(s)\} -
\mathbf{1}\{s(\bX_i) > \hat{c}_{k}(s)\}$, and
$\hat{c}_{k}(s) \ = \ \inf \{c \in \mathbb{R}: \sum_{i=1}^n
\mathbf{1}\{s(\bX_i)\leq c\} \geq  nk/K\}$ is the estimated cutoff.  First,
we derive the bias bound and exact variance of the GATES estimator.
\begin{theorem} {\sc (Bias Bound and Exact Variance of the GATES
    Estimator)} \label{thm:GATEest} \spacingset{1.25} Under
  Assumptions~\ref{asm:SUTVA}--\ref{asm:comrand}, the bias of the
  proposed estimator of GATES given in Equation~\eqref{eq:gCATEest}
  can be bounded as follows,
	\begin{eqnarray*}
		& & \mc{P}(|\E\{\hat \tau_{k} -\tau_{k}\mid \hat{c}_{k}(s),\hat{c}_{k-1}(s) \} |\geq
		\epsilon)  \\
		& \leq &   1-B\l(\frac{k}{K}+\gamma_{k}(\epsilon), \frac{nk}{K}
		, n- \frac{nk}{K} +1\r)+B\l(\frac{k}{K}-\gamma_{k}(\epsilon),
		\frac{nk}{K}
		, n-\frac{nk}{K} +1\r)\\& & -B\l(\frac{k-1}{K}+\gamma_{k-1}(\epsilon), \frac{n(k-1)}{K}
		, n-\ \frac{n(k-1)}{K} +1\r) \\ & & +B\l(\frac{k-1}{K}-\gamma_{k-1}(\epsilon),
		\frac{n(k-1)}{K}
		,  n-\frac{n(k-1)}{K} +1\r),
	\end{eqnarray*}
	for any given constant $\epsilon > 0$ where
	$B(\epsilon, \alpha, \beta)$ is the incomplete beta function
	(if $\alpha \leq 0$ and $\beta > 0$, we set
	$B(\epsilon,\alpha, \beta):=H(\epsilon)$ for all $\epsilon$
	where $H(\epsilon)$ is the Heaviside step function), and
	\begin{equation*}
		\gamma_{k}(\epsilon)\ = \ \frac{\epsilon}{K\max_{c  \in
				[ c_{k}(s) -\epsilon,\   c_{k}(s) +\epsilon]} \E(Y_i(1)-Y_i(0)\mid s(\bX_i)=c)}.
	\end{equation*}
	The variance of the estimator is given by,
	\begin{equation*}
		\V(\hat \tau_{k}) \  = \
		K^2\left(\frac{\E(S_{k1}^2)}{n_1} +
		\frac{\E(S_{k0}^2)}{n_0}\right) +
                \frac{(n-K)\kappa_{k11}}{n-1} -
                \kappa_{k1}^2,
	\end{equation*}
	where
        $S_{kt}^2 = \sum_{i=1}^n (Y_{ki}(t) -
        \overline{Y_{k}(t)})^2/(n-1)$,
        $\kappa_{kt} = \E(Y_i(1)-Y_i(0)\mid \hat{f}_k(\bX_i)=t)$, and
        $\kappa_{ktt} = \E[(Y_i(1)-Y_i(0))(Y_j(1)-Y_j(0)) \mid
        \hat{f}_k(\bX_i)=\hat{f}_k(\bX_j)=t]$ for $i\ne j$ with
        $Y_{ki}(t) = \hat{f}_k(\bX_i)Y_i(t)$, and
        $\overline{Y_{k}(t)} = \sum_{i=1}^n Y_{ki}(t)/n$, for
        $t= 0,1$.
\end{theorem}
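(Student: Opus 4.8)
The plan is to treat the bias bound and the exact variance separately, exploiting throughout the observation that, conditional on the estimated cutoffs $\hat c_{k-1}(s),\hat c_{k}(s)$, the estimator in \eqref{eq:gCATEest} is exactly $K$ times a Neyman difference-in-means estimator applied to the ``masked'' potential outcomes $Y_{ki}(t)=\hat f_k(\bX_i)Y_i(t)$ under complete randomization. This reframing lets me import classical finite-population results and focus the work on the randomness induced by the sample-determined grouping.

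For the bias bound, I would first show that, conditional on the two cutoffs, Assumptions~\ref{asm:SUTVA}--\ref{asm:comrand} render $\hat\tau_k$ unbiased for the population CATE averaged over the \emph{empirically} defined group, i.e.\ $\E\{\hat\tau_k\mid \hat c_{k-1},\hat c_k\}=\E(Y_i(1)-Y_i(0)\mid \hat c_{k-1}\le s(\bX_i)<\hat c_k)$: since $\E(T_i\mid\cdot)=n_1/n$, each Horvitz--Thompson piece becomes a sample average over the group, whose conditional mean the order-statistic structure identifies with the truncated CATE. The conditional bias is then the difference between this truncated mean and $\tau_k$, which is nonzero only insofar as the empirical cutoffs differ from $c_{k-1},c_k$. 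I would convert a bias of size $\epsilon$ into a cutoff deviation: using continuity of $c\mapsto\E(Y_i(1)-Y_i(0)\mid s(\bX_i)=c)$ near each cutoff, a change of $\epsilon$ in the group average forces $\Pr(s(\bX_i)\le \hat c_k)$ to move away from $k/K$ by at least $\gamma_k(\epsilon)$ (and similarly at $k-1$), where the factor $K$ and the maximum over $[c_k-\epsilon,c_k+\epsilon]$ in the definition of $\gamma_k$ are precisely what this translation requires. Finally, because $\Pr(s(\bX_i)\le\hat c_k)$ is the CDF evaluated at the $(nk/K)$-th order statistic, the probability integral transform makes it distributed as the $(nk/K)$-th order statistic of $n$ i.i.d.\ uniforms, namely $\mathrm{Beta}(nk/K,\,n-nk/K+1)$; its two-sided tail past $k/K\pm\gamma_k(\epsilon)$ is exactly $1-B(\tfrac{k}{K}+\gamma_k,\cdot,\cdot)+B(\tfrac{k}{K}-\gamma_k,\cdot,\cdot)$, and a union bound over the two cutoffs produces the four incomplete-beta terms. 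I expect the translation step to be the main obstacle, since it demands controlling the truncated-mean functional of the CATE as both endpoints move and reducing the two-cutoff event to two single-cutoff quantile deviations via the triangle inequality.

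For the exact variance, I would condition on the realized sample $\cD=\{Y_i(1),Y_i(0),\bX_i\}_{i=1}^n$, so that $\hat f_k$ and the masked outcomes are fixed and the sole randomness is treatment assignment; the classical finite-sample Neyman formula then gives $\V(\hat\tau_k\mid\cD)=K^2\{S_{k1}^2/n_1+S_{k0}^2/n_0-S_{k\tau}^2/n\}$, with $S_{k\tau}^2$ the sample variance of the masked effects $M_i=Y_{ki}(1)-Y_{ki}(0)$. Applying the law of total variance with $\E(\hat\tau_k\mid\cD)=\tfrac{K}{n}\sum_i M_i$, the $-S_{k\tau}^2/n$ contribution combines with $\V(\tfrac{K}{n}\sum_i M_i)$; by exchangeability the terms $S_{k1}^2,S_{k0}^2$ pass through as $\E(S_{k1}^2),\E(S_{k0}^2)$, and after cancellation the only surviving cross term is $K^2\Cov(M_1,M_2)$. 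The crux is then evaluating $\Cov(M_1,M_2)$, where the constraint that exactly $n/K$ units fall in group $k$, i.e.\ $\sum_i\hat f_k(\bX_i)=n/K$, is essential: it makes the membership indicators negatively associated, with $\Pr(\hat f_k(\bX_1)=\hat f_k(\bX_2)=1)=\frac{(n/K)(n/K-1)}{n(n-1)}$, and together with $\E\{\hat f_k(\bX_i)(Y_i(1)-Y_i(0))\}=\kappa_{k1}/K$ this reduces the covariance to $-\frac{K-1}{K^2(n-1)}\kappa_{k1}^2$, giving the stated expression. The main difficulty is this bookkeeping of the cross-unit dependence created by the shared, sample-determined grouping, since the masked effects of distinct units are not independent and one must track how the group-size constraint converts their covariance into a clean multiple of $\kappa_{k1}^2$.
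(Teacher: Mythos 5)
Your proposal follows essentially the same route as the paper's proof: for the bias, you translate the event $|\text{bias}|\geq\epsilon$ into a deviation of $F(\hat c_k(s))$ from $k/K$ by at least $\gamma_k(\epsilon)$ and invoke the $\mathrm{Beta}(nk/K,\,n-nk/K+1)$ law of the transformed order statistic plus a union bound over the two cutoffs; for the variance, you use the same law-of-total-variance decomposition conditioning on the data, Neyman's finite-population formula, and the computation $\Cov(M_1,M_2)=-\frac{K-1}{K^2(n-1)}\kappa_{k1}^2$ from the fixed group-size constraint (your direct cancellation bookkeeping is just an unwrapped version of the Nadeau--Bengio lemma the paper cites). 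The approach and all key steps match; no genuine gap.
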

Proof is given in Supplementary Appendix~\ref{app:GATEest}.

When compared to the standard variance estimator, there are additional
two terms.  These terms result from the fact that the cutoff points
are estimated, yielding a cross-unit correlation in terms of
$\hat{f}_k(\bX_i)Y_i(t)$. Since exactly $n/K$ data points are taken to
have $\hat{f}_k(\bX_i)=1$, the value of this function is generally
negatively correlated across units, i.e.,
$\Corr(\hat{f}_k(\bX_i),\hat{f}_k(\bX_j))<0$.

The variance can be consistently estimated by replacing each unknown
parameter with its sample analogue:
\begin{eqnarray*}
\widehat{\E(S_{kt}^2)} & = &
\frac{1}{n_t-1} \sum_{i=1}^n \mathbf{1}\{T_i = t\} (Y_{ki} -
\overline{Y_{kt}})^2, \\ 
\hat\kappa_{kt} & = & \frac{\sum_{i=1}^n \mathbf{1}\{\hat{f}_k(\bX_i) =
	t\} T_i  Y_i}{\sum_{i=1}^n  \mathbf{1}\{\hat{f}_k(\bX_i)=
	t\} T_i } - \frac{\sum_{i=1}^n   \mathbf{1}\{\hat{f}_k(\bX_i) =  t\} (1-T_i)  Y_i}{\sum_{i=1}^n
	\mathbf{1}\{\hat{f}_k(\bX_i) =  t\}
	(1-T_i)}, \\
\hat{\kappa}_{ktt} & = & \frac{[\sum_{i=1}^n
                             \mathbf{1}\{\hat{f}_k(\bX_i) =
                             t\}T_iY_i]^2-\sum_{i=1}^n
                             \mathbf{1}\{\hat{f}_k(\bX_i) =
                             t\}T_iY_i^2}{[\sum_{i=1}^n
                             \mathbf{1}\{\hat{f}_k(\bX_i) =
                             t\}T_i]^2-\sum_{i=1}^n
                             \mathbf{1}\{\hat{f}_k(\bX_i) =  t\}T_i} \\
                       & & + \frac{[\sum_{i=1}^n
                           \mathbf{1}\{\hat{f}_k(\bX_i) =
                           t\}(1-T_i)Y_i]^2-\sum_{i=1}^n
                           \mathbf{1}\{\hat{f}_k(\bX_i) =
                           t\}(1-T_i)Y_i^2}{[\sum_{i=1}^n
                           \mathbf{1}\{\hat{f}_k(\bX_i) =
                           t\}(1-T_i)]^2-\sum_{i=1}^n
                           \mathbf{1}\{\hat{f}_k(\bX_i) =  t\}(1-T_i)} \\
                       & & - 2\frac{[\sum_{i=1}^n   \mathbf{1}\{\hat{f}_k(\bX_i) =  t\}(1-T_i)Y_i][\sum_{i=1}^n   \mathbf{1}\{\hat{f}_k(\bX_i) =  t\}T_iY_i]}{[\sum_{i=1}^n   \mathbf{1}\{\hat{f}_k(\bX_i) =  t\}(1-T_i)][\sum_{i=1}^n   \mathbf{1}\{\hat{f}_k(\bX_i) =  t\}T_i]}.
\end{eqnarray*}
for $t=0,1$ where $Y_{ki} = \hat{f}_k(\bX_i)Y_i$ and
$\overline{Y}_{kt} = \sum_{i=1}^n \mathbf{1}\{T_i = t\} Y_{ki}/n_t$.
The expression of $\hat{\kappa}_{ktt}$ above enables the calculation
in $O(n)$ rather than $O(n^2)$ time.  The details of the derivation is
given in Appendix~\ref{app:kappa}.

We can further derive the asymptotic sampling distribution of the GATE
estimator by requiring the following continuity assumption and moment
conditions:
\begin{assumption}[Continuity of CATE at the
  Thresholds] \label{asm:continuity} \spacingset{1} Let
  $F(c)=\Pr(s(\bX_i)\leq c)$ represent the cumulative distribution
  function of the scoring rule and define its pseudo-inverse
  $F^{-1}(p)=\inf \{c: F(c) \ge p\}$ for $p\in [0,1]$.  The CATE
  function $\E(Y_i(1)-Y_i(0) \mid s(\bX_i) = F^{-1}(p))$ is assumed to
  be left-continuous with bounded variation on any interval
  $(\theta, 1-\theta)$ with $\theta>0$, and continuous in $p$ at
  $p=1/K,\ldots, (K-1)/K$.
\end{assumption}
\begin{assumption}[Moment Conditions] \label{asm:moments}
	\spacingset{1.25} For each $t=0,1$, we have
	\begin{enumerate}
		\item $\V(Y_i(t))>0$;
		\item $\E(Y_i(t)^3) < \infty$.
	\end{enumerate}
\end{assumption}
Assumption~\ref{asm:continuity} is similar to the assumption commonly
used in the literature that the CATE is continuous in the covariates
$\bX_i$ \citep[e.g.,][]{kunz:etal:18,wage:athe:18}, but we only
require continuity at the thresholds, $1/K,\cdots, (K-1)/K$ and
bounded variation everywhere else.  We will show in
Proposition~\ref{prop:continuity_tight} below that
Assumption~\ref{asm:continuity} is among the weakest assumptions
necessary for our asymptotic results. In particular, this assumption
requires that the scoring rule cannot be discontinuous at the
thresholds unless the CATE is constant in the scoring rule, i.e.
$\E(Y_i(1)-Y_i(0) \mid s(\bX_i) = F^{-1}(p))=\E(Y_i(1)-Y_i(0))$ for
all $p$.

We now present the asymptotic sampling distribution of GATES
estimator.
\begin{theorem}[Asymptotic Sampling Distribution of GATES Estimator]
	\label{thm:GATEsamp} \spacingset{1.25}
	Under Assumptions~\ref{asm:SUTVA}--\ref{asm:moments},
	we have,
	\begin{equation*}
		\frac{\hat{\tau}_k-\tau_k}{\sqrt{\V(\hat\tau_k)}} \
		\stackrel{d}{\longrightarrow} \ N(0,1)
	\end{equation*}
	for $k=1,\ldots,K$ where $\V(\hat\tau_k)$ is given in
        Theorem~\ref{thm:GATEest}.
\end{theorem}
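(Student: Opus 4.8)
The plan is to standardize $\hat\tau_k$ and reduce the claim to two separate central limit theorems, one governing the randomization of treatment and one governing the sampling of units. Writing $\cD = \{(Y_i(1),Y_i(0),\bX_i)\}_{i=1}^n$ for the sampled units, I would decompose
$$\hat\tau_k - \tau_k \ = \ \underbrace{\big(\hat\tau_k - \E[\hat\tau_k\mid\cD]\big)}_{\text{randomization}} \ + \ \underbrace{\big(\E[\hat\tau_k\mid\cD] - \tau_k\big)}_{\text{sampling}},$$
using that $\E[\hat\tau_k\mid\cD] = (K/n)\sum_{i=1}^n \hat f_k(\bX_i)(Y_i(1)-Y_i(0))$ by complete randomization. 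First I would invoke the bias bound of Theorem~\ref{thm:GATEest} to show that $\E\hat\tau_k - \tau_k = o(n^{-1/2})$: since Assumption~\ref{asm:continuity} keeps $\gamma_k(\epsilon)$ bounded away from zero, the incomplete-beta bound (taken with a slowly vanishing $\epsilon_n$) forces the conditional bias to concentrate fast enough to be negligible relative to $\sqrt{\V(\hat\tau_k)} = \Theta(n^{-1/2})$. It then suffices to establish the joint asymptotic normality of the two centered pieces with total variance matching $\V(\hat\tau_k)$.

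For the randomization term, I would condition on $\cD$, so that the groups $\hat f_k(\bX_i)$ are fixed and the only randomness is $T$. Then $\hat\tau_k$ is a difference-in-means (scaled by $K$) from a completely randomized experiment on the transformed potential outcomes $Y_{ki}(t)=\hat f_k(\bX_i)Y_i(t)$, and I would apply the finite-population central limit theorem for completely randomized designs. The Lyapunov condition holds almost surely along the sampling sequence because Assumption~\ref{asm:moments} supplies finite third moments, and the conditional randomization variance $\V(\hat\tau_k\mid\cD)$ converges by the law of large numbers to a deterministic limit. This yields that, conditionally on $\cD$, $\sqrt{n}\,(\hat\tau_k - \E[\hat\tau_k\mid\cD]) \stackrel{d}{\longrightarrow} N(0,\sigma^2_{\mathrm{rand}})$ for almost every realization of the sample, with $\sigma^2_{\mathrm{rand}} = \lim_n n\,\V(\hat\tau_k\mid\cD)$.

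The sampling term is the main obstacle, because $\hat f_k$ depends on the entire sample through the empirical quantiles $\hat c_{k-1},\hat c_k$, so $\E[\hat\tau_k\mid\cD]$ is a rank-based selected sum rather than a sum of i.i.d.\ terms. Passing to $U_i = F(s(\bX_i)) \sim \mathrm{Uniform}(0,1)$ and $\mu(u) = \E(Y_i(1)-Y_i(0)\mid s(\bX_i)=F^{-1}(u))$, I would write $\E[\hat\tau_k\mid\cD]$ as the sum over units whose $U_i$-ranks fall in $((k-1)/K, k/K]$ and peel off the fixed-interval sum plus boundary corrections at the two empirical quantiles. Linearizing those quantiles via a Bahadur-type representation shows each boundary correction equals $-K\mu(k/K)\cdot\frac1n\sum_i(\mathbf{1}\{U_i\le k/K\}-k/K)$ up to $o_p(n^{-1/2})$; here the continuity of the CATE at the thresholds (Assumption~\ref{asm:continuity}) is exactly what guarantees that the $O_p(\sqrt{n})$ reassigned units all carry CATE converging to the threshold value $\mu(k/K)$, while bounded variation controls the linearization remainder. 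Consequently $\sqrt{n}\,(\E[\hat\tau_k\mid\cD]-\tau_k)$ reduces to a single i.i.d.\ average and is asymptotically $N(0,\sigma^2_{\mathrm{samp}})$ by the Lindeberg CLT.

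Finally I would assemble the two limits by a conditional central limit argument: expressing the characteristic function of the standardized statistic as an iterated expectation over $\cD$, the inner (randomization) factor converges for almost every sample by the argument above, and the outer expectation then converges by the sampling CLT. Matching the resulting variance to the exact expression $\V(\hat\tau_k)$ in Theorem~\ref{thm:GATEest} delivers the $N(0,1)$ limit. I expect the delicate step to be the uniform control of the quantile-linearization remainder in the sampling term, since that is precisely where Assumption~\ref{asm:continuity}---shown to be essentially tight in Proposition~\ref{prop:continuity_tight}---does the real work.
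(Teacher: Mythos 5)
Your overall architecture is a legitimate alternative to the paper's proof: the paper conditions only on the ordered covariates and splits $\hat\tau_k$ into an induced-order-statistics part (handled by a conditional Lindeberg CLT via \citet{bhattacharya1974convergence}) plus a linear function of order statistics (handled by Shorack's CLT, Theorem~\ref{thm:ordstat_clt}), whereas you condition on the full data $\cD$ and split into a randomization part (finite-population CLT) plus a sampling part (Bahadur-type quantile linearization). Your assembly step via iterated characteristic functions is essentially the paper's Lemma~\ref{lem:conditional_converge}, and your sampling-term analysis is a hands-on version of what the Shorack citation encapsulates. However, there is a genuine gap in your first step, the bias control, and it is not a technicality: it is exactly the point where Assumption~\ref{asm:continuity} must do its work.

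The bias bound in Theorem~\ref{thm:GATEest} is a \emph{tail-probability} bound on the conditional bias, and no choice of $\epsilon_n$ can extract $\E(\hat\tau_k)-\tau_k = o(n^{-1/2})$ from it. Since $F(\hat c_k(s))$ is Beta$(nk/K,\, n-nk/K+1)$, its fluctuations around $k/K$ are of order exactly $n^{-1/2}$; so for $\epsilon_n \asymp n^{-1/2}$ the right-hand side of the bound is $\Theta(1)$, and for any slower $\epsilon_n$ you only learn that the conditional bias is $O_p(\epsilon_n)$ with $\epsilon_n \gg n^{-1/2}$. At best this yields unconditional bias $O(n^{-1/2})$, which is the same order as $\sqrt{\V(\hat\tau_k)}$ and hence not negligible. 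The improvement to $O(n^{-1})$ in the paper's Lemma~\ref{lem:mean} comes from \emph{signed cancellation}, not from concentration: the conditional bias is approximately linear in $F(\hat c_k(s))-k/K$ with coefficient equal to the CATE value at the threshold, the Beta distribution is nearly symmetric about $k/K$ (its mean is $nk/(K(n+1))$, within $O(n^{-1})$ of $k/K$), and continuity of the CATE at $p=k/K$ guarantees that upward and downward fluctuations carry the \emph{same} coefficient, so they cancel in expectation. Your argument never invokes this cancellation, and it cannot be repaired while relying on the tail bound alone: Proposition~\ref{prop:continuity_tight} exhibits a bounded but discontinuous CATE for which the incomplete-beta tail bound of Theorem~\ref{thm:GATEest} remains valid (with $\gamma_k(\epsilon)$ finite and positive), yet the bias is genuinely of order $n^{-1/2}$ and the standardized statistic fails to be centered. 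An argument that uses only the tail bound would "prove" the theorem in that counterexample as well, so it must be incomplete. Relatedly, your parenthetical that Assumption~\ref{asm:continuity} keeps $\gamma_k(\epsilon)$ "bounded away from zero" is off: $\gamma_k(\epsilon)$ is proportional to $\epsilon$ and vanishes with it; what bounded variation buys is finiteness of the denominator, which is available even in the discontinuous counterexample. A secondary, smaller issue: your appeal to "the law of large numbers" for the convergence of the conditional randomization variance glosses over the fact that $\hat f_k$ selects units by sample ranks, so one needs a strong law for rank-selected averages (the paper cites Theorem~4 of \citet{wellner1977glivenko} for precisely this step), not the plain i.i.d.\ law of large numbers.
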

Proof is given in Supplementary Appendix~\ref{app:GATEsamp}.  We
emphasize that Theorem~\ref{thm:GATEsamp} does not impose a strong
assumption about the properties of the ML algorithm used to generate
the scoring rule $s$.

In fact, the continuity of the CATE at the thresholds
(Assumption~\ref{asm:continuity}) is among the weakest assumptions
that can ensure the validity of Theorem~\ref{thm:GATEsamp}. To see
this, consider an alternative assumption that there exists a threshold
at which CATE is bounded but discontinuous, slightly relaxing
Assumption~\ref{asm:continuity}. The following proposition shows that
this assumption is not sufficient for Theorem~\ref{thm:GATEsamp}.
\begin{proposition}[Insufficiency of Bounded
	Variation]\label{prop:continuity_tight} \spacingset{1.25}
	Suppose Assumptions~\ref{asm:SUTVA}--\ref{asm:comrand} and
        \ref{asm:moments} hold.  Further assume that the there exists
        a threshold $k/K$, such that
        $\E(Y_i(1)-Y_i(0) \mid s(\bX_i) = F^{-1}(p))$, is
        discontinuous (but bounded) at $p=k/K$. Then, there exist a
        scoring rule $s$ and a population $\mathcal{P}$ such that as
        $n\to \infty$ with $0 < n_1/n < 1$ staying constant, we have,
	\begin{equation*}
		\E\l(	\frac{\hat{\tau}_k-\tau_k}{\sqrt{\V(\hat\tau_k)}}\r) \ \centernot\longrightarrow \  0.
	\end{equation*}
\end{proposition}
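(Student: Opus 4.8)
The plan is to prove the proposition by explicit construction: for the given threshold $k/K$ (necessarily with $1\le k\le K-1$), I will exhibit a scoring rule $s$ and a population $\mathcal{P}$ whose CATE jumps at $k/K$ and show that the standardized GATES estimator for group $k$ retains an asymptotic mean of order one. Concretely, take a scalar covariate $\bX_i \sim \mathrm{Uniform}(0,1)$ with the identity scoring rule $s(\bx)=\bx$, so that $F$ is the uniform distribution function and $F^{-1}(p)=p$. Let $Y_i(0)=\varepsilon_{i0}$ and $Y_i(1)=\Delta\,\mathbf{1}\{\bX_i \ge k/K\}+\varepsilon_{i1}$ with $\Delta\neq 0$ and $\varepsilon_{it}$ independent standard normal noise. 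Then the CATE as a function of the quantile level, $\mu(p)=\E(Y_i(1)-Y_i(0)\mid s(\bX_i)=F^{-1}(p))=\Delta\,\mathbf{1}\{p\ge k/K\}$, is bounded and discontinuous exactly at $p=k/K$, while Assumption~\ref{asm:moments} holds by construction.

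Second, I would reduce the claim to a bias computation. Because $\V(\hat\tau_k)$ is deterministic, the expectation in the conclusion equals the standardized bias, $\E\{(\hat\tau_k-\tau_k)/\sqrt{\V(\hat\tau_k)}\}=(\E(\hat\tau_k)-\tau_k)/\sqrt{\V(\hat\tau_k)}$. Conditioning first on the covariates (which fix the membership indicators $\hat f_k(\bX_i)$) and averaging over the randomization of $T$ via Assumption~\ref{asm:comrand} and over the potential outcomes via Assumption~\ref{asm:randomsample}, I would show $\E(\hat\tau_k\mid\bX_1,\ldots,\bX_n)=(K/n)\sum_{i=1}^n \tau(\bX_i)\hat f_k(\bX_i)$, i.e. the sample average of the CATE over the $n/K$ units in estimated group $k$. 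In the construction, estimated group $k$ is the set of order statistics with ranks in $((k-1)n/K,\,kn/K]$, so this average equals $\Delta M/(n/K)$, where $M$ counts those units exceeding the jump location $k/K$. Writing $N_<=\#\{i:\bX_i<k/K\}\sim\mathrm{Binomial}(n,k/K)$, an exact order-statistics identity gives $M=\min\{n/K,(kn/K-N_<)^+\}$, and since $\tau_k=K\int_{(k-1)/K}^{k/K}\mu(p)\,dp=0$ here, the bias is $\E(\hat\tau_k)-\tau_k=\Delta K\,\E[(k/K-N_</n)^+]+o(n^{-1/2})$, the remainder coming from the rarely binding truncation at $n/K$ (controlled by a Hoeffding bound on $N_<$).

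Third, I would evaluate this bias asymptotically. By the central limit theorem $\sqrt{n}(k/K-N_</n)\stackrel{d}{\longrightarrow}N(0,\sigma^2)$ with $\sigma^2=(k/K)(1-k/K)$, and since the rescaled sequence is bounded in $L^2$, its positive part is uniformly integrable, so $\sqrt{n}\,\E[(k/K-N_</n)^+]\to \sigma/\sqrt{2\pi}$. Hence the bias is $\sim \Delta K\sigma/\sqrt{2\pi n}$, of exact order $n^{-1/2}$. Combining this with Theorem~\ref{thm:GATEest}, under which $n\V(\hat\tau_k)$ converges to a strictly positive constant $v$ (the variance terms are nondegenerate because $\V(Y_i(t))>0$ and group $k$ has positive probability), yields
\begin{equation*}
\E\l(\frac{\hat\tau_k-\tau_k}{\sqrt{\V(\hat\tau_k)}}\r)\ \longrightarrow\ \frac{\Delta K\sigma}{\sqrt{2\pi v}}\ \neq\ 0,
\end{equation*}
which establishes the conclusion.

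The main obstacle I anticipate is the passage from distributional convergence of the binomial fluctuation to convergence of its truncated first moment, i.e.\ justifying $\sqrt{n}\,\E[(k/K-N_</n)^+]\to\sigma/\sqrt{2\pi}$ together with the negligibility of the truncation at $n/K$; this is where uniform integrability and a concentration bound for $N_<$ are essential, because the entire effect lives precisely at the $n^{-1/2}$ scale that matches the standard deviation. A secondary check is to confirm, through Theorem~\ref{thm:GATEest}, that $n\V(\hat\tau_k)$ stays bounded away from both zero and infinity for the chosen $\mathcal{P}$, so that dividing the order-$n^{-1/2}$ bias by the order-$n^{-1/2}$ standard deviation produces a finite nonzero limit rather than $0$ or $\infty$.
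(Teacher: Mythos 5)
Your proposal is correct and follows essentially the same route as the paper's own proof: both construct a piecewise-constant CATE with a single jump at the threshold $p=k/K$ and show that the fluctuation of the estimated cutoff produces a bias of exact order $n^{-1/2}$, which, divided by the order-$n^{-1/2}$ standard deviation, leaves a nonvanishing standardized mean. The only real difference is bookkeeping: you track the binomial count $N_<$ of units below the jump (with a uniform-integrability argument for the truncated mean), while the paper works directly with the Beta law of $F(\hat{c}_k(s))$ as an order statistic of uniforms; your write-up is in fact more explicit about the uniform integrability and the nondegeneracy of $n\V(\hat{\tau}_k)$, which the paper leaves implicit.
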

Proof is given in Supplementary Appendix
\ref{app:continuity_tight}. Proposition~\ref{prop:continuity_tight}
shows that if the CATE is mildly discontinuous at a threshold, then we
cannot sufficiently control the bias in estimating the boundary
points, $c_k(s)$.  Under this scenario, the bias decays at the rate of
$n^{-1/2}$, which is not fast enough for the application of the
central limit theorem.

\subsection{Nonparametric Test of Treatment Effect Heterogeneity}
\label{subsec:heterotest}

In many applications, heterogeneous treatment effects are imprecisely
estimated.  Researchers may wish to know whether the treatment effect
heterogeneity discovered by ML algorithms represents signal rather
than noise.  In addition, checking the statistical significance of
each GATES suffers from multiple testing problems.  To address these
challenges, we develop a nonparametric test of treatment effect
heterogeneity.  In particular, we consider the following null
hypothesis that all GATEs are equal to one another,
\begin{equation}
H_0: \ \tau_1 = \tau_2 = \cdots = \tau_K. \label{eq:H0}
\end{equation}
This null hypothesis is equivalent to $\tau_k = \tau$ for any $k$
where $\tau = \E(Y_i(1)-Y_i(0))$ represents the overall average
treatment effect (ATE).  Thus, we consider the following test
statistic,
\begin{equation*}
  \bm{\hat{\tau}} \ = \ (\hat\tau_1-\hat\tau,\cdots,\hat\tau_K-\hat\tau)^\top,
\end{equation*}
where
\begin{equation*}
  \hat \tau \ = \ \frac{1}{n_1} \sum_{i=1}^n Y_iT_i - \frac{1}{n_0} \sum_{i=1}^n Y_i(1-T_i).
\end{equation*}

To derive the asymptotic reference distribution of this test
statistic,

\cite{imai2019experimental} derive the bias bound and the exact
variance of this PAPE estimator.  Leveraging those results,
the following theorem shows that we can utilize a $\chi^2$
distribution as an asymptotic approximation to the reference
distribution when testing treatment effect heterogeneity.
\begin{theorem}[Nonparametric Test of Treatment Effect Heterogeneity]
  \label{thm:heterotest} \spacingset{1.25} Suppose
  Assumptions~\ref{asm:SUTVA}--\ref{asm:moments} hold. Under $H_0$
  defined in Equation~\eqref{eq:H0} and against the alternative $H_1: \mc{R}^K
  \setminus H_0$, as $n \to \infty$ with $0<n_1/n<1$
  stays constant, we have,
	\begin{equation*}
\bm{\hat{\tau}}^\top \bm{\Sigma}^{-1}\bm{\hat{\tau}} \
\stackrel{d}{\longrightarrow} \ \chi^2_K
	\end{equation*}
	where the entries of the covariance matrix $\bm{\Sigma}$ are
        defined as follows,
	\begin{align*}
\bm{\Sigma}_{kk} \  = & \
                      K^2\left[\frac{\E(S_{k1}^{*2})}{n_1} +
\frac{\E(S_{k0}^{*2})}{n_0} \right.\\ & \hspace{.5in} +\left. \frac{1}{K^3}\left\{(K-2)\left(\frac{n-K}{n-1}\kappa_{kk11}-\kappa_{k1}^2\right)-\frac{2n(K-1)}{(n-1)}\kappa_{kk01}+2\kappa_{k1}\kappa_{k0}\right\} \right],\\
\bm{\Sigma}_{kk^\prime} \  = & \
K^2\left\{\frac{\E(S_{kk^\prime 1}^{*2})}{n_1} +
\frac{\E(S_{kk^\prime 0}^{*2})}{n_0}  \right\}
                               +\frac{1}{K}\biggl\{(K-2)\left(\kappa_{kk'11}-\kappa_{k1}\kappa_{k'1}\right)
                               \\ & \hspace{.5in} \left. -\frac{Kn-n-1}{n-1}\left(\kappa_{kk'10}+\kappa_{kk'01}\right)+\kappa_{k1}\kappa_{k'0}+\kappa_{k0}\kappa_{k'1}\right\},
\end{align*}
for $k,k^\prime \in \{1,\cdots K\}$ and $k\neq k^\prime$ where
$S_{kt}^{*2} = \sum_{i=1}^n (Y_{ki}^\ast(t) -
\overline{Y_{k}^\ast(t)})^2/(n-1)$,
$S_{kk^\prime t}^{*2} = \sum_{i=1}^n (Y_{ki}^\ast(t) -
\overline{Y_{k}^\ast(t)})(Y_{k^\prime i}^\ast(t) -
\overline{Y_{k^\prime}^\ast(t)})/(n-1)$,
$\kappa_{kt} = \E(Y_i(1)-Y_i(0)\mid \hat{f}_k(\bX_i)=t)$, and
$\kappa_{kk'ts} = \E[(Y_i(1)-Y_i(0))(Y_j(1)-Y_j(0)) \mid
\hat{f}_k(\bX_i)=t,\hat{f}_{k'}(\bX_i)=s]$ for $i \ne j$ with
$Y_{ki}^\ast(t) = (\hat{f}_k(\bX_i)- 1/K)Y_i(t)$, and
$\overline{Y_{k}^\ast(t)} = \sum_{i=1}^n Y_{ki}^\ast(t)/n$, for
$t= 0,1$.
\end{theorem}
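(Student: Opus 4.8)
The plan is to reduce each coordinate of the test statistic to a centered difference-in-means estimator, establish joint asymptotic normality by the Cram\'er--Wold device using Theorem~\ref{thm:GATEsamp}, compute the limiting covariance and identify it with $\bm{\Sigma}$, and finally read off the chi-squared reference distribution from the Gaussian limit. The key algebraic observation that drives everything is that, because $\sum_{k=1}^K\hat{f}_k(\bX_i)=1$ for every unit, the overall estimator can be absorbed into each group estimator. Under $H_0$ we have $\tau_1=\cdots=\tau_K=\tau$, so each coordinate satisfies $\hat\tau_k-\hat\tau=(\hat\tau_k-\tau)-(\hat\tau-\tau)$, and this can be rewritten as
\begin{equation*}
\hat\tau_k-\hat\tau \ = \ \frac{K}{n_1}\sum_{i=1}^n Y_iT_i\l(\hat{f}_k(\bX_i)-\tfrac1K\r)-\frac{K}{n_0}\sum_{i=1}^n Y_i(1-T_i)\l(\hat{f}_k(\bX_i)-\tfrac1K\r).
\end{equation*}
Thus each coordinate is itself a GATES-type difference-in-means estimator, but with the centered weight $\hat{f}_k(\bX_i)-1/K$; this is precisely why the covariance entries in the theorem are built from $Y_{ki}(t)=(\hat{f}_k(\bX_i)-1/K)Y_i(t)$. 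This reduction lets me reuse the machinery behind Theorems~\ref{thm:GATEest} and~\ref{thm:GATEsamp} coordinate by coordinate, with the uncentered indicator replaced by its centered version.

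Next I would establish joint asymptotic normality of $\bm{\hat{\tau}}$ by the Cram\'er--Wold device. For any fixed $\bm{a}\in\mc{R}^K$, the linear combination $\bm{a}^\top\bm{\hat{\tau}}$ is again a difference-in-means estimator whose unit-level weight is $\sum_k a_k(\hat{f}_k(\bX_i)-1/K)$, so its asymptotic normality follows from the same finite-population central limit argument used to prove Theorem~\ref{thm:GATEsamp}: Assumption~\ref{asm:continuity} controls the bias introduced by estimating the cutoffs $\hat{c}_k(s)$ through the bias bound of Theorem~\ref{thm:GATEest}, while Assumption~\ref{asm:moments} supplies the Lyapunov-type moment control. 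The point to verify carefully is that this holds simultaneously in every direction $\bm{a}$, so that the estimated-cutoff bias is asymptotically negligible uniformly and does not perturb the Gaussian limit.

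In parallel I would compute the exact finite-sample covariance of $\bm{\hat{\tau}}$ under complete randomization, following the Neyman-style variance decomposition behind Theorem~\ref{thm:GATEest}. The diagonal entries $\bm{\Sigma}_{kk}$ arise as before but with centered weights, which explains the modified $\kappa$-correction relative to $\V(\hat\tau_k)$. The off-diagonal entries $\bm{\Sigma}_{kk'}$ require the covariance between two distinct centered group estimators, and the cross terms in $\kappa_{k1},\kappa_{k0},\kappa_{k'1}$ reflect the negative dependence $\Corr(\hat{f}_k(\bX_i),\hat{f}_{k'}(\bX_i))<0$ induced both by each unit landing in exactly one group and by the cutoffs being jointly estimated from the same sample. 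I expect this off-diagonal covariance computation to be the main obstacle, as it is the heaviest algebra and the place where the shared-cutoff coupling across groups must be tracked exactly.

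Finally, combining the joint normal limit with the covariance identification yields, after the usual $\sqrt{n}$ scaling, a joint Gaussian limit for $\bm{\hat{\tau}}$ with covariance $\bm{\Sigma}$ under $H_0$, whence the quadratic form $\bm{\hat{\tau}}^\top\bm{\Sigma}^{-1}\bm{\hat{\tau}}$ converges to a chi-squared law by the continuous mapping theorem. The conceptual subtlety to resolve here concerns the degrees of freedom: since $\sum_k(\hat\tau_k-\hat\tau)=0$ holds deterministically, $\bm{\hat{\tau}}$ lives in the sum-to-zero subspace and its limiting covariance is singular, of rank $K-1$, so the reference distribution must be read off by restricting to this subspace (equivalently, through a generalized inverse of $\bm{\Sigma}$). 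Determining this rank precisely, and hence the exact reference chi-squared distribution, is the delicate final step.
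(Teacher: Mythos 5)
Your central-limit machinery is the same as the paper's. The proof in Supplementary Appendix~\ref{app:heterotest} also proceeds by the Cram\'er--Wold device: for a fixed direction $\bt \in \mc{R}^K$ it writes $\sum_k t_k \hat\tau_k$ as a weighted sum of induced order statistics, splits it exactly as in the proof of Theorem~\ref{thm:GATEsamp} into a conditionally independent part given $\bX_n$ (Lindeberg CLT) plus an $L$-statistic part (Theorem~\ref{thm:ordstat_clt}), glues the two limits with Lemma~\ref{lem:conditional_converge}, and disposes of the estimated-cutoff bias at rate $O(n^{-1})$ via Lemma~\ref{lem:mean}. Your worry about uniformity over directions $\bm{a}$ is moot: Cram\'er--Wold only requires each fixed direction separately. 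One respect in which your plan is actually more complete than the paper's own proof: the appendix never derives the displayed entries of $\bm{\Sigma}$ at all (it simply sets $\bm{\Sigma}_n = \V(\hat{\bm{\tau}})$ for the \emph{uncentered} vector $(\hat\tau_1,\ldots,\hat\tau_K)$ and proves joint normality), so the off-diagonal Neyman algebra you identify as the main obstacle has no counterpart there to compare against.

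The real divergence is your final paragraph, and there your observation is sound mathematics that exposes a genuine discrepancy between the theorem's statement and its proof, rather than a defect in your own argument. Since $\sum_{k=1}^K \hat{f}_k(\bX_i)=1$ for every unit, one has $\sum_{k=1}^K \hat\tau_k = K\hat\tau$ identically, so $\bm{\hat{\tau}}=(\hat\tau_1-\hat\tau,\ldots,\hat\tau_K-\hat\tau)^\top$ lies in the hyperplane $\{\bm{x}: \bone^\top \bm{x}=0\}$ with probability one; any quadratic form in $\bm{\hat{\tau}}$ therefore converges to a quadratic form in a Gaussian vector of rank at most $K-1$ and can never have a $\chi^2_K$ limit. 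The paper's proof does not confront this, because what it establishes is $\bm{\Sigma}_n^{-1/2}(\hat{\bm{\tau}}-\bm{\tau}) \stackrel{d}{\longrightarrow} N(0,\bm{I})$ for the uncentered vector --- that is, a $\chi^2_K$ statement for the \emph{infeasible} statistic centered at the true $\tau$ --- and then tacitly identifies it with the feasible statistic centered at $\hat\tau = \bone^\top(\hat\tau_1,\ldots,\hat\tau_K)^\top/K$. Replacing $\tau$ by $\hat\tau$ projects onto $\bone^\perp$ and costs exactly one degree of freedom, the classical $\sum_k (X_k-\bar{X})^2$ versus $\sum_k(X_k-\mu)^2$ phenomenon. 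A corroborating check: in the $H_0$ limit where $\kappa_{k1}=\kappa_{k0}=\tau$, the displayed $\bm{\Sigma}$ gives $\bone^\top\bm{\Sigma}\bone = -K^2(K-1)\tau^2/(n-1)<0$ whenever $\tau \neq 0$, so it is not even positive semi-definite and cannot be the exact covariance of a vector supported on $\bone^\perp$. So carried to completion, your route proves a corrected claim --- a $\chi^2_{K-1}$ limit with $\bm{\Sigma}^{-1}$ replaced by a generalized inverse (equivalently, after deleting one coordinate), provided the covariance identification holds on $\bone^\perp$ and is nondegenerate there --- not the theorem as literally stated; the one thing you should make explicit rather than leave as ``delicate'' is that nondegeneracy condition pinning the rank at exactly $K-1$.
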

Proof is given in Supplementary Appendix~\ref{app:heterotest}. Similar to
Theorem~\ref{thm:GATEest}, there is an additional third term in the
variance beyond the two standard terms, induced by the fact that
$\hat{f}_k(\bX_i)$ is negatively correlated across units. In practice,
we replace the entries of $\bm{\Sigma}$ with their sample analogues,
which result in a consistent estimator $\bm{\widehat\Sigma}$.  By
Slutsky's Lemma, the asymptotic distribution is not affected by this
substitution.

\subsection{Nonparametric Test of Rank-Consistent Treatment Effect Heterogeneity}
\label{subsec:consistest}

To evaluate the quality of the scoring rule produced by an ML
algorithm, we can test whether or not the rank of estimated GATES is
consistent with that of the true GATES.  The relevant null hypothesis
is given by,
\begin{equation}
  H_0^\ast: \tau_1\leq \tau_2\leq\cdots \leq \tau_{K}. \label{eq:H0ast}
\end{equation}
Unlike the null hypothesis for treatment effect heterogeneity given in
Equation~\eqref{eq:H0}, this is a composite null hypothesis.

To characterize the sampling distribution under this null hypothesis
$H_0^\ast$, we consider the following optimization problem,
\begin{align*}
  \bm{\mu}^\ast (\bm{x}) \ = \ \;&\argmin_{\bm{\mu}} \|\bm{\mu}-\bm{x}\|_2^2 \quad
                            \text{subject to } \mu_1\leq \mu_2 \leq
                             \cdots \leq \mu_{K},
\end{align*}
where $\bm{\mu}=(\mu_1,\mu_2,\ldots,\mu_K)^\top$ and
$\bm{x} \in \mc{R}^K$.  If $\bm{x} \sim N(0,\bm{\Sigma})$, the following
test statistic has a mixture of appropriately weighted $\chi^2$
distribution with $K$ degrees of freedom, called chi-bar-squared
distribution \citep{shapiro1988towards},
\begin{equation*}
  (\bm{x}-\bm{\mu}^\ast(\bm{x}))^\top \bm{\Sigma}^{-1} (\bm{x}-\bm{\mu}^\ast(\bm{x})) \
  \sim \ \bar\chi_K^2.
\end{equation*}
Using this fact, the next theorem derives a nonparametric test of
rank-consistent treatment effect heterogeneity that is asymptotically
uniformly most powerful.
\begin{theorem} {\sc (Nonparametric Test of Rank-Consistent Treatment Effect
  Heterogeneity)} \label{thm:consistest} \spacingset{1.25}
  Suppose that Assumptions~\ref{asm:SUTVA}--\ref{asm:moments} hold.
  Then, as $n\to \infty$ and $0<n_1/n <1$ stays constant, an
  asymptotically uniformly most powerful test of size $\alpha$ for the
  null hypothesis $H_0^\ast$ defined in Equation~\eqref{eq:H0ast} against
  the alternative $H_1^\ast: \mc{R}^K\setminus H_0^\ast$ has the
  following critical region,
	\begin{equation*}
	\{\bm{\hat\tau} \in \mc{R}^K \mid
        \left(\bm{\hat\tau}-\bm{\mu}_0(\bm{\hat\tau})\right)^\top\bm{\Sigma}^{-1}\left(\bm{\hat\tau}-\bm{\mu}_0(\bm{\hat\tau})\right)>C_\alpha\},
	\end{equation*}
	for some constant $C_\alpha$ that only depends on $\alpha$.
        The expression of $\bm{\Sigma}$ is given in
        Theorem~\ref{thm:heterotest}. Under $H_0^\ast$ and as
        $n\to \infty$, we have,
	\begin{equation*}
          \left(\bm{\hat\tau}-\bm{\mu}^\ast(\bm{\hat\tau})\right)^\top \bm{\Sigma}^{-1}\left(\bm{\hat\tau}-\bm{\mu}^\ast(\bm{\hat\tau})\right) \stackrel{d}{\longrightarrow} \bar\chi^2_K.
	\end{equation*}
\end{theorem}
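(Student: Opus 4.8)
The plan is to pass to a Gaussian shift limit experiment and then invoke the classical theory of order-restricted inference. Write $\bm{\theta} = (\tau_1-\tau,\ldots,\tau_K-\tau)^\top$. Because the $K$ groups have equal mass $1/K$, we have $\tau = \frac{1}{K}\sum_k \tau_k$, so monotonicity of the $\tau_k$ is equivalent to monotonicity of the $\theta_k$; hence the null $H_0^\ast$ is exactly $\bm{\theta}\in\mathcal{C}$, where $\mathcal{C}=\{\bm{\mu}\in\mathbb{R}^K:\mu_1\le\cdots\le\mu_K\}$ is the monotone cone. Note also that both $\bm{\hat\tau}$ and every feasible $\bm{\theta}$ lie in the subspace $\mathbf{1}^{\perp}$, since $\sum_k(\hat\tau_k-\hat\tau)=0$ by the telescoping identity $\sum_k\hat f_k(\bX_i)=1$; the limiting Gaussian will therefore be supported on $\mathbf{1}^{\perp}$ and $\bm{\Sigma}$ should be read with its inverse taken on that subspace.

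First I would establish the joint convergence $\bm{\Sigma}^{-1/2}(\bm{\hat\tau}-\bm{\theta})\stackrel{d}{\longrightarrow}N(\bm{0},\bm{I})$. The one-dimensional marginals are Theorem~\ref{thm:GATEsamp}, and the joint statement follows from the Cram\'er--Wold device together with the exact covariance computation underlying Theorem~\ref{thm:heterotest}, with Assumption~\ref{asm:continuity} again controlling the cutoff-estimation bias at the $n^{-1/2}$ scale. It is cleanest to whiten: set $\bm{W}=\bm{\Sigma}^{-1/2}\bm{\hat\tau}$, so that $\bm{W}$ is asymptotically $N(\bm{\Sigma}^{-1/2}\bm{\theta},\bm{I})$, the cone $\mathcal{C}$ maps to the closed convex cone $\widetilde{\mathcal{C}}=\bm{\Sigma}^{-1/2}\mathcal{C}$, and the test statistic $(\bm{\hat\tau}-\bm{\mu}^\ast(\bm{\hat\tau}))^\top\bm{\Sigma}^{-1}(\bm{\hat\tau}-\bm{\mu}^\ast(\bm{\hat\tau}))$ equals the squared Euclidean distance from $\bm{W}$ to $\widetilde{\mathcal{C}}$ (the projection $\bm{\mu}^\ast$ is understood, as in \citet{shapiro1988towards}, in the $\bm{\Sigma}^{-1}$ metric). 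Since the distance-to-a-convex-set map is continuous, the continuous mapping theorem transfers the limiting law of the statistic to that of the same functional of the Gaussian limit. At the complete-homogeneity point $\bm{\theta}=\bm{0}\in\mathcal{C}$ we get $\bm{W}\stackrel{d}{\longrightarrow}N(\bm{0},\bm{I})$, and the squared distance of a standard Gaussian to $\widetilde{\mathcal{C}}$ is chi-bar-squared by \citet{shapiro1988towards}, giving the second display of the theorem under the sub-null $\bm{\theta}=\bm{0}$.

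Second, to control size over the composite null I would show that $\bm{\theta}=\bm{0}$ is least favorable, i.e.\ $\sup_{\bm{\theta}\in\mathcal{C}}\Pr_{\bm{\theta}}(\text{reject})=\Pr_{\bm{0}}(\text{reject})$ in the limit experiment. Two facts drive this: the residual functional is invariant under translation along the lineality direction $\{c\mathbf{1}\}\subset\mathcal{C}$, so only the component of $\bm{\theta}$ in $\mathbf{1}^{\perp}$ matters; and the squared distance to $\mathcal{C}$ under the Gaussian shift is stochastically nonincreasing as $\bm{\theta}$ moves from the boundary into the interior of $\mathcal{C}$. Together these reduce the supremum to the apex of $\mathcal{C}\cap\mathbf{1}^{\perp}$, namely $\bm{\theta}=\bm{0}$, so taking $C_\alpha$ to be the upper-$\alpha$ quantile of $\bar\chi^2_K$ yields a test of asymptotic size $\alpha$; this also identifies the $\bm{\mu}_0$ appearing in the critical region with the cone projection evaluated at the least-favorable configuration.

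Finally, for the asymptotically uniformly most powerful claim I would argue entirely within the limit experiment, which is legitimate because the GATES problem is locally asymptotically normal, so by Le Cam's theory the power functions of the localized tests converge to those of the corresponding tests for $N(\bm{\theta},\bm{\Sigma})$. In that Gaussian experiment, testing $\bm{\theta}\in\mathcal{C}$ against its complement reduces, under the group of lineality translations together with the orthogonal symmetries preserving $\widetilde{\mathcal{C}}$, to a maximal invariant whose family carries a monotone likelihood ratio in the distance-to-cone statistic; the Neyman--Pearson lemma then makes the $\bar\chi^2$ test uniformly most powerful invariant, and a Hunt--Stein/completeness argument upgrades optimality within the relevant class. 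I expect this last step to be the main obstacle: one must pin down precisely the invariance class in which the optimality statement holds, verify the monotone-likelihood-ratio reduction for the polyhedral cone $\widetilde{\mathcal{C}}$, and then carefully transport the finite-sample optimality in the limit experiment back to an asymptotic optimality statement for the original sequence of tests through the convergence-of-experiments machinery.
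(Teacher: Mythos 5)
Your proposal follows essentially the same route as the paper's own proof: both pass to the Gaussian limit experiment (via the joint asymptotic normality of $\bm{\Sigma}^{-1/2}\bm{\hat\tau}$ established in the proof of Theorem~\ref{thm:heterotest}), take as test statistic the squared $\bm{\Sigma}^{-1}$-distance from $\bm{\hat\tau}$ to the monotone cone, reduce the composite null to the least favorable configuration $\tau_1-\tau=\cdots=\tau_K-\tau=0$ by translation invariance along the diagonal, and invoke the chi-bar-squared characterization of \citet{shapiro1988towards}. The one step where you go beyond the paper --- the Le Cam/invariance/Hunt--Stein program for the ``asymptotically uniformly most powerful'' claim, which you flag as the main obstacle --- is treated in the paper only by an informal likelihood-ratio display together with a solid-angle argument for the size, so your concern identifies a looseness shared by the paper's own argument rather than a divergence from its approach.
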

Proof is given in Supplementary Appendix~\ref{app:consistest}. In
practice, we use Monte Carlo simulations to approximately compute the
critical values.

While our test is the asymptotically most powerful test of its type,
it is likely to be conservative as we control the critical value based
on the worst-case scenario among all the distributions consistent with
the null hypothesis. In the literature on statistical tests of moment
inequalities, scholars have developed subsampling and moment selection
techniques that can improve their statistical power \citep[see
e.g.,][]{andrews2009validity,andrews2010inference,canay2010inference,chernozhukov2019inference}.
\cite{canay2023user} provides an up-to-date review.

\section{Generalization to Cross-Fitting}
\label{sec:cross_fitting}

In this section, we generalize our methodology to
\emph{cross-fitting}, in which researchers use the same experimental
data to first generate the scoring rule using an ML algorithm and then
estimate the GATES based on the resulting scoring rule.  In comparison
with {\it sample splitting} discussed in
Section~\ref{sec:sample_splitting} where they are done on separate
samples, cross-fitting could potentially be much more efficient.  The
key challenge, however, is the incorporation of additional uncertainty
due to the random splitting of the data.  We show how to overcome this
under the Neyman's repeated sampling framework.

\subsection{Estimation and Inference}
\label{subsec:GATEcv}

Under cross-fitting, we randomly divide the experimental data into
$L \ge 2$ folds of equal size $m=n/L$ where for the sake of simplicity
we assume $n$ is a multiple of $L$, and each fold contains $m_1$
treated units with $m_0$ control units, i.e, $m=m_0+m_1$.  We maintain
Assumptions~\ref{asm:SUTVA}--\ref{asm:comrand} introduced in
Section~\ref{subsec:randexp}.  Then, for each $\ell = 1,2,\ldots, L$,
we use the $\ell$th fold as a validation dataset
$\cZ_\ell=\{\bX_i^{(\ell)}, T_i^{(\ell)}, Y_i^{(\ell)}\}_{i=1}^{m}$ to
conduct statistical tests and estimate the GATES.  We use the remaining
folds,
$\cZ_{-\ell}=\{\bX_i^{(-\ell)}, T_i^{(-\ell)},
Y_i^{(-\ell)}\}_{i=1}^{n-m}$, as the training dataset to estimate the
scoring rule with an ML algorithm.

Suppose that we define a generic ML algorithm as a deterministic map
from the space of training data
$\boldsymbol{\mathcal{Z}}_{\text{train}}$ to the space of scoring
rules $\mathcal{S}$:
\begin{equation*}
	F: \boldsymbol{\mathcal{Z}}_{\text{train}} \to \mathcal{S}.
\end{equation*}
Then, for a given training data set $\cZ_{\text{train}}$ of size $n-m$, the estimated scoring rule is
given by,
\begin{equation}
  \hat{s}_{\cZ_{\text{train}}^{n-m}} \ = \ F(\cZ_{\text{train}}^{n-m}). \label{eq:MLtr}
\end{equation}

We now generalize the definition of the GATES to the cross-fitting case,
\begin{equation}
	\tau_{k}(F, n-m) \ = \ \mc{E}[\mc{E}\{Y_i(1)-Y_i(0)\mid c_{k-1}(\hat{s}_{\cZ_{\text{train}}^{n-m}})\leq \hat{s}_{\cZ_{\text{train}}^{n-m}}(\bX_i) \leq c_{k}(\hat{s}_{\cZ_{\text{train}}^{n-m}})\}]\label{eq:gCATEtr},
\end{equation}
where the inner expectation is taken over the distribution of
$\{\bX_i, Y_i(0), Y_i(1)\}$ among the units who belong to the $k$th
group, and the outer expectation is taken over all possible training
sets of size $n-m$ from $\cZ_{\text{train}}^{n-m}$ the population
$\mathcal{P}$.

This generalized GATES is not a function of fixed scoring rule. Rather,
it is a function of ML algorithm $F$ itself (as well as the sample
size of training data, $n-m$). Intuitively, it represents the average
of GATES based on all observations that score between
$(k-1)/K \times 100$th percentile and $k/K \times 100$th percentile
under the ML algorithm $F$ across all possible training datasets of
size $n-m$. Alternatively, the cross-fitted GATE can be seen as a
weighted average of GATEs that are specific to scoring rules where
weights are determined by the training data and the particular ML
algorithm.

We describe estimation and inference for $\tau_k(F,n-m)$.  For each
fold $\ell$, we first estimate a scoring rule $s$ by applying an ML
algorithm $F$ to the training data $\cZ_{-\ell}$,
\begin{equation}
	\hat{s}_{\ell} \ = \ F(\cZ_{-\ell}). \label{eq:MLcv}
\end{equation}
We then estimate the GATES based on the validation data $\cZ_\ell$,
using the following estimator that is analogous to the one defined in
Equation~\eqref{eq:gCATEest},
\begin{eqnarray*}
  & & \hat \tau_{k}^\ell(F, n-m) \\
  & = & K\left[\frac{1}{m_1} \sum_{i=1}^m Y_i^{(\ell)} T_i^{(\ell)} \hat{f}^\ell_k(\bX_i^{(\ell)}) +
	\frac{1}{m_0} \sum_{i=1}^m
                                         Y_i^{(\ell)}(1-T_i^{(\ell)})\left\{1-\hat{f}^\ell_k(\bX_i^{(\ell)})\right\} - \frac{1}{m_0} \sum_{i=1}^m Y_i^{(\ell)}(1-T_i^{(\ell)})\right],
\end{eqnarray*}
where
$\hat{f}^\ell_k(\bX_i^{(\ell)}) =
\mathbf{1}\{\hat{s}_\ell(\bX_i^{(\ell)}) \geq
\hat{c}^\ell_{k-1}(\hat{s}_\ell)\} -
\mathbf{1}\{\hat{s}_\ell(\bX_i^{(\ell)}) \geq
\hat{c}^\ell_{k}(\hat{s}_\ell)\}$, and
$\hat{c}^\ell_{k}(\hat{s}_\ell) = \inf \{c \in \mathbb{R}:
\sum_{i=1}^m \mathbf{1}\{\hat{s}_\ell(\bX_i^{(\ell)})>c\} \le mk/K\}$
represents the estimated cutoff in the $\ell$th subsample.  Repeating
this for each fold and averaging the results gives us the final GATES
estimator,
\begin{equation}
	\hat \tau_k (F, n-m)\ = \  \frac{1}{L} \sum_{\ell=1}^L \hat \tau_{k}^\ell \label{eq:gCATEestcv}
\end{equation}
for $k = 1,2,\ldots,K$. Algorithm \ref{alg:cross_validation}
summarizes this estimation procedure.

\begin{algorithm}[!t]
	\spacingset{1}
	\hspace*{\algorithmicindent} \textbf{Input}: Data $\cZ=\{\bX_i, T_i,
	Y_i\}_{i=1}^n$, Machine learning algorithm $F$, Estimator
	$\hat\tau_k$, Number of folds $L$ \\
	\hspace*{\algorithmicindent} \textbf{Output}: Estimated GATES $\{\hat\tau_{k}(F, n-m)\}_{k=1}^K$
	\begin{algorithmic}[1]
		\State Split the data $\cZ$ into $L$ random subsets of equal size $\{\cZ_1,\cdots,\cZ_L\}$
		\State Set $m \gets n/L$ and $\ell\gets 1$
		\While{$\ell\leq L$}
		\State $\cZ_{-\ell}=\{\cZ_1, \cdots,
                \cZ_{\ell-1},\cZ_{\ell+1}, \cdots, \cZ_{L}\}$
                \Comment{Create the training dataset}
		\State $\hat{s}_{-\ell} \ = \ F(\cZ_{-\ell})$
                \Comment{Estimate the scoring rule $s$
			by applying $F$ to $\cZ_{-\ell}$}
		\State $ \hat{\tau}_k^\ell =
                \hat{\tau}_{k}(\cZ_\ell)$ for each $k=1,2,\ldots,K$
                \Comment{Calculate the GATES estimator using $\cZ_\ell$}
		\State $\ell \gets \ell + 1$
		\EndWhile
		\State \textbf{return} $\hat{\tau}_{k}(F, n-m) =
                \frac{1}{L} \sum_{\ell=1}^L \hat{\tau}_k^\ell$ for
                each $k=1,2,\ldots,K$
	\end{algorithmic}
	\caption{Estimation of the Sorted Group Average Treatment Effects
          (GATES) under Cross-fitting}
	\label{alg:cross_validation}
\end{algorithm}

We extend our bias and variance results under sample splitting
(Theorem~\ref{thm:GATEest}) to the cross-fitting case by incorporating
the additional randomness induced by the cross-fitting procedure.
\begin{theorem} {\sc (Bias Bound and Exact Variance of the GATES
    Estimator under Cross-fitting)} \label{thm:GATEestcv}
  \spacingset{1.25} Under
  Assumptions~\ref{asm:SUTVA}--\ref{asm:comrand}, the bias of the
  proposed GATES estimator given in Equation~\eqref{eq:gCATEestcv} can
  be bounded as follows,
	\begin{eqnarray*}
		& &
                    \E\l[\mc{P}\l(\l|\E\{\hat
                    \tau_{k}(F,n-m) -\tau_{k}(F,n-m)\mid
                    \hat{c}_{k}(\hat{s}_{\cZ_{\text{train}}^{n-m}}),\hat{c}_{k-1}(\hat{s}_{\cZ_{\text{train}}^{n-m}})
                    \} \r | \geq
		\epsilon\ \Bigl | \ \cZ_{\text{train}}^{n-m}\r)\r]  \\
			& \leq &   1-B\l(\frac{k}{K}+\gamma_{k}(\epsilon), \frac{nk}{K}
		, n- \frac{nk}{K} +1\r)+B\l(\frac{k}{K}-\gamma_{k}(\epsilon),
		\frac{nk}{K}
		, n-\frac{nk}{K} +1\r)\\& & -B\l(\frac{k-1}{K}+\gamma_{k-1}(\epsilon), \frac{n(k-1)}{K}
		, n-\ \frac{n(k-1)}{K} +1\r) \\ & & +B\l(\frac{k-1}{K}-\gamma_{k-1}(\epsilon),
		\frac{n(k-1)}{K}
		,  n-\frac{n(k-1)}{K} +1\r),
	\end{eqnarray*}
	for any given constant $\epsilon > 0$
        where $B(\epsilon, \alpha, \beta)$ is the incomplete beta
        function (if $\alpha \leq 0$ and $\beta > 0$, we set
        $B(\epsilon,\alpha, \beta):=H(\epsilon)$ for all $\epsilon$
        where $H(\epsilon)$ is the Heaviside step function), and
	\begin{equation*}
		\gamma_{k}(\epsilon)\ = \ \frac{\epsilon}{K\E\{\max_{c  \in
				[ c_{k}(\hat{s}_{\cZ_\text{train}^{n-m}}(\bX_i)) -\epsilon,\   c_{k}(\hat{s}_{\cZ_\text{train}^{n-m}}(\bX_i)) +\epsilon]} \E(Y_i(1)-Y_i(0)\mid \hat{s}_{\cZ_\text{train}^{n-m}}(\bX_i)=c)\}}.
	\end{equation*}
	The variance of the estimator is given by,
	\begin{align*}
          & \V(\hat \tau_{k}(F,n-m)) \\
          =
          & \ K^2\left(\frac{\E(S_{Fk1}^2)}{m_1} +
            \frac{\E(S_{Fk0}^2)}{m_0}\right)
            +\frac{(n-K)\E_\ell(\kappa^\ell_{k11})}{n-1}
            - \E_\ell[(\kappa^\ell_{k1})^2]+\V\l(\kappa_{k1}^\ell\r)
            -\frac{L-1}{L} \E(S^2_{Fk}),
	\end{align*}
	where
        $S_{Fkt}^2 = \sum_{i=1}^m (Y^\ell_{ki}(t) -
        \overline{Y^\ell_{k}(t)})^2/(m-1)$,
        $S_{Fk}^2 = \sum_{\ell=1}^L (\hat{\tau}^{(\ell)}_k -
        \hat{\tau}_{k}(F,n-m))^2/(L-1)$,
        $\kappa_{kt}^\ell = \E(Y_i(1)-Y_i(0)\mid
        \hat{f}^\ell_k(\bX_i)=t)$, and
        $\kappa_{ktt}^\ell = \E[(Y_i(1)-Y_i(0))(Y_j(1)-Y_j(0))\mid
        \hat{f}^\ell_k(\bX_i)=\hat{f}^\ell_k(\bX_j)=t]$ for $i \ne j$
        with
        $Y^\ell_{ki}(t) =
        \hat{f}^\ell_k(\bX_i^{(\ell)})Y_i^{(\ell)}(t)$, and
        $\overline{Y^\ell_{k}(t)} = \sum_{i=1}^m Y_{ki}^\ell(t)/n$,
        for $t= 0,1$.
\end{theorem}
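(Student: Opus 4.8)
The plan is to condition on the training data, reduce each claim to the sample-splitting results of Theorem~\ref{thm:GATEest}, and then integrate over the randomness of the training sets. The structural fact that makes this work is that the $L$ fold-specific estimators $\hat\tau_k^1,\ldots,\hat\tau_k^L$ are exchangeable: by Assumption~\ref{asm:randomsample} and the symmetric fold construction, permuting the fold labels leaves their joint law unchanged. Moreover, conditional on a training fold $\cZ_{-\ell}$ the scoring rule $\hat s_\ell=F(\cZ_{-\ell})$ is a fixed map, so each $\hat\tau_k^\ell$ is exactly a fixed-scoring-rule GATES estimator on a validation sample of size $m$, and the conditional bias and variance of Theorem~\ref{thm:GATEest} apply verbatim at the per-fold level with $n$ replaced by $m$.

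For the bias bound I would use linearity to write the conditional bias of $\hat\tau_k(F,n-m)$ as the average over $\ell$ of the per-fold conditional biases, and then invoke, conditional on $\cZ_{\text{train}}^{n-m}$, the order-statistic argument underlying Theorem~\ref{thm:GATEest}: the per-fold conditional bias is controlled by the deviation of the estimated cutoff $\hat c_k(\hat s)$ from the population cutoff $c_k(\hat s)$, and the probability that this deviation drives the induced quantile outside the band $[\,k/K-\gamma_k(\epsilon),\ k/K+\gamma_k(\epsilon)\,]$ is precisely the incomplete-beta tail in the statement. Integrating over $\cZ_{\text{train}}^{n-m}$ yields the outer expectation, and since the scoring rule is now random the sensitivity constant in $\gamma_k$ must be taken in expectation, producing the $\E\{\max_c \E(Y_i(1)-Y_i(0)\mid \hat s(\bX_i)=c)\}$ in the denominator. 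The delicate point here is that, although each fold estimates its cutoffs from only $m$ units, the averaging of the $L$ fold contributions distributes the cutoff-estimation error across all $n$ units, which is what allows the full sample size $n$ (rather than the per-fold size $m$) to appear in the incomplete-beta arguments.

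For the variance I would start from the exchangeability identity for the mean of correlated, identically distributed summands,
\[
\V(\hat\tau_k(F,n-m)) \ = \ \V(\hat\tau_k^\ell) - \frac{L-1}{L}\,\E(S_{Fk}^2),
\]
which holds because $\Cov(\hat\tau_k^\ell,\hat\tau_k^{\ell'})$ is constant across pairs and can be rewritten through the expected sample variance $\E(S_{Fk}^2)$ of the fold estimators; this is exactly the final term. It then remains to evaluate the single-fold variance $\V(\hat\tau_k^\ell)$ by the law of total variance, conditioning on $\cZ_{-\ell}$. The term $\E\{\V(\hat\tau_k^\ell\mid\cZ_{-\ell})\}$ follows by applying the exact variance of Theorem~\ref{thm:GATEest} at validation size $m$ to the fixed scoring rule $\hat s_\ell$ and averaging over training sets, giving $K^2[\,\E(S_{Fk1}^2)/m_1+\E(S_{Fk0}^2)/m_0-(K-1)\E\{(\kappa_{k1}^\ell)^2\}/\{K^2(m-1)\}\,]$, while the term $\V\{\E(\hat\tau_k^\ell\mid\cZ_{-\ell})\}$ measures how the fold-specific group average treatment effect varies across training sets and supplies the remaining $\V(\kappa_{k1}^\ell)$. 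Reassembling these pieces and subtracting the $\frac{L-1}{L}\E(S_{Fk}^2)$ correction reproduces the stated formula.

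I expect the main obstacle to be the exact (rather than asymptotic) bookkeeping in the variance derivation. Two steps in particular require care: establishing the covariance-to-$\E(S_{Fk}^2)$ identity rigorously, since the fold estimators are exchangeable but genuinely dependent through their shared data; and identifying $\E(\hat\tau_k^\ell\mid\cZ_{-\ell})$ with the fold-specific group average treatment effect while respecting that the within-fold cutoffs $\hat c_k^\ell$ are themselves estimated from the validation data, so that $\kappa_{kt}^\ell$ is defined through the estimated grouping $\hat f_k^\ell$ rather than any population partition. Showing that the law of total variance closes with no residual cross terms is the crux of the argument. In the bias proof the analogous delicate point, noted above, is the reconciliation of the per-fold scale $m$ with the aggregate scale $n$ in the incomplete-beta bound; by contrast, the conditional reduction to Theorem~\ref{thm:GATEest} itself is relatively direct.
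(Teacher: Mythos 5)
Your overall strategy coincides with the paper's own proof. For the variance, the paper also starts from the Nadeau--Bengio identity (its Lemma~\ref{lemma:cvcov}, adapted from \cite{nadeau2000inference}), which yields exactly your identity $\V(\hat\tau_k(F,n-m)) = \V(\hat\tau_k^\ell) - \frac{L-1}{L}\E(S_{Fk}^2)$, and then reduces the single-fold variance to the Theorem~\ref{thm:GATEest} computation by conditioning on the training data. The only real difference is where the conditioning enters: the paper re-runs the Theorem~\ref{thm:GATEest} derivation and applies the law of total covariance to the cross-unit term $\Cov(Y_{ki}^\ell(1)-Y_{ki}^\ell(0),\, Y_{kj}^\ell(1)-Y_{kj}^\ell(0))$, splitting it into an $\E_\ell$ part (recovering the sample-splitting term $-\frac{K-1}{K^2(m-1)}\E\{(\kappa_{k1}^\ell)^2\}$) plus $\V_\ell(\kappa_{k1}^\ell)/K^2$, whereas you apply the law of total variance to $\hat\tau_k^\ell$ itself, using $\E(\hat\tau_k^\ell \mid \cZ_{-\ell}) = \kappa_{k1}^\ell$ (which is exact, because exactly $m/K$ validation units fall into group $k$, so no residual bias term survives in that conditional mean). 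Both routes give the stated formula, and your identification of the two delicate points --- the covariance-to-$\E(S_{Fk}^2)$ identity and the fact that $\kappa_{kt}^\ell$ is defined through the estimated grouping $\hat f_k^\ell$ --- is exactly where the content lies.

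The one step that is not a proof is your resolution of the $m$-versus-$n$ discrepancy in the bias bound. You claim that averaging the $L$ fold contributions ``distributes the cutoff-estimation error across all $n$ units,'' which is what would let the incomplete-beta arguments carry $nk/K$ and $n-nk/K+1$. That mechanism is not valid: each fold's cutoffs $\hat c_k^\ell$ are order statistics of only $m$ validation scores, so the conditional, per-fold reduction to Theorem~\ref{thm:GATEest} produces Beta$(mk/K,\, m-mk/K+1)$ tails, and averaging $L$ such probability bounds on conditional biases cannot improve the shape parameters --- tail bounds on biases do not aggregate the way sums of independent errors do (and the fold estimators are not even independent, since they share training data). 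The paper itself offers no help here: its bias derivation is omitted as ``essentially identical'' to Appendix~\ref{app:GATEest}, and that derivation, carried out conditionally on $\cZ_{\text{train}}^{n-m}$, would naturally produce $m$ where the theorem statement displays $n$ (a similar commutation issue affects placing the expectation over training sets inside the denominator of $\gamma_k(\epsilon)$). So your conditional-reduction plan is the right one and matches the paper, but the specific argument you offer for recovering the full sample size $n$ in the beta parameters should be deleted or replaced; as written it asserts something false rather than closing the gap.
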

Proof is given in Supplementary Appendix~\ref{app:GATEestcv}. When
compared to Theorem~\ref{thm:GATEest}, although the bias bound is of a
similar form, the variance expression implies two additional terms.
The first additional term, $\V\l(\kappa_{k1}^\ell\r)$, accounts for
the variation across training data sets.  The second negative term,
$-(L-1)\E(S^2_{Fk})/L$, represents the efficiency gain of the
cross-fitting procedure.  As expected, when $L=1$, the expression
reduces to the sample splitting case (see Theorem~\ref{thm:GATEest}).

The estimation of $\E(S_{Fkt}^2)$, $\E\{(\kappa_{kt}^\ell)^2\}$,
$\E\{(\kappa_{ktt}^\ell)\}$ and $\V(\kappa_{kt}^\ell)$ is
straightforward and based on their sample analogues:
\begin{align*}
\widehat{\E(S_{Fkt}^2)}  \ &= \ \frac{1}{(m-1)L} \sum_{\ell=1}^L \sum_{i=1}^m \mathbf{1}\{T_i^{(\ell)} = t\} (Y_{ki}^\ell-
\overline{Y_{kt}^\ell})^2,  \nonumber\\
\widehat{\E\l\{(\kappa_{kt}^\ell)^2\r\}}  \ &= \ \frac{1}{L} \sum_{\ell=1}^L \left(\hat{\kappa}^\ell_{kt}\right)^2,\quad
\widehat{\V(\kappa_{kt}^\ell)}  \ = \ \frac{1}{L-1} \sum_{\ell=1}^L (\hat{\kappa}^\ell_{kt}-\overline{\hat{\kappa}^\ell_{kt}})^2,  \label{eq:kappacvest}
\end{align*}
where $Y_{ki}^\ell= \hat{f}_k^\ell(\bX_i)Y_i^{(\ell)}$,
$\overline{Y_{kt}^\ell} = \sum_{i=1}^m \mathbf{1}\{T_i = t\}
Y_{ki}^{(\ell)}/m$,
$\overline{\hat{\kappa}^\ell_{kt}}= \sum_{\ell=1}^L
\hat{\kappa}^\ell_{kt}/L$ and
\begin{align*}
  \hat{\kappa}^\ell_{kt} = & \frac{\sum_{i=1}^m
	\mathbf{1}\{\hat{f}_k^\ell(\bX_i^{(\ell)}) = t\}  T_i^{(\ell)} Y_i^{(\ell)}}{\sum_{i=1}^m
	\mathbf{1}\{\hat{f}_k^\ell(\bX_i^{(\ell)}) = t\}  T_i^{(\ell)} }  - \frac{\sum_{i=1}^m
	\mathbf{1}\{\hat{f}_k^\ell(\bX_i^{(\ell)}) = t\}  (1-T_i^{(\ell)}) Y_i^{(\ell)}}{\sum_{i=1}^m \mathbf{1}\{\hat{f}_k^\ell(\bX_i^{(\ell)}) = t\}  (1-T_i^{(\ell)})},\\
\hat{\kappa}^\ell_{ktt} = &   \frac{[\sum_{i=1}^m   \mathbf{1}\{\hat{f}^\ell_k(\bX_i^{(\ell)}) =  t\}T_i^{(\ell)}Y_i^{(\ell)}]^2-\sum_{i=1}^m   \mathbf{1}\{\hat{f}^\ell_k(\bX_i^{(\ell)}) =  t\}T_i^{(\ell)}(Y_i^{(\ell)})^2}{[\sum_{i=1}^m   \mathbf{1}\{\hat{f}^\ell_k(\bX_i^{(\ell)}) =  t\}T_i^{(\ell)}]^2-\sum_{i=1}^m   \mathbf{1}\{\hat{f}^\ell_k(\bX_i^{(\ell)}) =  t\}T_i} \nonumber\\ &- \frac{[\sum_{i=1}^m   \mathbf{1}\{\hat{f}^\ell_k(\bX_i^{(\ell)}) =  t\}(1-T_i^{(\ell)})Y_i^{(\ell)}]^2-\sum_{i=1}^m   \mathbf{1}\{\hat{f}^\ell_k(\bX_i^{(\ell)}) =  t\}(1-T_i^{(\ell)})(Y^{(\ell)}_i)^2}{[\sum_{i=1}^m   \mathbf{1}\{\hat{f}^\ell_k(\bX_i^{(\ell)}) =  t\}(1-T_i^{(\ell)})]^2-\sum_{i=1}^m   \mathbf{1}\{\hat{f}^\ell_k(\bX_i^{(\ell)}) =  t\}(1-T_i^{(\ell)})}, \nonumber
\end{align*}

In contrast, the estimation of $\E(S^2_{Fk})$ requires care.  In
particular, although it is tempting to estimate $\E(S^2_{Fk})$ using a
realization of $S^2_{Fk}$, this estimate is highly variable especially
when $L$ is small.  As a result, it often yields a negative overall
variance estimate.  We address this problem by applying Lemma 1 from \cite{nadeau2000inference} to
$\hat \tau_{k}(F,n-m)$, which gives,
\begin{equation*}
\V(\hat \tau_{k}(F,n-m)) \geq \ \E(S^2_{Fk}).
\end{equation*}
Since Theorem~\ref{thm:GATEestcv} implies:
\begin{equation*}
	\V(\hat \tau_{k}(F,n-m)) \leq \ K^2\left(\frac{\E(S_{Fk1}^2)}{m_1} +
	\frac{\E(S_{Fk0}^2)}{m_0}\right)+\frac{(n-K)\E_\ell[\kappa^\ell_{k11}]}{n-1} -	\E_\ell[(\kappa^\ell_{k1})^2]+\V\l(\kappa_{k1}^\ell\r),
\end{equation*}
this inequality suggests the following estimator of
$\E(S^2_{Fk})$,
\begin{equation}
\widehat{\E(S^2_{Fk})} \ = \ \min\l(S^2_{Fk}, K^2\left(\frac{\widehat{\E(S_{Fk1}^2)}}{m_1} +
\frac{\widehat{\E(S_{Fk0}^2)}}{m_0}\right)+\frac{(n-K)\E_\ell[\kappa^\ell_{k11}]}{n-1} -	\E_\ell[(\kappa^\ell_{k1})^2]+\widehat{\V\l(\kappa_{k1}^\ell\r)}\r). \label{eq:conserv_esti}
\end{equation}
Although this yields a conservative estimate of
$\V(\hat \tau_{k}(F,n-m))$ in finite samples, the bias appears to be
relatively small in practice (see Section~\ref{sec:synthetic}). In
Appendix \ref{app:consistent}, we show that the estimator is
consistent as $L$ goes to infinity and sufficiently large $m$.

To establish the asymptotic sampling distribution of our cross-fitting
GATES estimator, we first extend our CATE continuity condition
(Assumption~\ref{asm:continuity}) by assuming that each CATE given a
training data set is continuous and the average CATE (over all
possible training data sets) is bounded.
\begin{assumption}[Continuity of CATE at the Thresholds under
  Cross-Fitting] \label{asm:continuity_cv} \spacingset{1} Let
  $F_{\cZ^{n-m}_{\text{train}}}(c)=\Pr(\hat{s}_{\cZ^{n-m}_{\text{train}}}(\bX_i)\leq
  c)$ represent the cumulative distribution function of the scoring
  rule under training set $\cZ^{n-m}_{\text{train}}$ and define its
  pseudo-inverse as
  $F_{\cZ^{n-m}_{\text{train}}}^{-1}(p)=\inf \{c:
  F_{\cZ^{n-m}_{\text{train}}}(c) \ge p\}$ for $p\in [0,1]$.  Then,
  for all but asymptotically measure-zero set of possible training
  sets $\cZ^{n-m}_{\text{train}}$ of size $n-m$, the CATE function
  $\tau_{\cZ^{n-m}_{\text{train}}}(p) = \E(Y_i(1)-Y_i(0) \mid
  \hat{s}_{\cZ^{n-m}_{\text{train}}}(\bX_i) =
  F_{\cZ^{n-m}_{\text{train}}}^{-1}(p))$ is left-continuous with
  bounded variation on any interval $(\epsilon, 1-\epsilon)$ with
  $0< \epsilon <1/2$, and continuous in $p$ at
  $p=1/K,\ldots, (K-1)/K$. Furthermore, we assume
  $\lim_{n \to \infty} \E_{\cZ^{n-m}_{\text{train}}}[\max_{p \in
    [0,1]} \tau_{\cZ^{n-m}_{\text{train}}}(p)]<\infty$.
\end{assumption}

In addition, we require the ML algorithm $F$ to be stable.
\begin{assumption}[ML Algorithm Stability]\label{asm:stability} \spacingset{1.25}
  Let $\cZ_{\text{train}}^n$ be a training dataset of size $n$ and
  $\hat{s}_{\cZ_{\text{train}}^n}=F(\cZ_{\text{train}}^n)$ represent
  the estimated scoring rule that results from the application of an
  ML algorithm $F$ to the training dataset. Then, as $m \to \infty$
  (with $L$ fixed), for any $a,b$ with $a < b$:
	\begin{equation*}
	\|\E[Y_i(1)-Y_i(0) \mid a\leq
		\hat{s}_{\cZ_{\text{train}}^{n}}(\bX_i) \leq b]\|_{2}=o\left(m^{-1}\right).
	\end{equation*}
\end{assumption}
The expectation is taken over the distribution of
$\{\bX_i, Y_i(0), Y_i(1)\}$ among those units in the population
$\mathcal{P}$ who belong to the group defined by the conditioning set.
The outer norm is computed across the random sampling of training data
set of size $n$ from the same population.
Assumption~\ref{asm:stability} implies that as the size of training
data approaches infinity, $L_2$ norm of the resulting scoring rule
$\hat{s}_{\cZ_{\text{train}}^{n}}$ stabilizes sufficiently quickly at
a rate faster than $O(m^{-1})$. The required rate is consistent with
the asymptotic conditions needed for other related cross-validation
settings \citep[e.g.,][]{austern2020asymptotics}. Importantly, we do
not assume that the ML algorithm converges to the true CATE.

Finally, the next theorem established the asymptotic distribution of
GATES estimator under cross-fitting.
\begin{theorem} {\sc (Asymptotic Sampling Distribution of GATES
    Estimator under Cross-Fitting)}
  \label{thm:GATEsampcv} \spacingset{1.25} Suppose $L$ is fixed.
  Then, under Assumptions
  ~\ref{asm:SUTVA}--\ref{asm:comrand},~\ref{asm:moments}--\ref{asm:stability},
  we have, as $m$ goes to infinity,
	\begin{equation*}
		\frac{\hat{\tau}_{k}(F,n-m) -\tau_{k}(F,n-m)}{\sqrt{\V(\hat{\tau}_{k}(F,n-m))}} \
		\stackrel{d}{\longrightarrow} \ N(0,1)
	\end{equation*}
	where the expression of $\V(\hat\tau_k(F,n-m))$ is given in
        Theorem~\ref{thm:GATEestcv}.
\end{theorem}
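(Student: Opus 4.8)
The plan is to bootstrap the sample-splitting result (Theorem~\ref{thm:GATEsamp}), which already supplies a central limit theorem for a single fold once its scoring rule is held fixed, and then to control the two extra sources of randomness that cross-fitting introduces: the variation of the learned scoring rule across training sets, and the statistical dependence among folds created by the reuse of validation units as training data. First I would write the centered estimator as an average over folds,
\begin{equation*}
\hat\tau_k(F,n-m)-\tau_k(F,n-m) \ = \ \frac{1}{L}\sum_{\ell=1}^L\left[(\hat\tau_k^\ell - \tau_{k,-\ell}) + (\tau_{k,-\ell}-\tau_k(F,n-m))\right],
\end{equation*}
where $\tau_{k,-\ell}$ denotes the population GATES for group $k$ under the realized scoring rule $\hat s_\ell=F(\cZ_{-\ell})$, i.e.\ the conditional target given $\cZ_{-\ell}$, whose expectation over training sets equals $\tau_k(F,n-m)$ by the definition in Equation~\eqref{eq:gCATEtr} and whose across-fold dispersion is captured by the $\V(\kappa_{k1}^\ell)$ term of Theorem~\ref{thm:GATEestcv}. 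The first bracketed group is the conditional estimation error; the second is the training-induced variation.

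Before treating the stochastic terms I would dispose of the bias. Conditioning on $\cZ_{-\ell}$, the cutoff-estimation bias of each $\hat\tau_k^\ell$ is controlled exactly by the bound in Theorem~\ref{thm:GATEestcv}; combined with the continuity of the CATE at the thresholds (Assumption~\ref{asm:continuity_cv}), this forces the bias to vanish faster than $m^{-1/2}$, just as in the discussion following Proposition~\ref{prop:continuity_tight}. Hence the bias is asymptotically negligible relative to $\sqrt{\V(\hat\tau_k(F,n-m))}$ and can be dropped.

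For the conditional estimation error I would condition on $\cZ_{-\ell}$, which fixes $\hat s_\ell$, and apply Theorem~\ref{thm:GATEsamp} directly to the i.i.d.\ validation sample $\cZ_\ell$ of size $m$: Assumption~\ref{asm:continuity_cv} guarantees the required continuity-at-thresholds condition for all but an asymptotically measure-zero set of training sets, and Assumption~\ref{asm:moments} supplies the moment conditions, so $\hat\tau_k^\ell-\tau_{k,-\ell}$ is conditionally asymptotically normal with the single-fold variance of Theorem~\ref{thm:GATEest}. The main obstacle is then to combine the folds: because $\cZ_\ell$ is reused inside $\cZ_{-\ell'}$ for $\ell'\neq\ell$, the scoring rules $\hat s_1,\ldots,\hat s_L$, and hence the conditional errors, are dependent, so the average is not a sum of conditionally independent normals. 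This is precisely where the stability assumption (Assumption~\ref{asm:stability}) enters: I would argue that perturbing the training set by the $O(m)$ units of one fold shifts the induced conditional means only at rate $o(m^{-1})$, so the cross-fold covariances are asymptotically pinned down and reduce to the efficiency-gain term $-(L-1)\E(S_{Fk}^2)/L$ of Theorem~\ref{thm:GATEestcv}. Viewing $(\hat\tau_k^1-\tau_{k,-1},\ldots,\hat\tau_k^L-\tau_{k,-L})$ as a triangular array, I would then invoke a Lindeberg/Lyapunov CLT, the cube-moment bound in Assumption~\ref{asm:moments} furnishing the Lyapunov condition, to obtain joint asymptotic normality of the fold errors.

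Finally I would incorporate the training-variation term $\frac{1}{L}\sum_\ell(\tau_{k,-\ell}-\tau_k(F,n-m))$, whose contribution to the variance is the $\V(\kappa_{k1}^\ell)/K^2$ term; stability again controls its rate and, together with the analysis of the fold errors, yields \emph{joint} rather than merely marginal asymptotic normality of the two groups of terms. Adding them and standardizing by the exact variance computed in Theorem~\ref{thm:GATEestcv}, the centered statistic divided by $\sqrt{\V(\hat\tau_k(F,n-m))}$ converges to $N(0,1)$, which is the claim; plugging in the consistent variance estimator in practice is then justified by Slutsky's theorem. I expect the third step, converting algorithmic stability at rate $o(m^{-1})$ into a quantitative bound on the cross-fold dependence and a Gaussian limit, to be the technically hardest part, and would lean on the cross-validation CLT machinery in the spirit of \citet{austern2020asymptotics}.
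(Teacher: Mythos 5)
Your decomposition, your bias argument, and your identification of Assumption~\ref{asm:stability} as the tool for taming cross-fold dependence all parallel the paper's proof, but the step where you actually produce the Gaussian limit does not work as stated, and it is exactly the step the paper resolves by a different device. You propose to view the fold errors $(\hat\tau_k^1-\tau_{k,-1},\ldots,\hat\tau_k^L-\tau_{k,-L})$ as a triangular array and ``invoke a Lindeberg/Lyapunov CLT.'' With $L$ fixed there is no averaging across folds to drive a CLT: the asymptotics must come from $m\to\infty$ inside each fold error, and Lindeberg/Lyapunov theorems require row-wise \emph{independent} summands, which is precisely what you lack --- every unit in fold $\ell$ sits in the training sets of all other folds, so any linear combination $\sum_{\ell} t_\ell(\hat\tau_k^\ell-\tau_{k,-\ell})$ is a sum over the $n$ units of terms coupled through the estimated rules $\hat{f}_k^{\ell'}$. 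Controlling the cross-fold \emph{covariances}, as you suggest, is not enough: covariance bounds on dependent, individually asymptotically normal quantities do not deliver joint normality. A related difficulty is that the per-fold conditional CLTs cannot simply be combined, since one cannot condition on all of $\cZ_{-1},\ldots,\cZ_{-L}$ simultaneously (that would condition on the entire sample).

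The missing idea --- the paper's actual argument in Appendix~\ref{app:GATEsampcv} --- is a coupling/replacement step: Assumption~\ref{asm:stability} is used to produce a single \emph{fixed} rule $f_k$ and to write each per-unit summand as $U_{[i,m]}=\tilde{U}_{[i,m]}+\epsilon_{[i,m]}$, where $\tilde{U}_{[i,m]}$ uses $f_k$ in place of every $\hat{f}_k^\ell$ and $\E[\epsilon_{[i,m]}]=o\left(m^{-1/2}\right)$ by Cauchy--Schwarz. Once the rule is fixed, no new CLT is needed at all: the surrogate $\tilde{\tau}_k(F,n-m)$ is built from disjoint i.i.d.\ folds scored by a fixed rule, so the proof of Theorem~\ref{thm:GATEsamp} applies verbatim to it, and the conclusion transfers to $\hat{\tau}_k(F,n-m)$ because $\hat{\tau}_k(F,n-m)=\tilde{\tau}_k(F,n-m)+o\left(m^{-1/2}\right)$ and $\V(\tilde{\tau}_k(F,n-m))=\V(\hat{\tau}_k(F,n-m))+o\left(m^{-1}\right)$, both negligible relative to the $O\left(m^{-1/2}\right)$ standard deviation. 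Your outline could probably be completed by importing the heavier cross-validation CLT machinery of \citet{austern2020asymptotics}, as you yourself note, but as written it leaves the central limit step unproved; the short route is the fixed-rule coupling, which your proposal does not contain.
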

Proof is given in Supplementary Appendix~\ref{app:GATEsampcv}, and is similar to the
proof of Theorem~\ref{thm:GATEsamp}.

\subsection{Nonparametric Tests of Treatment Effect Heterogeneity}

We now extend the nonparametric tests of treatment effect
heterogeneity and its rank-consistency introduced in
Sections~\ref{subsec:heterotest}~and~\ref{subsec:consistest} to the
cross-fitting setting.  Similar to \citet{cher:etal:19}, we account
for the additional uncertainty due to random splitting.
Unlike their method, however, the proposed tests do not require a
computationally intensive resampling procedure.

Our first null hypothesis of interest is that the GATES are all equal
to the ATE,
\begin{equation}
	H_{F0} : \ \tau_1(F,n-m) \ = \ \tau_2(F,n-m) \ = \ \cdots \
                     = \ \tau_K(F,n-m). \label{eq:H0cross}
\end{equation}
This null hypothesis depends on the ML algorithm $F$ whereas the null
hypothesis given in Equation~\eqref{eq:H0} depends on the (fixed)
scoring rule.  

The following theorem generalizes the result of
Theorem~\ref{thm:heterotest} to cross-fitting.
\begin{theorem}{\sc (Nonparametric Test of Treatment Effect
    Heterogeneity Under Cross-fitting)}
  \spacingset{1.25} \label{thm:heterotestcv} Suppose $L$ is fixed. Then, under
  Assumptions~\ref{asm:SUTVA}--\ref{asm:comrand},~\ref{asm:moments}--\ref{asm:stability},
  and the null hypothesis $H_{F0}$ defined in
  Equation~\eqref{eq:H0cross} and against the alternative
  $H_{F1}: \mc{R}^K \setminus H_{F0}$, as $m\to \infty$, and
  $0<m_1/m <1$ stays constant, we have,
	\begin{equation*}
		\bm{\hat{\tau}}_F^\top \bm{\Sigma}^{-1}\bm{\hat{\tau}}_F
                \ \stackrel{d}{\longrightarrow} \ \chi^2_K
	\end{equation*}
	where
        $\bm{\hat{\tau}}_F=(\hat\tau_{1}(F,n-m)-\hat\tau,\cdots,\hat\tau_{K}(F,n-m)-\hat\tau)$,
        and $\bm{\Sigma}$ is defined as for $k,k^\prime \in \{1,\cdots K\}$:
	\begin{align*}
		\bm{\Sigma}_{kk} \  = & \
K^2\left(\frac{\E(S_{Fk1}^{*2})}{m_1} +
\frac{\E(S_{Fk0}^{*2})}{m_0}\right) -\frac{L-1}{L}\E(S^{2}_{Fk})+\V\l(\kappa_{k1}^\ell\r)\\
\ & + \frac{1}{K}\E_\ell\left\{(K-2)\left(\frac{n-K}{n-1}\kappa^\ell_{kk11}-(\kappa^\ell_{k1})^2\right)-\frac{2n(K-1)}{(n-1)}\kappa^\ell_{kk01}+2\kappa^\ell_{k1}\kappa^\ell_{k0}\right\}\\		
		\bm{\Sigma}_{kk^\prime} \  = & \
		K^2\left(\frac{\E(S_{Fkk^\prime 1}^{*2})}{m_1} +
		\frac{\E(S_{Fkk^\prime 0}^{*2})}{m_0}\right) -\frac{L-1}{L}\E(S^{2}_{Fkk'})+\Cov\l(\kappa_{k1}^\ell,\kappa_{k^\prime1}^\ell\r)\\
		\ &+ \
                    \frac{1}{K}\E_\ell\left\{(K-2)\left(\kappa^\ell_{kk'11}-\kappa^\ell_{k1}\kappa^\ell_{k'1}\right)-\frac{Kn-n-1}{n-1}\left(\kappa^\ell_{kk'10}+\kappa^\ell_{kk'01}\right)+\kappa^\ell_{k1}\kappa^\ell_{k'0}+\kappa^\ell_{k0}\kappa^\ell_{k'1}\right\}
	\end{align*}
	where
        $S_{Fkt}^{*2} = \sum_{i=1}^m (Y^{*\ell}_{ki}(t) -
        \overline{Y^{*\ell}_{k}(t)})^2/(m-1)$,
        $S_{Fkk^\prime t}^{*2} = \sum_{i=1}^m (Y^{*\ell}_{ki}(t) -
        \overline{Y^{*\ell}_{k}(t)})(Y^\ell_{k^\prime i}(t) -
        \overline{Y^{*\ell}_{k^\prime}(t)})/(m-1)$,
        $S_{Fkk'}^{2} = \sum_{\ell=1}^L (\hat{\tau}^\ell_k(F,n-m) -
        \hat{\tau}_{k}(F,n-m)) (\hat{\tau}^\ell_{k'}(F,n-m) -
        \hat{\tau}_{k'}(F,n-m))/(L-1)$,
        $\kappa_{kt}^\ell = \E(Y_i(1)-Y_i(0)\mid
        \hat{f}^\ell_k(\bX_i)=t)$ and $\kappa^\ell_{kk'ts} = \E[(Y_i(1)-Y_i(0))(Y_j(1)-Y_j(0)) \mid \hat{f}^\ell_k(\bX_i)=t,\hat{f}^\ell_{k'}(\bX_i)=s]$with
        $Y^{*\ell}_{ki}(t) =
        (\hat{f}^\ell_k(\bX_i^{(\ell)})-1/K)Y_i^{(\ell)}(t)$, and
        $\overline{Y^{*\ell}_{k}(t)} = \sum_{i=1}^m
        Y_{ki}^{*\ell}(t)/m$, for $t= 0,1$.
\end{theorem}
Proof is given in Supplementary Appendix~\ref{app:heterotestcv}.
Compared to Theorem~\ref{thm:heterotest}, the only difference appears
in the expression of the covariance matrix $\bm{\Sigma}$ due to the
efficiency gains of the cross-validation procedure. Similar to
Theorem~\ref{thm:GATEestcv}, the estimation of $\E(S^2_{Fkk'})$ for
$k=k'$ requires care, and we utilize the consistent estimator as
identified in Equation \eqref{eq:conserv_esti}.  If the resulting
covariance matrix estimate is not positive definite, we find the
nearest positive definite matrix in the $L_2$ norm by utilizing the
algorithm of \cite{higham2002computing}.

Finally, we extend the nonparametric test of rank-consistent treatment
effect heterogeneity (Theorem~\ref{thm:consistest}) to cross-fitting.
The null hypothesis is given by,
\begin{align}
	H_{F0}^\ast &: \tau_1(F,n-m)\leq \tau_2(F,n-m)\leq\cdots \leq \tau_K(F,n-m).\label{eq:H0cosis-cross}
\end{align}
Now, we present the result.
\begin{theorem} {\sc (Nonparametric Test of Rank-Consistent Treatment
    Effect Heterogeneity Under
    Cross-Fitting)} \label{thm:consistestcv} \spacingset{1.25} Suppose $L$ is fixed. Then, under
Assumptions~\ref{asm:SUTVA}--\ref{asm:comrand},~\ref{asm:moments}--\ref{asm:stability},
  as $m\to \infty$ and $0<m_1/m <1$ stays constant, the uniformly most
  powerful test of size $\alpha$ for the null hypothesis $H_{F0}^\ast$
  defined in Equation~\eqref{eq:H0cosis-cross} against the alternative
  $H_{F1}^\ast : \mc{R}^K \setminus H_{F0}^\ast$ has the following
  critical region,
  	\begin{equation*}
		\{\bm{\hat\tau}_F \in \mc{R}^K \mid
                \left(\bm{\hat\tau}_F -\bm{\mu}_0(\bm{\hat\tau}_F )\right)^\top
                \bm{\Sigma}^{-1}\left(\bm{\hat\tau}_F -\bm{\mu}_0(\bm{\hat\tau}_F )\right)>C_\alpha\},
	\end{equation*}
	for some constant $C_\alpha$ that only depends on
        $\alpha$. Furthermore, under $H_{F0}$ and as $n\to \infty$, we have,
	\begin{equation*}
		\left(\bm{\hat\tau}_F -\bm{\mu}_0(\bm{\hat\tau}_F )\right)^\top\bm{\Sigma}^{-1}\left(\bm{\hat\tau}_F -\bm{\mu}_0(\bm{\hat\tau}_F )\right) \stackrel{d}{\longrightarrow} \bar\chi^2_K,
	\end{equation*}
	where $\bm{\hat{\tau}}_F$ and $\bm{\Sigma}$ are defined in
        Theorem~\ref{thm:heterotestcv}.
\end{theorem}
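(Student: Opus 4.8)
The plan is to reduce the cross-fitting problem to the Gaussian order-restricted testing problem already solved for the sample-splitting case in Theorem~\ref{thm:consistest}, the only changes being that the target is now $\tau_k(F,n-m)$ and the covariance matrix is the $\bm{\Sigma}$ of Theorem~\ref{thm:heterotestcv}. First I would establish the joint asymptotic normality of the centered vector $\bm{\hat\tau}_F$. The marginal convergence of each component is Theorem~\ref{thm:GATEsampcv}, and the limiting covariance is the one computed in the proof of Theorem~\ref{thm:heterotestcv}; combining the joint convergence and covariance computation underlying that theorem through the Cram\'er--Wold device gives $\bm{\Sigma}^{-1/2}(\bm{\hat\tau}_F-\bm{\tau}_F^c)\stackrel{d}{\longrightarrow}N(\bm{0},\bm{I}_K)$ as $m\to\infty$ with $L$ fixed, where $\bm{\tau}_F^c=(\tau_1(F,n-m)-\tau,\ldots,\tau_K(F,n-m)-\tau)^\top$. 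Since subtracting the common constant $\tau$ preserves the ordering, the monotone cone $\mathcal{C}=\{\bm{\mu}:\mu_1\le\cdots\le\mu_K\}$ is the correct null region for the centered vector, and $H_{F0}^\ast$ is equivalent to $\bm{\tau}_F^c\in\mathcal{C}$.

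Next, under the equality null $H_{F0}$ the centered mean $\bm{\tau}_F^c$ is the zero vector, which sits at the vertex of $\mathcal{C}$, so $\bm{\hat\tau}_F\stackrel{d}{\longrightarrow}\bm{Z}\sim N(\bm{0},\bm{\Sigma})$. The projection $\bm{\mu}_0(\cdot)$ onto the closed convex cone $\mathcal{C}$ (the monotone-cone projection $\bm{\mu}^\ast$) is continuous, so the map $\bm{x}\mapsto(\bm{x}-\bm{\mu}_0(\bm{x}))^\top\bm{\Sigma}^{-1}(\bm{x}-\bm{\mu}_0(\bm{x}))$ is continuous, and the continuous mapping theorem yields convergence of the test statistic to $(\bm{Z}-\bm{\mu}_0(\bm{Z}))^\top\bm{\Sigma}^{-1}(\bm{Z}-\bm{\mu}_0(\bm{Z}))$. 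By the chi-bar-squared identity stated before Theorem~\ref{thm:consistest} \citep{shapiro1988towards}, this limit is exactly $\bar\chi^2_K$, giving the stated distributional result. Replacing $\bm{\Sigma}$ by the consistent estimate built from Equation~\eqref{eq:conserv_esti} changes nothing by Slutsky's lemma, and positive definiteness is guaranteed, after the nearest-positive-definite correction of \citet{higham2002computing} if needed, by Assumption~\ref{asm:moments}.

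For the (asymptotically) uniformly-most-powerful claim I would invoke the same reasoning as in Theorem~\ref{thm:consistest}. Asymptotically the experiment is a Gaussian shift model in which one observes $\bm{\hat\tau}_F\approx N(\bm{\mu},\bm{\Sigma})$ and tests $H_{F0}^\ast:\bm{\mu}\in\mathcal{C}$ against $H_{F1}^\ast:\bm{\mu}\notin\mathcal{C}$. For such one-sided Gaussian cone alternatives the likelihood-ratio test, which rejects when the $\bm{\Sigma}^{-1}$-distance from $\bm{\hat\tau}_F$ to $\mathcal{C}$ exceeds a cutoff, is uniformly most powerful among size-$\alpha$ tests, and its size is controlled at the least-favorable configuration, namely the vertex $\bm{\mu}=\bm{0}$, i.e.\ under $H_{F0}$. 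This is why the critical value $C_\alpha$ is calibrated by the $\bar\chi^2_K$ distribution of the previous step. Transferring optimality from the limiting Gaussian experiment to the sequence of finite-sample problems is justified by the local asymptotic normality furnished by the CLT of the first step, exactly as in the sample-splitting proof.

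The main obstacle is the first step: establishing joint asymptotic normality of $\bm{\hat\tau}_F$ with covariance $\bm{\Sigma}$. The cross-fitted estimator $\hat\tau_k(F,n-m)=L^{-1}\sum_\ell\hat\tau_k^\ell$ superposes three sources of randomness---the Neyman randomization within each validation fold, the sampling of the scoring rule across the $L$ training folds, and the negative within-fold dependence created by the fixed group sizes---and these must be disentangled so that the fold-specific vectors behave like a fixed-$L$ average of asymptotically jointly Gaussian, asymptotically across-fold-independent contributions. This is precisely where the stability Assumption~\ref{asm:stability} enters: the $o(m^{-1})$ rate forces the cross-fold coupling and the variation of $\kappa_{kt}^\ell$ across training sets to enter only through the explicit variance terms of Theorem~\ref{thm:GATEestcv}, leaving a Lindeberg-type CLT (verified using the third-moment bound in Assumption~\ref{asm:moments}) to deliver the Gaussian limit. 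Once this CLT is in hand with the matching covariance, the order-restricted machinery of the remaining steps applies essentially verbatim from the sample-splitting case.
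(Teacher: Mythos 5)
Your proposal is correct and follows essentially the same route as the paper: the paper's proof likewise invokes Theorem~\ref{thm:heterotestcv} to get that $\bm{\Sigma}^{-1/2}\bm{\hat\tau}_F$ is asymptotically $N(\bm{0},\bm{I})$, and then repeats the argument of Supplementary Appendix~\ref{app:consistest} (cone projection, chi-bar-squared limit via \citealp{shapiro1988towards}, and the least-favorable-vertex calibration of $C_\alpha$) verbatim. The extra detail you supply --- Cram\'er--Wold for the joint limit, continuity of the cone projection plus the continuous mapping theorem, and Slutsky for the estimated covariance --- fills in steps the paper leaves implicit, but it is the same argument.
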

Proof directly follows from the fact by
Theorem~\ref{thm:heterotestcv}, $\bm{\Sigma}^{-1/2}\bm{\hat\tau}_F$ is
asymptotically normally distributed with variance-covariance matrix
$\bm{I}$, which is an identity matrix of size $K \times K$. Then,
following the same steps as those in Supplementary
Appendix~\ref{app:consistest} immediately establishes the result.

\section{A Simulation Study}
\label{sec:synthetic}

We undertake a simulation study to examine the finite sample
performance of the proposed methodology.  We consider both
sample-splitting and cross-fitting cases.  For the estimation of
GATES, we evaluate the bias and variance of the proposed estimators as
well as the coverage of their confidence intervals.  For hypothesis
tests, we examine the actual power and size of the proposed tests.  We
show that the proposed methodology performs well even when the sample
size is as small as $100$.

\subsection{The Setup}

We utilize the data generating process from the 2016 Atlantic Causal
Inference Conference (ACIC) Competition.  We briefly describe its
simulation setting here and refer interested readers to
\citet{dori:etal:19} for additional details.  The focus of this
competition was the inference of average treatment effect in
observational studies. There are a total of 58 pre-treatment
covariates $\bX$, including 3 categorical, 5 binary, 27 count data,
and 13 continuous variables.  The data were taken from a real-world
study with the sample size $n=4802$.

In our simulation, we assume that the empirical distribution of these
covariates represent the population $\mathcal{P}$ and obtain each
simulation sample via bootstrap.  We consider small and moderate
sample sizes: $n= 100$, $500$, and $2500$. For the sample-splitting
case, the models are pre-trained on the original dataset from the 2016
ACIC data challenge, and the sample size $n$ refers to the number of
testing samples. For the cross-validation case, $n$ refers to the
total dataset size, which we then conduct 5-fold cross-validation,
$L=5$.  One important change from the original competition is that
instead of utilizing a propensity model to determine $T$, we assume
that the treatment assignment is completely randomized, i.e.,
$\Pr(T_i = 1)=1/2$, and the treatment and control groups are of equal
size, i.e., $n_1=n_0=n/2$.

To generate the outcome variable, we use one of the settings from the
competition, which is based on the generalized additive model with
polynomial basis functions.  The model represents a setting, in which
there exists a substantial amount of treatment effect heterogeneity.
The formula for this outcome model is reproduced here:
{\small\begin{align*}
	\E(Y_i(t) \mid \bX_i) &= 1.60+
                                0.53\times x_{29}-3.80\times x_{29}(x_{29}-0.98)(x_{29}+0.86)
                                -0.32 \times \bone\{x_{17}>0\}\\& +
  0.21 \times \bone\{x_{42}>0\}-0.63 \times  x_{27}+4.68 \times
  \bone\{x_{27}<-0.61\}-0.39 \times (x_{27}+0.91)
  \bone\{x_{27}<-0.91\}\\&+ 0.75 \times
  \bone\{x_{30}\leq0\}-1.22 \times \bone\{x_{54}\leq0\}+0.11 \times x_{37}
  \bone\{x_{4}\leq0\}-0.71 \times \bone\{x_{17}\leq0, t=0\}\\&-1.82 \times \bone\{x_{42}\leq
  0,t=1\}+0.28 \times \bone\{x_{30}\leq
  0,t=0\}\\&+\{0.58\times x_{29}-9.42 \times
  x_{29}(x_{29}-0.67)(x_{29}+0.34)\}\times\bone\{t=1\}\\&+(0.44 \times
  x_{27}-4.87\times
  \bone\{x_{27}<-0.80\})\times\bone\{t=0\}-2.54 \times \bone\{t=0,
  x_{54}\leq 0\}.
\end{align*}}

Throughout, we set $K=5$ so that observations are sorted into five
groups based on the magnitude of the CATE.  For the case of
sample-splitting, we can directly compute the true values of GATES
using the outcome model and evaluate each quantity based on the entire
original data set. For the cross-validation case, however, the exact
calculation of GATES true values would require averaging over all
combinations of training data sets from the original data set.  Since
this is computationally prohibitive, we obtain their approximate true
values by independently sampling 10,000 training data sets.  For each
training dataset, we train an ML algorithm $F$ using 5-fold
cross-validation. Then, we use the sample mean of each estimated
causal quantity across the 10,000 simulated data sets as our
approximate truth.

We evaluate Bayesian Additive Regression Trees (BART)
\citep[see][]{chipman2010bart,hill2011bayesian,hahn2020bayesian} and
Causal Forest \citep{wage:athe:18}, and LASSO \citep{tibs:96}. The
number of trees were tuned through the 5-fold cross validation for
both algorithms. For implementation, we use {\sf R 3.6.3} with the
following packages: {\sf bartMachine} (version 1.2.6) for BART, {\sf
  grf} (version 2.0.1) for Causal Forest, and {\sf glmnet} (version
4.1-2) for LASSO.  The number of trees was tuned through 5-fold
cross-validation for BART and Causal Forest. The regularization
parameter was tuned similarly for LASSO.

\subsection{Finite-Sample Performance of the Proposed Estimators}
\label{subsec:GATEest}

\begin{table}[t!]
	\centering\setlength{\tabcolsep}{1pt}
	\small \spacingset{1}
	\begin{tabular}{l.|...|...|...}
		\hline
		& & \multicolumn{3}{c|}{$\bm{n_{\text{test}}=100}$} & \multicolumn{3}{c|}{$\bm{n_{\text{test}}=500}$} & \multicolumn{3}{c}{$\bm{n_{\text{test}}=2500}$}\\
		Estimator & \multicolumn{1}{c|}{truth}
		& \multicolumn{1}{c}{bias} & \multicolumn{1}{c}{s.d.}
		& \multicolumn{1}{c|}{coverage}
				& \multicolumn{1}{c}{bias} & \multicolumn{1}{c}{s.d.}
		& \multicolumn{1}{c|}{coverage}
		& \multicolumn{1}{c}{bias} & \multicolumn{1}{c}{s.d.}
& \multicolumn{1}{c}{coverage} \\ \hline
		\multicolumn{2}{l|}{\textbf{Causal Forest}} &  & & & & &\\
		$\hat\tau_1$ & 2.164  & 0.034 & 2.486   & 93.8\%  & 0.041  &  1.071  & 95.0\%  & 0.007  &  0.467 &  96.0\%  \\
		$\hat\tau_2$ & 4.001  & 0.011 &  2.551  & 93.7    & -0.060  & 1.183  & 94.4    &  -0.002  &  0.510 &  95.3    \\
		$\hat\tau_3$ & 4.583  & -0.018 &  2.209 & 94.0    & -0.003  &  0.956 & 96.4    &  0.020  &  0.421&  95.8    \\
		$\hat\tau_4$ & 4.931  & -0.077 &  2.500 & 94.6    & 0.001  &  1.138  & 94.3    & 0.003  &  0.517 &  95.6    \\
		$\hat\tau_5$ & 5.728  & -0.058 &  3.332 & 96.0    & -0.010  &  1.499 & 95.0    & -0.009  &  0.661  &  95.2   \\\hline
		\multicolumn{2}{l|}{\textbf{BART}} &  & & & & &\\
		$\hat\tau_1$ & 2.092  & 0.016  & 3.188 & 94.0\%  & -0.014   &  1.402 & 96.2\%   & 0.009  &  0.626 &  95.8\% \\
		$\hat\tau_2$ & 3.913  & 0.127 & 2.918 & 95.1    & 0.028  & 1.388  & 94.0    &  -0.003 &  0.618 &  95.3   \\
		$\hat\tau_3$ & 4.478  & -0.077 & 2.218 & 94.3    & -0.041  &  0.968 & 95.0    &  -0.001  &  0.425 &  95.1   \\
		$\hat\tau_4$ & 5.042  & -0.154 & 2.366 & 94.2    & 0.014   &  1.106 & 95.8    & 0.015  &  0.495 &  95.4   \\
		$\hat\tau_5$ & 5.881  & -0.019 & 2.510 & 94.7    & -0.019   &  1.104 & 94.4    & -0.000   &  0.489 &  95.0  \\\hline
		\multicolumn{2}{l|}{\textbf{LASSO}} &  & & & & &\\
$\hat\tau_1$ & 3.243  & 0.028  & 2.507 & 94.1\%  & 0.049   &  1.119 & 95.1\%   & 0.003  &  0.769 &  95.1\% \\
$\hat\tau_2$ & 3.817  & -0.012 & 1.848 & 93.6    & -0.013  & 0.834  & 94.5    &  -0.000 &  0.684 &  95.4   \\
$\hat\tau_3$ & 4.318  & -0.013 & 2.095 & 94.2    & -0.002  &  0.930 & 94.5    &  0.010  &  0.516 &  95.0   \\
$\hat\tau_4$ & 4.788  & -0.041 & 2.475 & 94.0    & -0.015   &  1.101 & 94.6    & -0.001  &  0.480 &  94.6   \\
$\hat\tau_5$ & 5.241  & -0.046 & 3.921 & 94.4    & 0.021   &  1.739 & 95.1    & 0.002   &  0.505 &  95.3  \\\hline\hline
	\end{tabular}
	\caption{The Finite Sample Performance of the GATES Estimators
          under Sample-splitting.  The table presents the estimated
          bias and standard deviation of the GATES estimators as well
          as the empirical coverage of their 95\% confidence intervals
          for Causal Forest, BART, and LASSO.  The machine learning
          algorithms are trained on the original dataset from the 2016
          ACIC data challenge.} \label{tb:simulation}
\end{table}

Table~\ref{tb:simulation} presents the results for the estimation of
GATES in the sample-splitting case. According to this simulation
setup, Causal Forest and BART appear to identify treatment effect
heterogeneity better than LASSO.  For example, for BART, the largest
and smallest GATES are 5.89 and 2.09, respectively.  In contrast, the
gap between the corresponding quantities is much smaller for the
LASSO, roughly equaling 2 points.

For each sample size, we conducted 1,000 simulation trials.  For all
three algorithms, the estimated biases of the proposed GATES
estimators are negligibly small, accounting for less than 5\% of their
estimated standard deviation in almost all cases.  The bias also
generally decreases as the sample size grows.  We also find that the
empirical coverage of the confidence intervals is close to the
theoretical 95\% value even when the sample size is as small as
$n=100$.

\begin{table}[t!]
  \centering\setlength{\tabcolsep}{1pt}
     \spacingset{1}\small
     \resizebox{1.2\textwidth}{!}{
     \begin{tabular}{l|....|....|....}
		\hline
		& \multicolumn{4}{c|}{$\bm{n=100}$} & \multicolumn{4}{c|}{$\bm{n=500}$} & \multicolumn{4}{c}{$\bm{n=2500}$}\\
		Estimator
		& \multicolumn{1}{c}{truth} & \multicolumn{1}{c}{bias} & \multicolumn{1}{c}{s.d.}
		& \multicolumn{1}{c|}{coverage}
		& \multicolumn{1}{c}{truth}  & \multicolumn{1}{c}{bias} & \multicolumn{1}{c}{s.d.}
		& \multicolumn{1}{c|}{coverage}
		& \multicolumn{1}{c}{truth}   & \multicolumn{1}{c}{bias} & \multicolumn{1}{c}{s.d.}
		& \multicolumn{1}{c}{coverage} \\ \hline
			\textbf{Causal Forest} &  & & & & & & & & & & \\
		$\hat\tau_1$ & 3.976  & -0.053  & 2.971  & 94.0\% & 2.900   & -0.007   &  1.572 & 95.6\% & 2.210   & -0.007  &  0.594  &  97.7\%\\
		$\hat\tau_2$ & 4.173  & -0.061  & 2.584  & 95.9   & 4.112   & -0.038  & 1.075  & 98.2   & 4.057   &  0.011   &  0.541 &  98.6   \\
		$\hat\tau_3$  &4.286  & -0.012  & 2.560  & 96.7   & 4.510   & -0.054   &  1.058 & 97.7   & 4.545   &  0.019   &  0.465 &  98.1   \\
		$\hat\tau_4$ & 4.400  & -0.119  & 2.865  & 97.4   & 4.799   & 0.066   &  1.149 & 97.9   & 4.951   &  -0.009   &  0.509 &  98.6   \\
		$\hat\tau_5$ & 4.569  & 0.140   & 3.447  & 94.1   & 5.086   & 0.001   &  1.620 & 96.0   & 5.643   & -0.006  &  0.620  &  98.3  \\\hline
					\textbf{LASSO} &  & & & & & & & & & & \\
		$\hat\tau_1$ & 4.191  & -0.125  & 3.196  & 97.6\% & 4.017   & -0.025   &  1.488 & 96.0\% & 3.752   & -0.004  &  0.669  &  96.0\%\\
		$\hat\tau_2$ & 4.205  & 0.036  & 2.281  & 97.5   & 4.137  & -0.069  & 1.027  & 97.9   & 4.028   &  -0.019   &  0.590 &  98.9   \\
		$\hat\tau_3$  &4.268  & -0.126  & 2.354  & 96.6   & 4.291   & -0.019   &  1.000 & 97.9   & 4.323   &  0.037   &  0.488 &  97.5   \\
		$\hat\tau_4$ & 4.334  & -0.003  & 2.536 & 96.8   & 4.430   & 0.035   &  1.174 & 96.8   & 4.571   &  0.033   &  0.642 &  97.2   \\
		$\hat\tau_5$ & 4.406  & 0.111   & 3.615  & 96.2   & 4.530   & 0.047   &  1.811 & 95.0   & 4.732   & 0.022  &  0.697  &  95.3  \\\hline\hline
	\end{tabular}
}
\caption{The Finite Sample Performance of the GATES Estimators under
  Cross-fitting.  The table presents the estimated bias and standard
  deviation of the proposed GATES estimators as well as the empirical
  coverage of their 95\% confidence intervals for Causal Forest and
  LASSO.} \label{tb:simulationcv}
\end{table}

We obtain similar findings for the cross-fitting case.
Table~\ref{tb:simulationcv} shows the results for Causal Forest and
LASSO.  Unfortunately, BART is too computationally intensive to
include for this simulation.  For the results of Causal Forest and
LASSO, we utilize 1,000 trials as before. As seen in the
sample-splitting case, the estimated biases of the proposed GATES
estimators are relatively small even when $n=100$ and becomes
negligible when $n=500$.

Recall that under the 5-fold cross-fitting, for example, $n=500$
implies the evaluation sample of size 100 for each fold.  And, yet,
combining the five folds leads to a much lower standard deviation than
the sample-splitting case with the $n=100$ case in
Table~\ref{tb:simulation}.  The results are similar when comparing the
$n=2500$ cross-fitting case with the $n=500$ sample-splitting case.
Indeed, in some cases, the reduction in standard deviation is more
than 50 percent.  This experimentally demonstrates the efficiency gain
from using a cross-fitting approach.  We further find that although
Theorem~\ref{thm:GATEestcv} implies that the proposed variance
estimate is conservative, the results show only the slight
overcoverage of the confidence intervals. In \citet{imai:li:23a} we
show that the methodology proposed in \cite{cher:etal:19} leads to
more significant overcoverage of the confidence intervals.

\subsection{Finite-Sample Performance of the Proposed Hypothesis Tests}

We next examine the finite sample performance of the proposed
hypothesis tests.  Due to the aforementioned computational intensity of
BART, we focus on Causal Forest and LASSO.  For each simulated data
set, we conduct hypothesis tests of two null hypotheses of interest:
treatment effect homogeneity (see
Equations~\eqref{eq:H0}~and~\eqref{eq:H0cross} for sample-splitting
and cross-fitting, respectively) and rank-consistency of the GATES
(see Equations~\eqref{eq:H0ast}~and~\eqref{eq:H0cosis-cross} for
sample-splitting and cross-fitting cases, respectively).

According to the true values shown in
Tables~\ref{tb:simulation}~and~\ref{tb:simulationcv}, the null
hypothesis of treatment effect homogeneity is false while the
rank-consistency null hypothesis is correct.  This suggests that the
proposed test should reject the former hypothesis more frequently as
the sample size increases whereas it should reject the latter
hypothesis no more frequently than the specified size of the test,
which we set to 5\% throughout.

\begin{table}[t!]
	\centering\setlength{\tabcolsep}{2.5pt}
	\spacingset{1}
	\begin{tabular}{l|..|..|..}
		\hline
		&  \multicolumn{2}{c|}{$\bm{n_{\text{test}}=100}$} &  \multicolumn{2}{c|}{$\bm{n_{\text{test}}=500}$}& \multicolumn{2}{c}{$\bm{n_{\text{test}}=2500}$}\\\hline
		& \multicolumn{1}{c}{rejection} &
                                                \multicolumn{1}{c|}{median}
		& \multicolumn{1}{c}{rejection} &
                                                \multicolumn{1}{c|}{median}
		& \multicolumn{1}{c}{rejection} &
                                                \multicolumn{1}{c}{median} \\
		& \multicolumn{1}{c}{rate} &
                                                \multicolumn{1}{c|}{$p$-value}
		& \multicolumn{1}{c}{rate} &
                                                \multicolumn{1}{c|}{$p$-value}
		& \multicolumn{1}{c}{rate} &
                                                \multicolumn{1}{c}{$p$-value} \\ \hline
		\multicolumn{1}{l|}{\textbf{Causal Forest}}  & & & & & &\\
		\multicolumn{1}{l|}{$H_0$:\textit{Treatment effect homogeneity}}  & & & &&&\\
		\quad $n_{\text{train}}=100$  & 5.2\% & 0.504  & 7.4\% & 0.529   &  19.6\%  & 0.361   \\
		\quad $n_{\text{train}}=400$ & 9.0  & 0.459  & 22.0 & 0.254    &  74.4 &  0.002   \\
		\quad $n_{\text{train}}=2000$   & 13.0 & 0.367  & 40.4 & 0.092  &  96.0  &  0.000   \\
		\multicolumn{1}{l|}{$H_0^\ast$: \textit{Rank consistency of GATES}}  & & & &&&\\
		\quad $n_{\text{train}}=100$  & 4.0\% & 0.583  & 2.2\% & 0.624  &  2.2\%  & 0.704   \\
		\quad $n_{\text{train}}=400$ & 2.8  & 0.687  & 0.2 & 0.820    &  0.2 &  0.907   \\
		\quad $n_{\text{train}}=2000$   & 1.2 & 0.707  & 0.2 & 0.852  &  0.0  &  0.967   \\\hline
		\multicolumn{1}{l|}{\textbf{LASSO}}  & & & & & &\\
		\multicolumn{1}{l|}{$H_0$: \textit{Treatment effect homogeneity}}  & & & &&&\\
		\quad $n_{\text{train}}=100$  & 5.8\% & 0.496  & 5.2\% & 0.544   &  9.6\%  & 0.516   \\
		\quad $n_{\text{train}}=400$ & 7.0  & 0.557  & 4.0 & 0.578    &  10.4 &  0.468   \\
		\quad $n_{\text{train}}=2000$   & 6.2  & 0.489  & 9.4 & 0.519 &  26.2  &  0.249   \\
		\multicolumn{1}{l|}{$H_0^\ast$: \textit{Rank consistency of GATES}}  & & & &&&\\
		\quad $n_{\text{train}}=100$  & 4.6\% & 0.525  & 3.0\% & 0.584  &  5.4\%  & 0.596   \\
		\quad $n_{\text{train}}=400$ & 6.0  & 0.494  & 1.8 & 0.600    &  2.4 &  0.687   \\
		\quad $n_{\text{train}}=2000$   & 3.2 & 0.608  & 1.4 & 0.698  &  1.2  &  0.851   \\\hline\hline
	\end{tabular}
	\caption{The Finite Sample Performance of the Hypothesis Tests
          for Treatment Effect Homogeneity and Rank-consistency of
          GATES under Sample-splitting.  The results are based on
          Causal Forest and LASSO.  The table presents the percent of
          500 simulation trials where each null hypothesis is rejected
          using the $5\%$ test size.  In addition, the median
          $p$-value across all trials is also shown.  The results are
          presented for different training data sizes
          $n_{\text{train}}$ and different test data sizes
          $n_{\text{test}}$.} \label{tb:synthetichypo}
\end{table}

We first consider the sample-splitting setting based on 500 simulation
trials. Table~\ref{tb:synthetichypo} presents the
rejection rate and median $p$-value for each scenario across different
training and testing data sizes, denoted by $n_{\text{train}}$ and
$n_{\text{test}}$, respectively.  We find that for Causal Forest, the
training data of size 400 and the test data of size 2000 are required
to reject the null hypothesis of treatment effect homogeneity with a
high probability.  This highlights the difficulty of identifying
treatment effect heterogeneity in randomized experiments.  For the
hypothesis test of the rank-consistency of GATES, we find that if
trained with a small sample ($n_{\text{train}}=100$), Causal Forest
might falsely reject the null hypothesis but this false rejection rate
is less than the size of the test regardless of the size of the test
data. This reflects the conservative nature of our test as discussed
at the end of Section~\ref{sec:sample_splitting}.

We obtain similar findings for LASSO, where small training data leads
to low rejection rates for the treatment effect homogeneity hypothesis
and some false rejection of the rank consistency hypothesis.  As
before, the false rejection rates are approximately 5\% or lower (the
small number of simulations induce some noise).  Interestingly, the
proposed test is much less powerful for LASSO than for Causal Forest.
Even when the size of training data is 2,000 and the test data size is
2,500, the rejection rate is only slightly above 25\%.  This is
consistent with the finding in Section~\ref{subsec:GATEest} that LASSO
discovers less treatment effect heterogeneity than Causal Forest.

\begin{table}[t!]
	\centering\setlength{\tabcolsep}{2.5pt}
	\spacingset{1}
	\begin{tabular}{l|..|..|..}
		\hline
		&  \multicolumn{2}{c|}{$\bm{n=100}$} &  \multicolumn{2}{c|}{$\bm{n=500}$}& \multicolumn{2}{c}{$\bm{n=2500}$}\\\hline
		& \multicolumn{1}{c}{rejection} &
		\multicolumn{1}{c|}{median}
		& \multicolumn{1}{c}{rejection} &
		\multicolumn{1}{c|}{median}
		& \multicolumn{1}{c}{rejection} &
		\multicolumn{1}{c}{median} \\
		& \multicolumn{1}{c}{rate} &
		\multicolumn{1}{c|}{$p$-value}
		& \multicolumn{1}{c}{rate} &
		\multicolumn{1}{c|}{$p$-value}
		& \multicolumn{1}{c}{rate} &
		\multicolumn{1}{c}{$p$-value} \\ \hline
		\multicolumn{1}{l|}{\textbf{Causal Forest}}  & & & & & &\\
		\multicolumn{1}{l|}{Homogeneous Treatment Effects}   & 1.4\% & 0.790  & 4.6\% & 0.712   &  51.4\%  & 0.041   \\
		\multicolumn{1}{l|}{Consistent Treatment Effects}  & 1.4\% & 0.702  & 0.8\% & 0.845  &  0.0\%  & 0.976   \\\hline
		\multicolumn{1}{l|}{\textbf{LASSO}}  & & & & & &\\
		\multicolumn{1}{l|}{Homogeneous Treatment Effects}   & 0.6\% & 0.880  & 1.8\% & 0.850   &  9.0\%  & 0.664   \\
		\multicolumn{1}{l|}{Consistent Treatment Effects}  & 1.0\% & 0.722  & 0.6\% & 0.769  &  0.2\%  & 0.889   \\\hline\hline
	\end{tabular}
	\caption{The Finite Sample Performance of the Hypothesis Tests
          for Treatment Effect Homogeneity and Rank-consistency of
          GATES under Cross-fitting.  The results are based on Causal
          Forest and LASSO.  The table presents the percent of 500
          simulation trials where each null hypothesis is rejected
          using the $5\%$ test size and also the median $p$-value
          across all trials.} \label{tb:synthetichypocv}
\end{table}

We also examine the performance of our hypothesis tests under
cross-fitting, again using 500 simulation
trials. Table~\ref{tb:synthetichypocv} presents the rejection rate and
median $p$-value across different sample sizes.  We use $L=5$ fold
cross-fitting for all simulations.  Note that the $n=500$ case under
cross-fitting is analogous in the size of training and testing data to
the $(n_{\text{train}}=400,n_{\text{test}}=100)$ case for sample
splitting.  Similarly, the $n=2500$ case under cross-fitting
corresponds to the $(n_{\text{train}}=2,000,n_{\text{test}}=500)$ case
under sample-splitting.

For both Causal Forest and LASSO, the rejection rate of the
homogeneous treatment effect hypothesis is lower in the $n=500$ case
compared with the corresponding sample-splitting case, reflecting the
additional uncertainty due to the sampling of training data (under
sample-splitting, the scoring rule is regarded as fixed). However,
when the sample size is $n=2,500$, for both algorithms the rejection
rate of homogeneous treatment effects is higher under cross-fitting
than sample-splitting, demonstrating that the efficiency gain of
cross-fitting outweigh its additional sampling uncertainty. For the
hypothesis test of rank-consistency, we find that the rejection rate
under cross-fitting is significantly lower than the nominal test size
for all cases.

\section{An Empirical Application}
\label{sec:empirical}

To demonstrate the applicability of the proposed framework, we utilize
the experimental data from the male sub-sample of the National
Supported Work Demonstration (NSW)
\citep{lalonde1986evaluating,dehejia1999causal}. NSW was a temporary
employment program to help disadvantaged workers by providing them
with work experience and counseling in a sheltered
environment. Specifically, qualified applicants were randomly assigned
to the treatment and control groups, where the workers in the
treatment group were given a guaranteed job for 9 to 18 months. The
primary outcome of interest is the annualized earnings in 1978, 36
months after the program.  The data contains a total of $n=722$
observations, with $n_1=297$ participants assigned to the treatment
group and $n_0=425$ participants in the control group. There are 7
available pre-treatment covariates $\bm{X}$ that records the
demographics and pre-treatment earnings of the participants.

We evaluate Causal Forest, BART, and LASSO under the two settings
considered in this paper.  For sample-splitting, we randomly select
$67\%$ of the data (484 observations) to serve as a training dataset.
We use the remaining 238 samples to estimate the GATES and conduct the
proposed hypothesis tests.  For cross-fitting, we first randomly split
the data into 3 folds, i.e., $L=3$.  We use each fold once as a
testing set, while the remaining two folds are the training set. The
number of trees was tuned through 5-fold cross-validation for BART and
Causal Forest within each training dataset. The regularization
parameter was tuned similarly for LASSO.

\begin{table}[t!]
	\centering\setlength{\tabcolsep}{6pt}
	\small \spacingset{1}
	\begin{tabular}{l|C|C|C|C|C}
		\hline
			& \multicolumn{1}{c|}{$\bm{\hat\tau_1}$} &\multicolumn{1}{c|}{$\bm{\hat\tau_2}$}& \multicolumn{1}{c|}{$\bm{\hat\tau_3}$}&\multicolumn{1}{c|}{$\bm{\hat\tau_4}$}& \multicolumn{1}{c}{$\bm{\hat\tau_5}$}\\\hline
		\multicolumn{1}{l|}{\textbf{Sample-splitting}}  & & & & & \\
		\quad Causal Forest  & 3.40   & 0.13  & -0.85  & -1.91 & 7.21 \\
		 & [-1.29,3.40]  & [-5.37,5.63] & [-5.22, 3.52]&[-5.16,1.34]&[1.22,13.19]\\
		\quad BART  & 2.90  & -0.73  & -0.02 & 3.25 & 2.57 \\
   & [-2.25,8.06]  & [-5.05,3.58] & [-3.47,3.43]&[-1.53,8.03]&[-3.82,8.97]\\
		\quad LASSO  & 1.86  & 2.62  & -2.07 & 1.39 & 4.17 \\
& [-3.59, 7.30]  & [-1.69,6.93] & [-5.39,1.26]&[-2.95,5.73]&[-2.30,10.65]\\
\multicolumn{1}{l|}{\textbf{Cross-fitting}}  & & & & &\\
		\quad Causal Forest  & -3.72 & 1.05  & 5.32  &-2.64& 4.55 \\
		  & [-6.52,-0.93]  & [-2.28,4.37] & [2.63,8.01]&[-5.07,-0.22]&[1.14,7.96]\\
\quad BART  & 0.40  & -0.15  & -0.40 & 2.52 & 2.19\\
& [-3.79,4.59]  & [-2.54,2.23] & [-3.37,2.56]&[-0.99,6.03]&[-0.73,5.11]\\
		\quad LASSO  & 0.65  & 0.45  & -2.88 & 1.32 & 5.02 \\
& [-3.65,4.94]  & [-3.28,4.18] & [-5.38,-0.38]&[-1.83,4.48]&[-0.14,10.18]\\\hline\hline
	\end{tabular}
	\caption{The Estimated GATES and their $95\%$ Confidence
          Intervals based on Causal Forest, BART, and LASSO under
          Sample-splitting and Cross-fitting.  The estimated GATES
          based on quintiles are reported in 1,000 US dollars.
          Sample-splitting is done using 67\% of the sample as the
          training data and 33\% of the sample as the evaluation data.
          For cross-fitting, we use 3 folds of equal
          size.} \label{tb:realgates}
\end{table}

We focus on the quintile GATES ($K=5$). Table~\ref{tb:realgates}
presents the results (reported in 1,000 US dollars) under the
sample-splitting and cross-fitting settings.  We find that Causal
Forest is able to produce statistically significantly positive GATES
for the highest quintile group ($\hat{\tau}_5$) under both
sample-splitting and cross-fitting. Thus, unlike the other two
algorithms, Causal Forest can identify a $20\%$ subset that benefits
significantly from the temporary employment program.

Two additional observations are worth noting.  First, the confidence
intervals are generally narrower in the cross-fitting case compared to
the sample-splitting case.  This finding is consistent with the fact
that cross-fitting is more efficient than sample-splitting.  Second,
the three algorithms failed to produce any statistically significant
positive GATES for the remaining groups.  This may be because there are
few additional workers who benefit from the program.  Alternatively,
it is also possible that such workers exist but the algorithms are
unable to identify them.

\begin{table}[t!]
	\centering\setlength{\tabcolsep}{7pt}
	\small \spacingset{1}
	\begin{tabular}{l|..|..|..}
		\hline
			& \multicolumn{2}{c|}{\textbf{Causal Forest}} &\multicolumn{2}{c|}{\textbf{BART}}&\multicolumn{2}{c}{\textbf{LASSO}}\\
		& \multicolumn{1}{c}{\text{stat}} &  \multicolumn{1}{c|}{\text{$p$-value} }	& \multicolumn{1}{c}{\text{stat}} &  \multicolumn{1}{c|}{\text{$p$-value} }	& \multicolumn{1}{c}{\text{stat}} &  \multicolumn{1}{c}{\text{$p$-value} }	  \\\hline
		\multicolumn{1}{l|}{\textbf{Sample-splitting}}  & & & &  & &\\
		\multicolumn{1}{l|}{Homogeneous Treatment Effects}   &9.78   & 0.082  &2.76  & 0.737 &5.26  & 0.362 \\
		\multicolumn{1}{l|}{Rank-consistent Treatment Effects}  & 3.07   & 0.323 & 1.13 & 0.657 & 3.14 & 0.302\\
		\multicolumn{1}{l|}{\textbf{Cross-fitting}}  & & & & & &  \\
\multicolumn{1}{l|}{Homogeneous Treatment Effects}   &  30.29   & 0.000  & 2.32  & 0.803 & 10.79 & 0.056\\
\multicolumn{1}{l|}{Rank-consistent Treatment Effects}  & 0.06   & 0.691  & 0.04  & 0.885 & 0.45  & 0.711\\
		\hline\hline
	\end{tabular}
	\caption{The Results of the Proposed Hypothesis Tests under
          Sample-splitting and Cross-fitting Using Causal Forest,
          BART, and LASSO.  The values of test statistics and
          $p$-values are presented. We test the null hypotheses of
          treatment effect homogeneity and rank-consistency of the
          GATES.} \label{tb:realtests}
\end{table}

To formally evaluate the statistical significance of several GATES
estimates, we must account for the potential multiple testing problem.
Thus, we apply the proposed hypothesis tests to evaluate the null
hypotheses of treatment effect homogeneity and rank-consistency of the
GATES.  Table~\ref{tb:realtests} presents the resulting values of test
statistics and $p$-values.  We find that under sample-splitting, only
Causal Forest is able to reject the null hypothesis of treatment
effect homogeneity at the 10\% level. However, under cross-fitting,
both Causal Forest and LASSO algorithms can reject the null hypothesis
at the $10\%$ level, with Causal Forest being able to reject the
hypothesis at even the $0.1\%$ level. In contrast, BART fails to
reject the treatment effect homogeneity hypothesis under both
sample-splitting and cross-fitting.  The results with Causal Forest
suggest that the identification of a statistically significant GATES
estimate for one subgroup under cross-fitting is able to grant enough
power to reject the null hypothesis that the average treatment effects
are homogeneous across all subgroups.  Finally, we find that all three
algorithms fail to reject the null hypothesis of the rank-consistency
of GATEs.  Thus, under our conservative tests, there is no strong
statistical evidence that these algorithms are producing unreliable
GATES.

\section{Concluding Remarks}
\label{sec:conclude}

Many randomized experiments have a limited sample size and the
resulting treatment effect estimates are often small and noisy.  Yet,
applied researchers often use machine learning algorithms to estimate
heterogeneous treatment effects.  Therefore, it is important to
statistically distinguish signal from noise.  We have developed the
framework that does not impose a strong assumption on machine learning
algorithms and hence is applicable to a wide range of situations.  The
proposed methodology allows researchers to construct confidence
intervals on the estimated average treatment effects within a group
identified by any machine learning algorithm.  We also show how to
conduct formal hypothesis tests about heterogeneous treatment effects.
Our method solely relies upon the randomization of treatment
assignment and the random sampling of units, and hence yields reliable
statistical inference even when the sample size is relatively small
and machine learning algorithms are not performing well.

\bigskip
\pdfbookmark[1]{References}{References}
\bibliography{sample,my,imai}

\clearpage
\appendix
\spacingset{1}

\setcounter{table}{0}
\renewcommand{\thetable}{S\arabic{table}}
\setcounter{figure}{0}
\renewcommand{\thefigure}{S\arabic{figure}}
\setcounter{equation}{0}
\renewcommand{\theequation}{S\arabic{equation}}
\setcounter{theorem}{0}
\renewcommand {\thetheorem} {S\arabic{theorem}}
\setcounter{section}{0}
\renewcommand {\thesection} {S\arabic{section}}
\setcounter{lemma}{0}
\renewcommand {\thelemma} {S\arabic{lemma}}

\begin{center}
\LARGE {\bf Supplementary Appendix}
\end{center}

\section{Proof of Theorem \ref{thm:GATEest}}
\label{app:GATEest}


We first rewrite the expectation of the proposed estimator in
Equation~\eqref{eq:gCATEest} as,
\begin{equation*}
	\E(\hat \tau_k) \ = \ K \E \l\{Y_i\l(f^\ast(\bX_i,\hat{c}_k(s))\r) - Y_i\l(f^\ast(\bX_i,\hat{c}_{k-1}(s))\r)\r\},
\end{equation*}
where $f^\ast(\bX_i,c)=\bone\{s(\bX_i) < c\}$.  Similarly, we can also
write the estimand in Equation~\eqref{eq:gCATE} as,
\begin{equation*}
	\tau_k \ = \ K \E\l\{Y_i\l(f^\ast(\bX_i,c_k(s))\r) - Y_i\l(f^\ast(\bX_i,c_{k-1}(s))\r)\r\}.
\end{equation*}
Now, define $F(c)=\mc{P}(s(\bX_i)\leq c)$. Without loss of generality,
assume $\hat{c}_k(s) > c_k(s)$ and $\hat{c}_{k-1}(s) > c_{k-1}(s)$.
If this is not the case, we simply switch the upper and lower limits
of the integrals in the proof below.  Then, the bias of the estimator
is given by,
\begin{eqnarray*}
  & & \frac{\l|\E(\hat \tau_k)-\tau_k\r|}{K} \\
  &\leq & \l|\E
                                            \l\{Y_i\l(f^\ast(\bX_i,\hat{c}_k(s))\r)-Y_i\l(f(\bX_i,c_k(s))\r)\r\}\r| +  \l|\E\l\{Y_i\l(f^\ast(\bX_i,\hat{c}_{k-1}(s))\r)-Y_i\l(f^\ast(\bX_i,c_{k-1}(s))\r)\r\}\r|\\
	& = & \l|\E_{\hat{c}_k(s)}\l[\int^{\hat{c}_k(s)}_{c_k(s)}
	\E(\tau_i \mid s(\bX_i)=c) \d F(c)\r]\r| +  \l|\E_{\hat{c}_{k-1}(s)}\l[\int^{\hat{c}_{k-1}(s)}_{c_{k-1}(s)}
\E(\tau_i \mid s(\bX_i)=c) \d F(c)\r]\r|\\
	& = & \l|\E_{F(\hat{c}_k(s))}\l[\int^{F(\hat{c}_k(s))}_{F(c_k(s))}
	\E(\tau_i\mid s(\bX_i)=F^{-1}(x)) \d x\r]\r|\\
	&  & + \l|\E_{F(\hat{c}_{k-1}(s))}\l[\int^{F(\hat{c}_{k-1}(s))}_{F(c_{k-1}(s))}
\E(\tau_i\mid s(\bX_i)=F^{-1}(x)) \d x\r]\r|\\
	& \leq & \E_{F(\hat{c}_k(s))}\l[\l|F(\hat{c}_k(s))-\frac{k}{K}\r| \times \max_{c \in
		[c_k(s),\hat{c}_k(s)]} \l | \E(\tau_i\mid s(\bX_i)=c)\r|\r]\\&  & +\E_{F(\hat{c}_{k-1}(s))}\l[\l|F(\hat{c}_{k-1}(s))-\frac{k-1}{K}\r| \times \max_{c \in
		[c_{k-1}(s),\hat{c}_{k-1}(s)]} \l | \E(\tau_i\mid s(\bX_i)=c)\r|\r]
\end{eqnarray*}
By the definition of $\hat{c}_k(s)$, $F(\hat{c}_k(s))$ is the $nk/K$th
order statistic of $n$ independent uniform random variables, and thus
follows the Beta distribution with the shape and scale parameters
equal to $nk/K$ and $n-nk/K+1$, respectively. For the special case
where $k-1=0$, we define the $0$th order statistic of $n$ uniform
random variables to be 0, and by extension also define the ``beta
distribution'' with shape parameter $\leq 0$ to be $H(x)$ where $H(x)$
is the Heaviside step function. Therefore, we have,
\begin{equation*}
	\mc{P}\l(|F(\hat{c}_k(s))-\frac{k}{K}|>\epsilon\r) \ = \
        1-B\l(\frac{k}{K}+\epsilon, \frac{nk}{K},
        n-\frac{nk}{K}+1\r)+B\l(\frac{k}{K}-\epsilon, \frac{nk}{K},
        n-\frac{nk}{K}+1\r),
\end{equation*}
where
$B(\epsilon,\alpha,\beta) \ = \ \int^\epsilon_0 t^{\alpha -1}
(1-t)^{\beta -1} \d t$ is the incomplete beta function.  Similarly, we
have
\begin{eqnarray*}
 \mc{P}(|F(\hat{c}_{k-1}(s))-\frac{k-1}{K}|>\epsilon)
  & = & 1-B\l(\frac{k-1}{K}+\epsilon, \frac{n(k-1)}{K},
        n-\frac{n(k-1)}{K}+1\r)\\
  & & +B\l(\frac{(k-1)}{K}-\epsilon, \frac{n(k-1)}{K}, n-\frac{n(k-1)}{K}+1\r).
\end{eqnarray*}
Combining the above results yields the desired bias bound expression.

To derive the exact variance, we first apply the law of total variance
to Equation~\eqref{eq:gCATEest},
\begin{align}
	\V(\hat\tau_k) \  = & \
                           \V\left[\E\left\{K\left(\frac{1}{n_1}\sum_{i=1}^n
                           \hat{f}_k(\bX_i)T_i Y_i(1)-\frac{1}{n_0}\sum_{i=1}^n\hat{f}_k(\bX_i)(1-T_i) Y_i(0)
                           \right)\biggm\vert \bX,
                           \{Y_i(1), Y_i(0)\}_{i=1}^n \right\}\right]\nonumber\\
                         &+\E\left[\V\left\{K\left(\frac{1}{n_1}\sum_{i=1}^n
                          \hat{f}_k(\bX_i)T_i Y_i(1)-\frac{1}{n_0}\sum_{i=1}^n \hat{f}_k(\bX_i)(1-T_i)Y_i(0)
                           \right)\biggm\vert \bX, \{Y_i(1), Y_i(0)\}_{i=1}^n  \right\}\right]\nonumber
  \\
	= & \ K^2\V\l(\frac{1}{n} \sum_{i=1}^n \{Y_{ki}(1) -
            Y_{ki}(0)\}\r) \nonumber \\
  & +K^2\E\left[\V\left\{\frac{1}{n_1}\sum_{i=1}^n \hat{f}_k(\bX_i)T_iY_i(1)-\frac{1}{n_0}\sum_{i=1}^n\hat{f}_k(\bX_i)(1-T_i)Y_i(0) \biggm\vert \bX, \{Y_i(1), Y_i(0)\}_{i=1}^n\right\}\right]. \label{eq:var_decomp}
\end{align}
Applying the standard result from Neyman's finite sample variance
analysis to the second term shows that this term is equal to,
\begin{align}
  K^2\E\left\{\frac{1}{n} \l( \frac{n_0}{n_1} S_{k1}^2 + \frac{n_1}{n_0} S_{k0}^2
+ 2 S_{k01}\r)\r\}.\label{eq:var2_decomp}
\end{align}
where
$S_{k01} = \sum_{i=1}^n (Y_{ki}(0) - \overline{Y_{k}(0)})(Y_{ki}(1) -
\overline{Y_{k}(1)})/(n-1)$.  Since $Y_{ki}(t)$ and $Y_{kj}(t)$ are
correlated, we apply Lemma~1 of \citet{nadeau2000inference} to the
first term, yielding,
\begin{align}
	& \V\l(\frac{1}{n} \sum_{i=1}^n \{Y_{ki}(1) - Y_{ki}(0)\}\r)
   \nonumber \\
	\ = \ & {\rm Cov}(Y_{ki}(1) -Y_{ki}(0), Y_{kj}(1) -
Y_{kj}(0)) + \frac{1}{n}   \E(S_{k1}^2  + S_{k0}^2 - 2 S_{k01}), \label{eq:var1_decomp}
\end{align}
for $i \ne j$ where
\begin{align*}
	&{\rm Cov}(Y_{ki}(1) -Y_{ki}(0), Y_{kj}(1) -
	Y_{kj}(0))\\
	= \ &  {\rm Cov}\left(\hat{f}_k(\bX_i)\tau_i, \hat{f}_k(\bX_j)\tau_j\right) \nonumber\\
	= \ &  \Pr(\hat{f}_k(\bX_i)=\hat{f}_k(\bX_j)=1)\E[\tau_i\tau_j \mid \hat{f}_k(\bX_i)=\hat{f}_k(\bX_j)=1] - \Pr(\hat{f}_k(\bX_i)=1)^2\E[\tau_i\mid \hat{f}_k(\bX_i)=1]^2 \nonumber\\
	= \ & \frac{n
		-K}{K^2(n-1)}\E[\tau_i\tau_j \mid \hat{f}_k(\bX_i)=\hat{f}_k(\bX_j)=1]
	-  \frac{1}{K^2}\E[\tau_i\mid
	\hat{f}_k(\bX_i)=1]^2\nonumber\\
	= \ &  \frac{(n-K)\kappa_{k11}}{K^2(n-1)} -	\frac{\kappa_{k1}^2}{K^2}
\end{align*}
Substituting
Equations~\eqref{eq:var2_decomp}~and~\eqref{eq:var1_decomp} into
Equation \ref{eq:var_decomp}, we obtain the desired variance
expression.  \qed

\section{Derivation of $\hat{\kappa}_{ktt}$}
\label{app:kappa}

We first rewrite $\kappa_{ktt}$ as:
\begin{equation*}
\kappa_{ktt}= \sum_{u,v \in \{0,1\}}(-1)^{u+v}\E[Y_i(u)Y_j(v) \mid
\hat{f}_k(\bX_i)=\hat{f}_k(\bX_j)=t].
\end{equation*}
We can estimate each conditional expectation term inside of the
summation using its sample analogue:
\begin{align*}
\frac{\sum_{i=1}^n\sum_{j \neq i}   \mathbf{1}\{\hat{f}_k(\bX_i) = \hat{f}_k(\bX_j) =  t\}\{1-u + (2u-1)T_i\}\{1-v + (2v-1)T_j\}Y_iY_j}{\sum_{i=1}^n\sum_{j \neq i}   \mathbf{1}\{\hat{f}_k(\bX_i) = \hat{f}_k(\bX_j) =  t\}\{1-u + (2u-1)T_i\}\{1-v + (2v-1)T_j\}}
\end{align*}
We can further simplify the computation by rewriting the numerator as:
\begin{align*}	
        &\left[\sum_{i=1}^n   \mathbf{1}\{\hat{f}_k(\bX_i) =  t\}\{1-u
          + (2u-1)T_i\}Y_i\right]
          \left[\sum_{i=1}^n   \mathbf{1}\{\hat{f}_k(\bX_i) =  t\}\{1-v
          + (2v-1)T_i\}Y_i\right]\\
  -&\sum_{i=1}^n   \mathbf{1}\{\hat{f}_k(\bX_i) =  t\}\{1-u + (2u-1)T_i\}\{1-v + (2v-1)T_i\}Y_i^2.
\end{align*}
Similarly, we can rewrite the denominator as follows:
\begin{align*}
        &\left[\sum_{i=1}^n   \mathbf{1}\{\hat{f}_k(\bX_i) =  t\}(1-u
          + (2u-1)T_i)\right]
          \left[\sum_{i=1}^n   \mathbf{1}\{\hat{f}_k(\bX_i) =  t\}
          \{1-v + (2v-1)T_i\}\right]\\-&\sum_{i=1}^n
                                         \mathbf{1}\{\hat{f}_k(\bX_i)
                                         =  t\} \{1-u + (2u-1)T_i\}\{1-v + (2v-1)T_i\}.
\end{align*}
Putting these terms together, we obtain the expression of
$\hat\kappa_{ktt}$ given in Section~\ref{subsec:GATE}.

\section{Proof of Theorem \ref{thm:GATEsamp}}
\label{app:GATEsamp}

Given a tuple of $n$ samples $\{Y_i,T_i,\bX_i\}_{i=1}^n$, we first
reorder the sample to $(Y_{[i,n]},T_{[i,n]},\bX_{[i,n]})$ based on the
magnitude of the scoring rule, such that
\[s(\bX_{[1,n]})\leq s(\bX_{[2,n]})\leq \cdots \leq s(\bX_{[n,n]})\]
Then, the proposed GATES estimator can be rewritten as
\begin{equation}
	\hat{\tau}_k = \frac{1}{n}\sum_{i=1}^n \mathbf{1}\left\{\frac{(k-1)n}{K} < i \leq \frac{kn}{K}\right\} U_{[i,n]}
\end{equation}
where
\begin{align}
	U_{[i,n]}& :=KY_{[i,n]}\left(\frac{T_{[i,n]}}{q}-\frac{1-T_{[i,n]}}{1-q}\right),\label{eq:unitdef}
\end{align}
where $q=n_1/n$.  Now, we prove the following two lemmas.

\begin{lemma}\label{lem:conditional_converge}
  Let $(X_1,Y_1), (X_2,Y_2), \cdots$ be a sequence of random vectors.
  For each $n\geq 1$, $(X_1,Y_1),\cdots, (X_n,Y_n)$ possesses a joint
  distribution. Let $\bZ_n=((X_1,Y_1),\cdots, (X_n, Y_n))$ and
  $\bX_n=(X_1,\cdots, X_n)$, and let $W_n(\bZ_n)$ and $S_n(\bX_n)$ be
  measurable vector-valued functions of $\bZ_n$ and $\bX_n$
  respectively. Suppose $S_n(\bX_n)$ converges in distribution to
  $F_S$ and the conditional distribution $W_n(\bZ_n) \mid \bX_n$
  converges in distribution to $F_W$ in probability, where $F_W$ does not depend on
  $\bX_n$. Then, we have that:
	\[(W_n(\bZ_n), S_n(\bX_n)) \to F_WF_S\]
\end{lemma}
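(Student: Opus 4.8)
The plan is to establish the joint convergence through characteristic functions and the multivariate L\'evy continuity theorem. Writing $\phi_W(t) = \int e^{i t^\top w}\, dF_W(w)$ and $\phi_S(u) = \int e^{i u^\top s}\, dF_S(s)$, it suffices to show that for every fixed pair of vectors $(t,u)$,
\[
\E\l[e^{i t^\top W_n(\bZ_n) + i u^\top S_n(\bX_n)}\r] \ \longrightarrow \ \phi_W(t)\,\phi_S(u),
\]
since the right-hand side is precisely the characteristic function of the product law $F_W \otimes F_S$, and its factorized form is exactly the assertion that the limiting joint law makes the $W$- and $S$-components independent.

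First I would condition on $\bX_n$. Because $S_n(\bX_n)$ is measurable with respect to $\bX_n$, the factor $e^{i u^\top S_n}$ can be pulled outside the inner conditional expectation, giving
\[
\E\l[e^{i t^\top W_n + i u^\top S_n}\r] \ = \ \E\l[e^{i u^\top S_n}\,\phi_n(t;\bX_n)\r], \qquad \phi_n(t;\bX_n) := \E\l[e^{i t^\top W_n}\ \bigm|\ \bX_n\r].
\]
I would then split this into a main term and an error term,
\[
\E\l[e^{i u^\top S_n}\,\phi_n(t;\bX_n)\r] \ = \ \phi_W(t)\,\E\l[e^{i u^\top S_n}\r] \ + \ \E\l[e^{i u^\top S_n}\l(\phi_n(t;\bX_n) - \phi_W(t)\r)\r].
\]
The main term converges to $\phi_W(t)\phi_S(u)$, because $s \mapsto e^{i u^\top s}$ is bounded and continuous, so the marginal hypothesis $S_n(\bX_n) \stackrel{d}{\to} F_S$ yields $\E[e^{i u^\top S_n}] \to \phi_S(u)$.

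For the error term I would invoke the hypothesis on the conditional law. Interpreting the assumption that the conditional distribution of $W_n$ given $\bX_n$ converges to $F_W$ (with a limit not depending on $\bX_n$) as the statement that the random conditional characteristic function satisfies $\phi_n(t;\bX_n) \to \phi_W(t)$ in probability for each fixed $t$, with deterministic limit $\phi_W(t)$, I would bound the error term in modulus by $\E|\phi_n(t;\bX_n) - \phi_W(t)|$, using $|e^{i u^\top S_n}| = 1$. Since conditional characteristic functions are uniformly bounded by $1$, the integrand $|\phi_n(t;\bX_n) - \phi_W(t)|$ is dominated by $2$, so the bounded convergence theorem upgrades the convergence in probability to convergence in $L^1$, and the error term vanishes. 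Combining the two terms yields the factorized limit, and an application of L\'evy's continuity theorem then delivers $(W_n, S_n) \stackrel{d}{\to} F_W \otimes F_S$.

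The main obstacle is making the conditional-convergence hypothesis precise and handling it rigorously. The delicate point is that the conditional law of $W_n$ is itself a random object depending on the realized $\bX_n$, so ``convergence in distribution of the conditional law'' must be read at the level of the random conditional characteristic functions; it is essential that the limit $\phi_W$ does not depend on $\bX_n$, which is exactly what makes $\phi_W(t)$ a constant that factors out and what allows the bounded-convergence step to close. If the stated hypothesis is read instead as almost-sure convergence of the conditional law, the same argument runs verbatim with bounded convergence replaced by dominated convergence along the almost-sure limit; in either reading, the uniform bound of characteristic functions by $1$ supplies the required domination.
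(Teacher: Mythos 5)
Your proof is correct and takes essentially the same route as the paper's: both condition on $\bX_n$ to factor $e^{iu^\top S_n}$ out of the joint characteristic function, split off the intermediate term $\phi_W(t)\,\E[e^{iu^\top S_n}]$, handle the main term via weak convergence of $S_n$ and the error term via the uniform bound $|\phi_n(t;\bX_n)-\phi_W(t)|\le 2$ together with the conditional-convergence hypothesis, and conclude by L\'evy continuity. Your write-up is in fact slightly more careful than the paper's at the final step, where you make explicit the bounded-convergence argument (and the in-probability versus almost-sure reading of the hypothesis) that the paper leaves implicit.
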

\begin{proof}
  The characteristic function of the joint distribution of
  $(W_n(\bZ_n), S_n(\bX_n))$ can be written as:
  \[\varphi_{W_n S_n}(t_1,t_2)=\E[\exp\{i(t_1W_n(\bZ_n)+t_2S_n(\bX_n))\}]\]
  Let $W \sim F_W$ and $S \sim F_S$. Then the characteristic function
  of $F_WF_S$ can be written as:
  \[\varphi_{W S}(t_1,t_2)=\E[\exp\{i(t_1W)\}]\E[\exp\{i(t_2S)\}]\]
  We then have:
  \begin{align*}
    &|\varphi_{W_n S_n}(t_1,t_2)-\varphi_{W
      S}(t_1,t_2)|\\
    = \ &|\E[\exp\{i(t_1W_n(\bZ_n)+t_2S_n(\bX_n))\}]-\E[\exp\{i(t_1W)\}]\E[\exp\{i(t_2S)\}]| \\
    \leq \ & |\E[\E[\exp\{i(t_1W_n(\bZ_n)+t_2S_n(\bX_n))\}\mid \bX_n
             ]]-\E[\exp\{i(t_1W)\}]\E[\exp\{i(t_2S_n(\bX_n))\}]|\\
    &
      +|\E[\exp\{i(t_1W)\}]\E[\exp\{i(t_2S_n(\bX_n))\}]-\E[\exp\{i(t_1W)\}]\E[\exp\{i(t_2S)\}]| \\
        \leq &  \E[|\E[\exp(it_1W_n(\bZ_n)) \mid \bX_n] - \E[\exp(it_1W)]|]+|\E[\exp\{i(t_2S_n(\bX_n))\}]-\E[\exp\{i(t_2S)\}]|,
  \end{align*}
  where the last inequality follows from the fact that all
  characteristic functions satisfy $|\varphi|\leq 1$.  This expression
  converges to zero in probability due to the convergence of $S_n(\bX_n)$ and
  $W_n(\bZ_n) \mid \bX_n$ respectively. Therefore, we have:
	\[(W_n(\bZ_n), S_n(\bX_n)) \to F_WF_S\]
\end{proof}

\begin{lemma}\label{lem:mean}
	$\displaystyle \lim_{n \to \infty} \E(\hat{\tau}_k)-\tau_k=O\left(n^{-1}\right)$
\end{lemma}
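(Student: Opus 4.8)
The plan is to start from the exact, \emph{signed} bias representation already obtained inside the proof of Theorem~\ref{thm:GATEest}. After the change of variables $x=F(c)$ carried out there, and writing $g(x):=\E(Y_i(1)-Y_i(0)\mid s(\bX_i)=F^{-1}(x))$ and $P_k:=F(\hat{c}_k(s))$, the bias is
\begin{equation*}
	\frac{\E(\hat\tau_k)-\tau_k}{K} \ = \ \E\left[\int_{k/K}^{P_k} g(x)\,\d x\right]-\E\left[\int_{(k-1)/K}^{P_{k-1}} g(x)\,\d x\right].
\end{equation*}
The point on which everything hinges is that, unlike in the proof of Theorem~\ref{thm:GATEest} where each term was bounded in absolute value (yielding only an $O(n^{-1/2})$ rate), here I would keep the signs and exploit the near-unbiasedness of the order statistic. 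Recall $P_k$ is the $(nk/K)$-th order statistic of $n$ i.i.d.\ uniforms, so $P_k\sim\mathrm{Beta}(nk/K,\,n-nk/K+1)$, with known moments $\mu_k:=\E(P_k)=\tfrac{nk/K}{n+1}$ and $\V(P_k)=\tfrac{(nk/K)(n-nk/K+1)}{(n+1)^2(n+2)}=O(n^{-1})$.

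First I would analyze a single term $\E[h(P_k)]$ with $h(p):=\int_{k/K}^p g(x)\,\d x$, expanding around the exact mean $\mu_k$:
\begin{equation*}
	\E[h(P_k)] \ = \ h(\mu_k) \ + \ g(\mu_k)\,\E(P_k-\mu_k) \ + \ \E\!\left[\int_{\mu_k}^{P_k}\{g(x)-g(\mu_k)\}\,\d x\right].
\end{equation*}
The middle term vanishes identically because $\E(P_k-\mu_k)=0$, which is precisely the cancellation that the absolute-value bound in Theorem~\ref{thm:GATEest} discards. For the leading term, $h(\mu_k)=\int_{k/K}^{\mu_k} g$, and since $|\mu_k-k/K|=\tfrac{k/K}{n+1}=O(n^{-1})$ while $g$ is locally bounded near the threshold (a consequence of Assumption~\ref{asm:continuity}), this contributes $O(n^{-1})$. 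The identical argument applies to the $(k-1)$-th term, so the whole problem reduces to bounding the two remainder integrals $R_k:=\E[\int_{\mu_k}^{P_k}\{g(x)-g(\mu_k)\}\,\d x]$.

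Controlling $R_k$ at the $O(n^{-1})$ rate is the main obstacle. The clean route is a second-order (delta-method) expansion: if $g$ is differentiable at the threshold, then $\int_{\mu_k}^{P_k}\{g(x)-g(\mu_k)\}\,\d x=\tfrac12 g'(\mu_k)(P_k-\mu_k)^2+o((P_k-\mu_k)^2)$, and taking expectations gives $R_k=\tfrac12 g'(\mu_k)\,\V(P_k)+o(n^{-1})=O(n^{-1})$ from the variance bound above. Under the weaker hypothesis of Assumption~\ref{asm:continuity} I would instead rewrite $R_k$ by Fubini as $\int\{g(x)-g(\mu_k)\}\,w_n(x)\,\d x$ with a signed weight $w_n$ built from $\Pr(P_k\ge x)$ and $\Pr(P_k\le x)$ that is concentrated on an $O(n^{-1/2})$ window around the threshold, and then integrate by parts against the bounded-variation measure $\d g$ (whose variation measure carries no atom at $k/K$, by continuity there) to transfer the estimate onto $\E[(P_k-x)_+]$ near the threshold. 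The delicacy is that the achievable rate tracks the modulus of continuity of $g$ at the threshold, so the genuinely hard step is to squeeze this remainder down to $O(n^{-1})$ rather than the $o(n^{-1/2})$ that bounded variation plus continuity alone deliver; I note that the latter is in fact all that the central limit theorem in Theorem~\ref{thm:GATEsamp} requires, since $\V(\hat\tau_k)=O(n^{-1})$ makes $o(n^{-1/2})$ bias asymptotically negligible. Summing the bounded leading term and the remainder over the two cutoffs then yields the claimed rate.
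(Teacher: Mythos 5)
Your proposal follows essentially the same route as the paper's own proof of Lemma~\ref{lem:mean}: the signed integral representation of the bias (via Theorem~1 of Imai and Li, 2019), the $\mathrm{Beta}(nk/K,\ n-nk/K+1)$ law of $P_k=F(\hat{c}_k(s))$, and a first-order expansion of $h(p)=\int_{k/K}^{p}g(x)\,\d x$ near the threshold, where $g(x)=\E(Y_i(1)-Y_i(0)\mid s(\bX_i)=F^{-1}(x))$. The only structural difference is the centering: the paper expands around $k/K$, so its linear term is $g(k/K)\,\E(P_k-k/K)=O(n^{-1})$ because the Beta mean $\mu_k=\tfrac{nk/K}{n+1}$ is within $O(n^{-1})$ of $k/K$; you expand around $\mu_k$, so the linear term vanishes identically and the $O(n^{-1})$ is carried by $h(\mu_k)$. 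These are equivalent pieces of bookkeeping.

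Where your write-up genuinely departs is in the remainder, and there your more careful accounting exposes a looseness in the paper itself rather than a gap in your argument. The paper writes the Taylor remainder as $o(F(\hat{c}_k(s))-F(c_k(s)))$ and, after taking expectations, records its contribution as $o(n^{-1})$. But Assumption~\ref{asm:continuity} gives only continuity of $g$ at the threshold, hence a pointwise remainder $o(|x-k/K|)$; since $\E|P_k-k/K|=O(n^{-1/2})$, its expectation can only be asserted to be $o(n^{-1/2})$ --- one cannot move the expectation inside the little-$o$. This is exactly the obstruction you identify, and your dichotomy is the correct one: differentiability (or a Lipschitz modulus) of $g$ at the threshold upgrades the remainder to $O(\V(P_k))=O(n^{-1})$ via the second-order expansion, whereas under the paper's stated assumption one should settle for $o(n^{-1/2})$. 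As you also note, the weaker rate is harmless for every downstream use: in the proof of Theorem~\ref{thm:GATEsamp} the bias enters only through $\sqrt{n}\{\E(\hat{\tau}_k)-\tau_k\}$, and $\sqrt{n}\cdot o(n^{-1/2})\to 0$, so the asymptotic normality claim is unaffected. In short, your proposal is not missing a step that the paper supplies; it makes explicit that the stated $O(n^{-1})$ rate needs slightly more smoothness than Assumption~\ref{asm:continuity} provides, while proving the part of the lemma that is actually used.
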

\begin{proof}
  We bound the bias of $\E(\hat{\tau}_k)$ by appealing to Theorem~1 of
  \cite{imai2019experimental}, which implies,
  \begin{align}
		|\E(\hat{\tau}_k)-\tau_k| \leq &
                                                 \left|K\E\left[\int_{F(c_{k}(s))}^{F(\hat{c}_{k}(s))}
                                                 \E(Y_i(1)-Y_i(0) \mid
                                                 s(\bX_i)=F^{-1}(x))\d
                                                 x\right]\right|\nonumber\\&+
    \left|K\E\left[\int_{F(c_{k-1}(s))}^{F(\hat{c}_{k-1}(s))}
    \E(Y_i(1)-Y_i(0) \mid s(\bX_i)=F^{-1}(x))\d x\right]\right|. \label{eq:boundtauk}
\end{align}
By the definition of $\hat{c}_{k}(s)$, $F(\hat{c}_k(s))$ is the
$nk/K$th order statistic of $n$ independent uniform random variables,
and therefore, follows the Beta distribution with the shape and scale
parameters equal to $nk/K$ and $n-nk/K+1$, respectively.

Now, by Assumption~\ref{asm:continuity}, we can compute the
first-order Taylor expansion of
$\int^x_a \E(Y_i(1)-Y_i(0) \mid s(\bX_i)=F^{-1}(x)) \d x$:
\begin{align*}
|\E(\hat{\tau}_k)-\tau_k| \leq &
\left|K\E\left[a_0\{F(\hat{c}_{k}(s))-F(c_{k}(s))\}+o(F(\hat{c}_{k}(s))-F(c_{k}(s)))\right]\right|\\&+
\left|K\E\left[a_1\{F(\hat{c}_{k-1}(s))-F(c_{k-1}(s))\}+o(F(\hat{c}_{k-1}(s))-F(c_{k-1}(s)))\right]\right| \\
                               = & |Ka_0|\left|\frac{nk}{K(n+1)}-\frac{k}{K}\right| +
|Ka_1|\left|\frac{n(k-1)}{K(n+1)}-\frac{k-1}{K}\right|+o\left(n^{-1}\right)\\
  = & O\left(n^{-1}\right).
\end{align*}
\end{proof}

Now, using these two lemmas, we prove the main result.  Letting
$u(s)=\E[U_i \mid s(\bX_i)=s]$, we decompose $\hat\tau_k$ into two
parts,
\begin{align}
	\hat{\tau}_k  
  \ = \ &\underbrace{\frac{1}{n} \sum_{\frac{(k-1)n}{K} < i \leq \frac{kn}{K}} U_{[i,n]} - u(s(\bX_{[i,n]}))}_{\hat{\tau}^{(1)}_k} + \underbrace{\frac{1}{n} \sum_{i=1}^n \mathbf{1}\left\{\frac{(k-1)n}{K} < i \leq \frac{kn}{K}\right\} u(s(\bX_{[i,n]}))}_{\hat{\tau}^{(2)}_k}.\label{eq:term_separate}
\end{align}

Consider the first term.  By the general theory of induced order
statistics presented in \cite{bhattacharya1974convergence},
$U_{[i,n]} - u(s(\bX_{[i,n]}))$ for $i=1,\cdots,n$ are independent of
one another conditional on $\bX_n= (\bX_{[1,n]},\cdots,
\bX_{[n,n]})$. Define the random variables $Z_{[i,n]}$ as distributed
according to the joint conditional distribution
$U_{[i,n]} - u(s(\bX_{[i,n]})) \mid \bX_n$. Then, we have
\[\hat{\tau}^{(1)}_k \ = \  \frac{1}{n}
  \sum_{\frac{(k-1)n}{K} < i \leq \frac{kn}{K}} Z_{[i,n]},\] where
$Z_{[i,n]}$ are conditionally independent and $\E[Z_{[i,n]}]=0$ by
construction. Therefore, by Assumption~\ref{asm:moments}, we can
utilize the Berry-Esseen Theorem. Define:
\begin{align*}
	\sigma_1^2(n) &=  \frac{1}{n}\sum_{\frac{(k-1)n}{K}<i \leq \frac{kn}{K}} \V(Z_{[i,n]})\\
	\rho_1(n) &=  \frac{1}{n}\sum_{\frac{(k-1)n}{K}<i \leq \frac{kn}{K}} \E(|Z_{[i,n]}|^3)
\end{align*}
 Then the Berry-Esseen Theorem states that for $W \sim N(0,1)$, we have:
\begin{equation*}
	d\left(\frac{\sqrt{n}\hat{\tau}^{(1)}_k}{\sqrt{\sigma_1^2(n)}}, W\right) \leq \frac{C_0}{\sqrt{n}}\left(\sigma_1^2(n)\right)^{-3/2}\rho_1(n)
\end{equation*}
where $d(\cdot,\cdot)$ is the Kolmogorov distance. Now define the asymptotic variance and third moment by:
\begin{align*}
	\sigma_1^2 &= \lim_{n \to \infty} \sigma_1^2(n)\\
	\rho_1 &= \lim_{n \to \infty} \rho_1(n)
\end{align*}
Both quantities exist by the strong law of large numbers for functions
of order statistics (see Theorem~4 of
\citet{wellner1977glivenko}). Specifically, by the strong law,
$\sigma_1^2$ and $\rho_1$ does not depend on $\bX_n$ for all but at most a measure
zero set of $\bX_n$. Therefore, the Berry-Esseen theorem implies that:
\begin{equation}
  \sqrt{n}\hat{\tau}_k^{(1)} \mid \bX_n \xrightarrow{d} N(0,\sigma_1^2)\;\;\text{with probability 1}\label{eq:part_1}
\end{equation}

Next, consider the second term of Equation~\eqref{eq:term_separate}.
To prove the convergence of this summation of a function of order
statistics, we utilize Theorem~1 and Example~1 from
\cite{shorack1972functions}, which we restate in our notation below:
\begin{theorem}[\cite{shorack1972functions}]\label{thm:ordstat_clt}
  Consider an independently and identically distributed random sample
  $X_1,\cdots, X_n$ of size $n$ from a cumulative distribution
  function $F$, and a function of bounded variation $g$ such that
  $\E[g(X)^3]<\infty$. Define:
  \[T_n = \frac{1}{n} \sum_{i=1}^n J\l(\frac{i}{n}\r)g(X_{[i,n]})\]
  where $X_{[i,n]}$ is the $i$th order statistics of the sample, and $J$
  is a function that is continuous except at a finite number of points
  at which $g(F^{-1})$ is continuous. Suppose that there exists
  $\delta>0$ such that: 
  \[|J(t)| \leq M(t(1-t))^{-\frac{1}{6} +\delta} \;\; \forall 0<t<1\]
  Then, we have:
  \[\sqrt{n}(T_n -\E[T_n]) \xrightarrow{d} N(0,\sigma^2)\]
  where $\sigma^2=\lim_{n \to \infty} n\V(T_n)<\infty$.
\end{theorem}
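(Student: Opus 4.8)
The plan is to recognize this as a classical central limit theorem for a linear combination of order statistics (an L-statistic) and to prove it along the standard route: reduce to uniform order statistics, linearize the statistic as a smooth functional of the empirical process, apply the ordinary i.i.d.\ CLT to the linear part, and show the remainder is negligible after scaling by $\sqrt{n}$. This is exactly the argument underlying Shorack's Theorem~1 and Example~1, which the statement restates, so I would reconstruct that proof.

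First I would apply the probability integral transform. Setting $U_i = F(X_i)$, the $U_i$ are i.i.d.\ uniform on $(0,1)$, and $X_{[i,n]} = F^{-1}(U_{[i,n]})$, so $g(X_{[i,n]}) = h(U_{[i,n]})$ with $h := g \circ F^{-1}$. Since $g$ is of bounded variation and $F^{-1}$ is nondecreasing, $h$ is of bounded variation on $(0,1)$, and the moment condition $\E[g(X)^3] < \infty$ becomes $\int_0^1 |h(u)|^3 \, \d u < \infty$, reducing everything to the uniform case. Writing $G_n$ for the empirical distribution function of the $U_i$ and noting $i/n = G_n(U_{[i,n]})$ at the order statistics, I would represent the statistic as
\[ T_n = \int_0^1 J(G_n(u))\, h(u) \, \d G_n(u), \]
with population functional $T(G) = \int_0^1 J(u)\, h(u)\,\d u$ at the uniform $G$. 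I would then linearize $T(\cdot)$ by a von Mises / functional-delta expansion around $G$: writing $\Gamma_n(t) = \sqrt{n}(G_n(t) - t)$ for the uniform empirical process and integrating by parts, the leading term of $\sqrt{n}(T_n - \E[T_n])$ is a linear functional of $\Gamma_n$ that can be rewritten as a normalized sum $n^{-1/2}\sum_i \psi(U_i)$ of i.i.d.\ mean-zero terms, where the influence function $\psi$ is built from $J$ and $h$. The hypothesis that $J$ is continuous at each discontinuity of $g\circ F^{-1}$ is precisely what makes this integration by parts well defined.

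The hard part will be controlling the remainder of the linearization. The scaled remainder $\sqrt{n}\,R_n$ is a quadratic-type term in $G_n - G$ integrated against $J\,\d h$, and I would show it is $o_p(1)$ using weighted empirical-process bounds of Chibisov--O'Reilly / Shorack type: near the endpoints $\Gamma_n(t)$ is of order $\sqrt{t(1-t)}$ in a suitable weighted sup norm, and the growth condition $|J(t)| \le M(t(1-t))^{-1/6+\delta}$ is exactly what tames the boundary contributions. Heuristically, the remainder factors into three pieces whose exponents must sum to an integrable total via H\"older's inequality: the $L^3$ integrability of $h$ (from the third-moment assumption) balanced against a weighted $L^{3/2}$ control of the squared empirical process, for which the exponent $1/6$ sits at the critical threshold. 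This interplay among the tail growth of $J$, the bounded variation of $g$, and the third-moment condition is the delicate step, and I expect it to be the main obstacle.

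Finally I would apply the Lindeberg--L\'evy central limit theorem to the i.i.d.\ sum $n^{-1/2}\sum_i \psi(U_i)$ to obtain convergence to $N(0,\sigma^2)$ with $\sigma^2 = \E[\psi(U)^2] < \infty$. Because the remainder is asymptotically negligible, Slutsky's lemma transfers this limit to $\sqrt{n}(T_n - \E[T_n])$, and a direct second-moment computation of the linear representation identifies $\E[\psi(U)^2]$ with $\lim_{n\to\infty} n\,\V(T_n)$, completing the proof.
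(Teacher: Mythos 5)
The paper does not actually prove this statement: it is imported verbatim, with citation, as Theorem~1 and Example~1 of Shorack (1972), and serves purely as an external ingredient in the proof of Theorem~2. Your proposal correctly recognizes this and faithfully reconstructs the strategy of Shorack's original argument --- probability-integral reduction to uniform order statistics, representation of the L-statistic as a functional of the empirical distribution, linearization with an i.i.d.\ influence-function leading term, and remainder control via weighted empirical-process (Chibisov--O'Reilly type) bounds in which the growth exponent $-1/6+\delta$ exactly balances the third-moment condition on $g$ --- so it takes essentially the same (and the standard) route as the cited source.
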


Now, set $X_i = s(\bX_i)$, $g(\cdot)=u(\cdot)$, and
$J(t)=\mathbf{1}\left\{(k-1)n/K < tn \leq kn/K\right\}$. Then, we
have $T_n=\hat{\tau}^2_k$. Assumption~\ref{asm:moments} guarantees
$\E[g(X)^3]<\infty$. The function $J(t)$ is discontinuous only at the
quantile points $t=k/K$ and $t=\frac{k-1}{K}$, and Assumption
\ref{asm:continuity} guarantees the continuity of $g(F^{-1})$ at those
points. The function $J$ clearly satisfies the bounding condition with
$\delta=1/6$ and $M=1$. Therefore, define the asymptotic variance as
$\sigma_2^2 = \lim_{n \to \infty} n\V(\hat{\tau}^{(2)}_k),$ and we can
utilize Theorem~\ref{thm:ordstat_clt} to show the following
convergence:
\begin{equation}
  \sqrt{n}(\hat{\tau}^{(2)}_k -\E(\hat{\tau}^{(2)}_k)) \xrightarrow{d} N(0,\sigma_2^2) \label{eq:part_2}
\end{equation}

Now, we aim to combine the results given in
Equations~\eqref{eq:part_1}~and~\eqref{eq:part_2}.  Using
Lemma~\ref{lem:mean}, we can replace $\tau_k$ with $\E(\hat{\tau}_k)$
by adding a small bias term. Then, we have
\begin{align}
\sqrt{n}(\hat{\tau}_k -\tau_k)
  &= \sqrt{n}(\hat{\tau}^{(1)}_k -\E(\hat{\tau}^{(1)}_k))+
    \sqrt{n}(\hat{\tau}^{(2)}_k -\E(\hat{\tau}^{(2)}_k)) +
    O\left(n^{-1/2}\right)\nonumber\\
  & \xrightarrow{d} N(0, \sigma_1^2+\sigma_2^2)\label{eq:GATEsamp_alt}
\end{align}
where the last line follows from the application of
Lemma~\ref{lem:conditional_converge} to the convergence results given in
Equations~\eqref{eq:part_1}~and~\eqref{eq:part_2}.

Equivalently, we can write Equation~\eqref{eq:GATEsamp_alt} as,
\[\sqrt{n}\frac{\hat{\tau}_k -\tau_k}{\sqrt{\sigma_1^2+\sigma_2^2}} \stackrel{d}{\to} N(0,1)\]
Now, note that by the law of total variance, we have that
\begin{align}
  n\V(\hat{\tau}_k)  &= n\E[\V(\hat{\tau}^{(1)}_k+\hat{\tau}^{(2)}_k \mid \bX_n)] + n\V[\E(\hat{\tau}^{(1)}_k+\hat{\tau}^{(2)}_k \mid \bX_n)]\nonumber\\&=n\E[\V(\hat{\tau}^{(1)}_k \mid \bX_n)] + n\V(\hat{\tau}^{(2)}_k)\nonumber\\&\to \sigma_1^2 +\sigma_2^2\label{eq:var_exp}
\end{align}
Therefore, by Slutsky's lemma, we have that:
\[\frac{\hat{\tau}_k -\tau_k}{\sqrt{\V(\hat{\tau}_k)}} \stackrel{d}{\to} N(0,1)\] \qed

\section{Proof of Proposition \ref{prop:continuity_tight}}
\label{app:continuity_tight}

We prove this proposition by finding an example that satisfies it.
Define $t(x)=\E(Y_i(1)-Y_i(0) \mid s(\bX_i)=F^{-1}(x))$. Then,
consider a scoring function $s$ and a population such that:
\begin{equation*}
	t(x) =
	\begin{cases}
		2 & x\geq  F(c_{k}(s))\\
		1 & x< F(c_{k}(s))
	\end{cases}
\end{equation*}
Note that $t(x)$ is bounded everywhere but has a discontinuity. By
definition of $\hat{c}_{k}(s)$, $F(\hat{c}_{k}(s))$ follows the Beta
distribution with the shape and scale parameters equal to $nk/K$ and
$n-nk/K+1$, respectively. Therefore, we have the following normal
approximation:
\begin{equation*}
\sqrt{n+1}\left(F(\hat{c}_{k}(s))-\frac{nk}{K(n+1)}\right)\xrightarrow{d}
N\left(0,\frac{k}{K}\left(1-\frac{k}{K}\right)\right)
\end{equation*}
In particular, as $n\to \infty$, $F(\hat{c}_k(s))$ is distributed
approximately symmetric around $F(c_k(s))=\frac{k}{K}$ with an error
of $O(n^{-1})$ and has a standard deviation of $O(n^{-1/2})$.  Thus, we
have,
\begin{align*}
	\E(\hat{\tau}_k)-\tau_k&= K\E\left[\int_{F(c_{k}(s))}^{F(\hat{c}_{k}(s))} f(x)\d x\right]+ K\E\left[\int_{F(c_{k-1}(s))}^{F(\hat{c}_{k-1}(s))} f(x)\d x\right]\\&= (2-1)O\l(n^{-1/2}\r)+(1-1)O\l(n^{-1/2}\r)+O\l(n^{-1}\r)\\&=O\l(n^{-1/2}\r)
\end{align*}
We can now conclude
$\sqrt{n}(\E(\hat{\tau}_k)-\tau_k) \not \rightarrow 0$. \qed

\section{Proof of Theorem \ref{thm:heterotest}}
\label{app:heterotest}
We wish to prove that for
$\hat{\bm{\tau}}=(\hat{\tau}_1,\cdots,\hat{\tau}_K)$,
$\bm{\tau}=(\tau_1,\cdots,\tau_K)$, and
$\bm{\Sigma}_n = \V(\hat{\bm{\tau}})$, we have:
\[\bm{\Sigma}_n^{-1/2} (\hat{\bm{\tau}} -\bm{\tau}) \stackrel{d}{\to} N(0,\bm{I})\]
where $\bm{I}$ is the $K\times K$ identity matrix.

By Equation~\eqref{eq:GATEsamp_alt} in the proof of
Theorem~\ref{thm:GATEsamp}, for all $k=1,\cdots,k$ we have,
\[\sqrt{n}(\hat{\tau}_k -\tau_k) \stackrel{d}{\to} N(0,\sigma_k^2),\]
where $\sigma_k^2=\lim_{n\to \infty} n\V(\hat{\tau}_k)$. To prove the
multi-dimensional result, we utilize the Cramer-Wold device, which we
restate below:
\begin{theorem}[\cite{cramer1936some}]\label{thm:cramer_wold}
  Let $\bX_n=(X_{n1},\cdots, X_{nk})$ and $\bX=(X_{1},\cdots, X_{k})$ be $k$-dimensional random vectors. Then $\bX_n \to \bX$ if and only if for all $(t_1,\cdots, t_k) \in \mc{R}^k$, we have:
  \[\sum_{i=1}^k t_i X_{ni} \xrightarrow{d} \sum_{i=1}^k t_iX_i\]
\end{theorem}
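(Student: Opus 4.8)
The plan is to prove both implications through characteristic functions, reducing the multivariate statement to the behavior of one-dimensional projections. Throughout, write $\varphi_{\bX_n}(\bt) = \E[\exp(i\bt^\top \bX_n)]$ for the characteristic function of $\bX_n$ and $\varphi_{\bX}$ for that of $\bX$. The observation that ties the two sides of the equivalence together is that the scalar projection $\bt^\top \bX_n = \sum_{j=1}^k t_j X_{nj}$ has characteristic function $s \mapsto \E[\exp(i s\, \bt^\top \bX_n)] = \varphi_{\bX_n}(s\bt)$, so that evaluating at the argument $s=1$ recovers $\varphi_{\bX_n}(\bt)$ exactly. Thus statements about scalar projections translate directly into statements about the multivariate characteristic function.

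For the forward direction, suppose $\bX_n \xrightarrow{d} \bX$. Since $\bx \mapsto \bt^\top \bx$ is continuous for each fixed $\bt$, the continuous mapping theorem immediately gives $\bt^\top \bX_n \xrightarrow{d} \bt^\top \bX$, which is precisely the asserted scalar convergence. This direction is routine and requires no further input.

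The reverse direction is the substantive one. Assume $\sum_{j} t_j X_{nj} \xrightarrow{d} \sum_j t_j X_j$ for every $\bt \in \mc{R}^k$. Fixing $\bt$, convergence in distribution of these scalars implies convergence of the expectation of every bounded continuous function, and in particular of $y \mapsto \exp(iy)$; applied to $\bt^\top \bX_n$ this yields $\varphi_{\bX_n}(\bt) \to \varphi_{\bX}(\bt)$. Since $\bt$ was arbitrary, the characteristic functions of $\bX_n$ converge pointwise on all of $\mc{R}^k$ to $\varphi_{\bX}$. Because $\varphi_{\bX}$ is the characteristic function of an honest random vector, it is continuous at the origin, so the multivariate Lévy continuity theorem applies and delivers $\bX_n \xrightarrow{d} \bX$.

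The main obstacle is the appeal to the multivariate Lévy continuity theorem, whose hypothesis is precisely that the pointwise limit of the characteristic functions be continuous at $\bzero$; this is what converts pointwise convergence of characteristic functions into convergence in distribution (equivalently, it encodes the tightness of the sequence). Here the verification is costless, since the limit is assumed from the outset to be $\varphi_{\bX}$, which is automatically continuous at the origin. Had the limit only been known as an abstract pointwise limit of functions, one would first need to establish this continuity, which is the genuinely delicate ingredient in the general theorem.
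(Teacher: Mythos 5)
Your proof is correct, but there is nothing in the paper to compare it against: the paper does not prove this statement. It is the classical Cram\'er--Wold theorem, quoted with attribution to \cite{cramer1936some} and used as a black box in the proof of Theorem~\ref{thm:heterotest} (and again for Theorem~\ref{thm:heterotestcv}). Your argument is the standard textbook proof and it is complete: the forward implication follows from the continuous mapping theorem applied to $\bx \mapsto \bt^\top \bx$; for the converse, scalar convergence in distribution of $\bt^\top \bX_n$ tested against the bounded continuous functions $y \mapsto \cos y$ and $y \mapsto \sin y$ gives $\E[\exp(i\,\bt^\top\bX_n)] \to \E[\exp(i\,\bt^\top\bX)]$, and since this holds for every $\bt \in \mc{R}^k$, the multivariate characteristic functions converge pointwise everywhere; the multivariate L\'evy continuity theorem then applies because the limit $\varphi_{\bX}$, being the characteristic function of an actual random vector, is continuous at the origin. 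Your projection identity (evaluating the scalar characteristic function at $s=1$) correctly handles the reduction, since ranging over all $\bt$ at $s=1$ already exhausts $\mc{R}^k$. Your closing observation is also the right one to make: continuity at $\bzero$ of the limit is exactly what encodes tightness in L\'evy's theorem, and it is free here because the limit is postulated to be $\varphi_{\bX}$ rather than an abstract pointwise limit. In short, you have supplied a correct, self-contained proof of a result the paper deliberately imports from the literature.
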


Now, consider $\bt=(t_1,\cdots, t_K) \in \mc{R}^K$ and
$\hat{\tau}_{\bt} = \sum_{k=1}^K t_k \hat{\tau}_{k}$. Then, we can write
$\hat{\tau}_{\bt}$ as:
\begin{equation*}
	\hat{\tau}_{\bt} = \frac{1}{n}\sum_{i=1}^n \left(\sum_{k=1}^K t_k\mathbf{1} \left\{\frac{(k-1)n}{K} < i \leq \frac{kn}{K}\right\}\right) U_{[i,n]}
\end{equation*}
where $U_{[i,n]}$ is defined in Equation~\eqref{eq:unitdef}.  We use
the same proof strategy as the one used to prove
Theorem~\ref{thm:GATEest}. We define $u(s)=\E[U_i \mid s(\bX_i)=s]$
and write:
\begin{align*}
	\hat{\tau}_{\bt}  
  \ = \ &\underbrace{\frac{1}{n} \sum_{k=1}^K t_k\sum_{\frac{(k-1)n}{K} < i \leq \frac{kn}{K}} U_{[i,n]} - u(s(\bX_{[i,n]}))}_{\hat{\tau}^{(1)}_{\bt}}\nonumber\\& + \underbrace{\frac{1}{n} \sum_{i=1}^n \left(\sum_{k=1}^K t_k\mathbf{1}\left \{\frac{(k-1)n}{K} < i \leq \frac{kn}{K}\right\}\right) u(s(\bX_{[i,n]}))}_{\hat{\tau}^{(2)}_{\bt}}
\end{align*}
Using Lindberg's Central Limit Theorem for the conditional
distribution of $\hat{\tau}_{\bt}^{(1)}$ given $\bX_n $ and applying
Theorem~\ref{thm:ordstat_clt} to $\hat{\tau}_{\bt}^{(2)}$ yield,
\begin{align*}
	\sqrt{n}(\hat{\tau}^{(1)}_{\bt} - \E(\hat{\tau}^{(1)}_{\bt})) \mid \bX_n
  & \ \stackrel{d}{\to} \ N(0,\sigma_1^2)\\
	\sqrt{n}(\hat{\tau}^{(2)}_{\bt} - \E(\hat{\tau}^{(2)}_{\bt})) & \
                                                                    \stackrel{d}{\to}
                                                                    \ N(0,\sigma_2^2)
\end{align*}
where
$\sigma_1^2 = \displaystyle \lim_{n\to\infty} n
\V(\hat{\tau}^{(1)}_{\bt} \mid \bX_n)$ and
$\sigma_2^2 = \displaystyle \lim_{n\to \infty}
n\V(\hat{\tau}_{\bt}^{(2)})$. This implies,
\begin{align*}
	&\sqrt{n}(\hat{\tau}_{\bt} -\tau_{\bt}) \ = \
   \sqrt{n}\left(\hat{\tau}^{(1)}_{\bt}
   -\E[\hat{\tau}^{(1)}_{\bt}]\right)+
   \sqrt{n}\left(\hat{\tau}^{(2)}_{\bt} -\E[\hat{\tau}^{(2)}_{\bt}]\right)
   + O\left(n^{-1/2}\right) \xrightarrow{d} N(0,
   \sigma_1^2+\sigma_2^2)
\end{align*}
where the result follows from Lemma~\ref{lem:conditional_converge}.
Since we can easily show
$n\V(\hat{\tau}_t) \to \sigma_1^2 +\sigma_2^2$, we have:
$\sqrt{n}(\hat{\tau}_{\bt} -\tau_{\bt}) \to N(0, \lim_{n\to \infty}
n\V(\hat{\tau}_{\bt}))$.  Therefore, by the Cramer-Wold device
(Theorem~\ref{thm:cramer_wold}), we have
$\sqrt{n}(\hat{\bm{\tau}}-\bm{\tau})\to N(0, \lim_{n\to \infty} n\bm{\Sigma}_n).$
Finally, Slutsky's Lemma implies the desired result,
\[\bm{\Sigma}_n^{-1/2}(\hat{\bm{\tau}}-\bm{\tau}) \to N(0, \bm{I}).\]

To derive the expression for the covariance matrix
$\bm{\Sigma}_{kk'}$, we utilize the same approach as the one used in
the proof of Theorem~\ref{thm:GATEest}. We first apply the law of
total covariance to obtain: 
\begin{align}
	 & \Cov(\hat\tau_k, \hat\tau_{k'}) \nonumber \\
	= & \ K^2\Cov\l(\frac{1}{n} \sum_{i=1}^n \{Y^*_{ki}(1) -
	Y^*_{ki}(0)\},\frac{1}{n} \sum_{i=1}^n \{Y^*_{k'i}(1) -
	Y^*_{k'i}(0)\}\r) \nonumber \\
	& +K^2\E\left[\Cov\left\{\frac{1}{n_1}\sum_{i=1}^n \left(\hat{f}_k(\bX_i)-\frac{1}{K}\right)T_iY_i(1)-\frac{1}{n_0}\sum_{i=1}^n\left(\hat{f}_k(\bX_i)-\frac{1}{K}\right)(1-T_i)Y_i(0), \right.\right.\nonumber\\ & \hspace{.25in}\left. \left.\frac{1}{n_1}\sum_{i=1}^n \left(\hat{f}_{k'}(\bX_i)-\frac{1}{K}\right)T_iY_i(1)-\frac{1}{n_0}\sum_{i=1}^n\left(\hat{f}_{k'}(\bX_i)-\frac{1}{K}\right)(1-T_i)Y_i(0)\biggm\vert \bX, \{Y_i(1), Y_i(0)\}_{i=1}^n\right\}\right]. \label{eq:covar_decomp}
\end{align}
Applying Neyman's finite sample variance analysis to the second term
shows that this term is equal to:
\begin{align}
	K^2\E\left\{\frac{1}{n} \l( \frac{n_0}{n_1} S_{kk'1}^{*2} + \frac{n_1}{n_0} S_{kk'0}^{*2}
	+ 2 S^*_{kk'01}\r)\r\},\label{eq:covar2_decomp}
\end{align}
where
$S^*_{kk'01} = \sum_{i=1}^n (Y^*_{ki}(0) - \overline{Y^*_{k}(0)})(Y^*_{k'i}(1)
- \overline{Y^*_{k'}(1)})/(n-1)$.  Since $Y^*_{ki}(t)$ and $Y^*_{k'j}(t)$
are correlated, we have:
\begin{align}
	&\Cov\l(\frac{1}{n} \sum_{i=1}^n (Y^*_{ki}(1) - Y^*_{ki}(0)),\frac{1}{n} \sum_{i=1}^n (Y^*_{k'i}(1) - Y^*_{k'i}(0))\r)
	\ \nonumber \\= &\ {\rm Cov}(Y^*_{ki}(1) -Y^*_{ki}(0), Y^*_{k'j}(1) -
	Y^*_{k'j}(0)) + \frac{1}{n}   \E(S_{kk'1}^{*2}  + S_{kk'0}^{*2} - 2 S^*_{kk'01}). \label{eq:covar1_decomp}
\end{align}
For $k\neq k'$, we have: 
\begin{align*}
	&{\rm Cov}(Y^*_{ki}(1) -Y_{ki}(0), Y^*_{k'j}(1) -
	Y_{k'j}(0))\\
	= \ &  {\rm Cov}\left(\left(\hat{f}_k(\bX_i)-\frac{1}{K}\right)\tau_i, \left(\hat{f}_{k'}(\bX_j)-\frac{1}{K} \right)\tau_j\right) \nonumber\\
  = \ & \left
        (1-\frac{2}{K}\right)\Cov(\hat{f}_k(\bX_i)\tau_i,\hat{f}_{k'}(\bX_j)\tau_j)
        -
        \frac{1}{K}\Cov((1-\hat{f}_k(\bX_i))\tau_i,\hat{f}_{k'}(\bX_j)\tau_j) \\
            &
              -\frac{1}{K}\Cov((1-\hat{f}_{k'}(\bX_i))\tau_i,\hat{f}_k(\bX_j)\tau_j)
   \\
   = \ &  \left(1-\frac{2}{K}\right)\left(\frac{1}{K^2}\kappa_{kk'11}
         - \frac{1}{K^2}\kappa_{k1}\kappa_{k'1}\right)\\ & - 
                                                           \frac{1}{K}\left\{\frac{n/K(n-n/K-1)}{n(n-1)}\kappa_{kk'01}+\frac{n/K(n-n/K-1)}{n(n-1)}\kappa_{kk'10}-\frac{1}{K^2}\kappa_{k1}\kappa_{k'0}-\frac{1}{K^2}\kappa_{k0}\kappa_{k'1}\right\}
   \\
   = \ & \frac{1}{K^3}\left\{(K-2)\left(\kappa_{kk'11}-\kappa_{k1}\kappa_{k'1}\right)-\frac{Kn-n-1}{n-1}\left(\kappa_{kk'10}+\kappa_{kk'01}\right)+\kappa_{k1}\kappa_{k'0}+\kappa_{k0}\kappa_{k'1}\right\}.        
\end{align*}
For $k=k'$, we have
\begin{align*}
	&{\rm Cov}(Y_{ki}^*(1) -Y_{ki}^*(0), Y_{kj}^*(1) -
	Y_{kj}^*(0))\\
	= \ &  {\rm Cov}\left(\left(\hat{f}_k(\bX_i)-\frac{1}{K}\right)\tau_i, \left(\hat{f}_{k}(\bX_j)-\frac{1}{K} \right)\tau_j\right) \nonumber\\
	= \ &  \left(1-\frac{2}{K}\right)\Cov(\hat{f}_k(\bX_i)\tau_i,\hat{f}_{k}(\bX_j)\tau_j)  -\frac{2}{K}\Cov((1-\hat{f}_k(\bX_i))\tau_i,\hat{f}_{k}(\bX_j)\tau_j)\\
 = \ &
       \left(1-\frac{2}{K}\right)\left\{\frac{n-K}{K^2(n-1)}\kappa_{k11}
       - \frac{1}{K^2}\kappa_{k1}^2\right\}
       -\frac{2}{K}\left\{\frac{n(K-1)}{K^2(n-1)}\kappa_{k01}-\frac{1}{K^2}\kappa_{k1}\kappa_{k0}\right\}
  \\
    = \ &
          \frac{1}{K^3}\left\{(K-2)\left(\frac{n-K}{n-1}\kappa_{k11}-\kappa_{k1}^2\right)-\frac{2n(K-1)}{(n-1)}\kappa_{k01}+2\kappa_{k1}\kappa_{k0}\right\}.       
\end{align*}
Substituting
Equations~\eqref{eq:covar2_decomp}~and~\eqref{eq:covar1_decomp} into
Equation \ref{eq:covar_decomp}, we obtain the desired covariance
expression.

\qed

\section{Proof of Theorem \ref{thm:consistest}}
\label{app:consistest}

The proof of Theorem~\ref{thm:heterotest} above establishes that
$\bm{\Sigma}^{-1/2}\bm{\hat\tau}$ is asymptotically normally
distributed with the identity variance matrix $\bm{I}$.  For
simplicity, throughout this proof, we will assume that
$\bm{\Sigma}^{-1/2}\bm{\hat\tau}$ is exactly normally distributed with
unknown mean $\bm{\theta}=(\tau_{1}-\tau,\cdots,\tau_{K}-\tau)$, i.e.,
$\bm{\Sigma}^{-1/2}\bm{\hat\tau} \sim N(\bm{\theta},\bm{I})$.

Let the likelihood of the data $\bm{\hat\tau}$ under the null and
alternative hypotheses as $L_{\bm{\hat\tau}}(H_0^C)$ and
$L_{\bm{\hat\tau}}(H_1^C)$.  Under the asymptotic normal assumption,
the likelihood ratio is given by:
\begin{equation*}
\frac{L_{\bm{\hat\tau}}(H_0^C)}{L_{\bm{\hat\tau}}(H_1^C)} = \begin{cases}
\exp\left\{(\bm{\hat\tau}-\bm{\mu}_1(\bm{\hat\tau}))^\top\bm{\Sigma}^{-1}(\bm{\hat\tau}-\bm{\mu}_1(\bm{\hat\tau}))\right\}
& \bm{\theta} \in \Theta_0\\
\exp\left\{-(\bm{\hat\tau}-\bm{\mu}_0(\bm{\hat\tau}))^\top\bm{\Sigma}^{-1}(\bm{\hat\tau}-\bm{\mu}_0(\bm{\hat\tau}))\right\}& \bm{\theta}  \in \Theta_1
\end{cases}
\end{equation*}
Where $\bm{\mu}_i(\bm{\hat\tau})$ are the optimal mean vectors given
data $\bm{\hat\tau}$ for region $j$ of the hypothesis test, and is the
solution to the following optimization problems for $j \in \{0,1\}$:
\begin{equation*}
\bm{\mu}_j(\bm{\hat\tau}) = \argmin_{\bm{\mu} \in \Theta_j} \|\bm{\hat\tau}-\bm{\mu}\|^2
\end{equation*}
We can identify the optimal means ($\bm{\mu}_1,\bm{\mu}_0$) for each
region of the hypothesis test through this optimization problem
because the multivariate normal distribution is spherical and
symmetric.

We use
$(\bm{\hat\tau}-\bm{\mu}_0(\bm{\hat\tau}))^\top
\bm{\Sigma}^{-1}(\bm{\hat\tau}-\bm{\mu}_0(\bm{\hat\tau}))$ as our test
statistic. Note that when $\bm{\hat\tau} \in \Theta_0$, the statistic
is always 0, so the null hypothesis is never rejected and thus we are
consistent.  Given that we have a composite test, we are interested in
finding the uniformly most powerful test.  This requires calculating
the size of a test $\alpha$, as a function of the critical value
$C(\alpha)$:
\begin{align*}
\alpha &=\sup_{\bm{\theta} \in \Theta_0} \Pr(
         (\bm{\hat\tau}-\bm{\mu}_0(\bm{\hat\tau}))^\top \bm{\Sigma}^{-1}(\bm{\hat\tau}-\bm{\mu}_0(\bm{\hat\tau}))>C(\alpha)
         \mid \bm{\theta} )
\end{align*}
Since the supremum must occur at the boundary $\partial \bm{\Theta}_0$
of the polytope $\bm{\Theta}_0$ the set $\Theta_0$, the probability of
exceeding $C(\alpha)$ is maximized when the solid angle of the
$\bm{\Theta}_0$ region is minimized.  By considering the shape of the
polytope $\bm{\Theta}_0$, we recognize that the boundary points, which
minimize the solid angle, are precisely those on the boundary when all
constraints are active:
\begin{align*}
\alpha &=\sup_{t} \Pr(
         (\bm{\hat\tau}-\bm{\mu}_0(\bm{\hat\tau}))^\top\bm{\Sigma}^{-1}(\bm{\hat\tau}-\bm{\mu}_0(\bm{\hat\tau}))>C(\alpha)
         \mid \tau_{1}-\tau=\cdots =\tau_{K}-\tau=t ).
\end{align*}
We now note that we have translational invariance on this boundary,
i.e., all points along $\tau_{1}-\tau=\cdots =\tau_{K}-\tau$ have the
same probability, yielding,
\begin{align*}
\alpha &= \Pr( (\bm{\hat\tau}-\bm{\mu}_0(\bm{\hat\tau}))^\top \bm{\Sigma}^{-1}(\bm{\hat\tau}-\bm{\mu}_0(\bm{\hat\tau}))>C(\alpha) \mid\tau_{1}-\tau=\cdots =\tau_{K}-\tau=0 )
\end{align*}
Therefore, to identify the value of $\alpha$, we just need the CDF of
the statistic
$(\bm{\hat\tau}-\bm{\mu}_0(\bm{\hat\tau}))^\top
\bm{\Sigma}^{-1}(\bm{\hat\tau}-\bm{\mu}_0(\bm{\hat\tau}))$ when
$\hat\tau \sim N(\bm{0}, \bm{\Sigma})$. This can be easily estimated
using Monte Carlo simulation. \qed

\bigskip

\section{Proof of Theorem~\ref{thm:GATEestcv}}
\label{app:GATEestcv}

The derivation of bias is essentially identical to that given in
Supplementary Appendix~\ref{app:GATEest} and thus is omitted. To
derive the variance, we first introduce the following useful lemma,
adapted from \cite{nadeau2000inference}.
\begin{lemma}
	\label{lemma:cvcov}
	\begin{eqnarray*}
		\E(S_{Fk}^2) & = & \V(\hat{\tau}_k^\ell)-\Cov(\hat{\tau}_k^\ell,\hat{\tau}_k^{\ell'}),\\
		\V(\hat \tau_{k}(F,n-m)) & = & \frac{\V(\hat{\tau}_k^\ell)}{L}+ \frac{L-1}{L}\Cov(\hat{\tau}_k^\ell,\hat{\tau}_k^{\ell'}).
	\end{eqnarray*}
	where $\ell \ne \ell'$.
\end{lemma}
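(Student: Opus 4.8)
The plan is to exploit the \emph{exchangeability} of the fold-specific estimators $\hat\tau_k^1,\ldots,\hat\tau_k^L$ and then reduce both identities to elementary second-moment computations. First I would establish that the joint distribution of $(\hat\tau_k^1,\ldots,\hat\tau_k^L)$ is invariant under permutations of the fold labels. This follows from the construction in Algorithm~\ref{alg:cross_validation}: the $n$ units are i.i.d. by Assumption~\ref{asm:randomsample} and are partitioned into $L$ folds of equal size by a uniformly random assignment, and each $\hat\tau_k^\ell$ is computed from fold $\ell$ together with its complement via the same deterministic maps. Relabeling the folds therefore leaves the law of the whole collection unchanged. In particular, there exist common values $\sigma^2 := \V(\hat\tau_k^\ell)$ (identical across $\ell$) and $\rho := \Cov(\hat\tau_k^\ell,\hat\tau_k^{\ell'})$ (identical across all pairs $\ell\neq\ell'$), along with a common mean $\mu := \E(\hat\tau_k^\ell)$.

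For the second identity I would expand the variance of the average directly,
$$
\V(\hat\tau_k(F,n-m)) \ = \ \frac{1}{L^2}\left\{\sum_{\ell=1}^L \V(\hat\tau_k^\ell) + \sum_{\ell\neq\ell'}\Cov(\hat\tau_k^\ell,\hat\tau_k^{\ell'})\right\} \ = \ \frac{L\sigma^2 + L(L-1)\rho}{L^2} \ = \ \frac{\sigma^2}{L} + \frac{L-1}{L}\rho,
$$
which is exactly the claimed expression.

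For the first identity I would invoke the standard algebraic decomposition of the sample variance, $\sum_{\ell=1}^L(\hat\tau_k^\ell - \hat\tau_k(F,n-m))^2 = \sum_{\ell=1}^L (\hat\tau_k^\ell)^2 - L\,\hat\tau_k(F,n-m)^2$, take expectations, and substitute $\E((\hat\tau_k^\ell)^2)=\sigma^2+\mu^2$ together with $\E(\hat\tau_k(F,n-m)^2)=\V(\hat\tau_k(F,n-m))+\mu^2$ and the second identity just proved. The mean terms cancel and the remaining terms collapse to $(L-1)(\sigma^2-\rho)$; dividing by $L-1$ then gives $\E(S_{Fk}^2)=\sigma^2-\rho=\V(\hat\tau_k^\ell)-\Cov(\hat\tau_k^\ell,\hat\tau_k^{\ell'})$.

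The only genuine subtlety---and the step I expect to require the most care---is the exchangeability claim, since $\hat\tau_k^\ell$ depends not only on fold $\ell$ but also on all remaining folds through the trained scoring rule $\hat s_{-\ell}$. I would make precise that the relevant symmetry is a permutation of \emph{fold indices} rather than of individual units, under which the validation fold and its training complement are relabeled simultaneously; because the data-generating process and the ML map $F$ treat the folds symmetrically, the induced permutation of $(\hat\tau_k^1,\ldots,\hat\tau_k^L)$ preserves the joint law, which is all that the two moment identities require.
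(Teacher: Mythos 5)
Your proposal is correct and follows essentially the same route as the paper, which does not spell out a proof but instead cites Lemma~1 of \citet{nadeau2000inference}; that lemma is precisely the exchangeability-plus-second-moment argument you give. Your write-up in effect supplies the details behind that citation, including the one point that genuinely needs care in this setting --- that permuting \emph{fold labels} (validation fold and training complement relabeled together) preserves the joint law of $(\hat\tau_k^1,\ldots,\hat\tau_k^L)$ because the units are i.i.d., the split is uniformly random, and $F$ treats the training folds symmetrically.
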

The lemma implies,
\begin{equation*}
	\V(\hat \tau_{k}(F,n-m)) \ = \ \V(\hat{\tau}_k^\ell)-\frac{L-1}{L}\E(S_{Fk}^2) .
\end{equation*}
We then follow the same process of derivation as in
Appendix~\ref{app:GATEest} for the first term. The only difference
occurs in the derivation of the covariance term:
\begin{align*}
	&  {\rm Cov}(Y_{ki}^\ell(1) -Y_{ki}^\ell(0), Y_{kj}^\ell(1) -
	Y_{kj}^\ell(0)) \nonumber\\
	=& \E_\ell\l[ {\rm Cov}_{\bX,Y}\l(\hat{f}_k^\ell(\bX_i^{(\ell)})\tau_i, \hat{f}_k^\ell(\bX_j^{(\ell)})\tau_j\r)\r] + \Cov_\ell\l[\E_{\bX,Y}[\hat{f}_k^\ell(\bX_i^{(\ell)})\tau_i],\E_{\bX,Y}[\hat{f}_k^\ell(\bX_j^{(\ell)})\tau_j]\r] \nonumber\\ = & \E_\ell\l[\frac{(n-K)\kappa^\ell_{k11}}{K^2(n-1)} -	\frac{1}{K^2}(\kappa^\ell_{k1})^2\r]+\frac{1}{K^2}\V_\ell\l(\kappa_{k1}^\ell\r).
\end{align*}
\qed

 \bigskip

%

\section{Proof of Consistency of $\widehat{\E(S^2_{Fk})}$}
\label{app:consistent}
We show that $\widehat{\E(S^2_{Fk})}$ is consistent as $L$ approaches
infinity under the assumption that the fourth moments
$\E(Y_i(t)^4)<\infty$ for $t=0,1$ and a sufficiently large value of
$m$. Theorem~\ref{thm:GATEestcv} implies,
\begin{align*}
	\V(\hat \tau_{k}(F,n-m)) &= \ K^2\left(\frac{\E(S_{Fk1}^2)}{m_1} +
	\frac{\E(S_{Fk0}^2)}{m_0}\right) \\&+\E_\ell\l[\frac{(n-K)\kappa^\ell_{k11}}{K^2(n-1)} -	\frac{1}{K^2}(\kappa^\ell_{k1})^2\r]+\V\l(\kappa_{k1}^\ell\r)-\frac{L-1}{L}\E(S_{Fk}^2),
\end{align*}
Now, define:
\[\widehat{\V(\hat \tau_{k}(F,n-m))}=K^2\left(\frac{\widehat{\E(S_{Fk1}^2)}}{m_1} +
\frac{\widehat{\E(S_{Fk0}^2)}}{m_0}\right)+\frac{(n-K)\widehat{\E[\kappa^\ell_{k11}]}}{K^2(n-1)} -	\frac{1}{K^2}(\widehat{\E[\kappa^\ell_{k1}]})^2+\widehat{\V\l(\kappa_{k1}^\ell\r)}\]
By construction, we have that as $m \to \infty$:
\[\frac{\widehat{\V(\hat \tau_{k}(F,n-m))}}{\V(\hat \tau_{k}(F,n-m)) +  \frac{L-1}{L}\E(S_{Fk}^2)} \xrightarrow{p} 1\]
Applying Lemma 1 from \cite{nadeau2000inference} to $\hat \tau_{k}(F,n-m)$ gives
\begin{equation*}
	\V(\hat \tau_{k}(F,n-m)) \geq \ \E(S^2_{Fk}).
\end{equation*}
Therefore, we have:
\[\lim_{m \to \infty} \frac{\widehat{\V(\hat \tau_{k}(F,n-m))}}{\E(S^2_{Fk}) +  \frac{L-1}{L}\E(S_{Fk}^2)} \geq 1\]
Note that we can write $\widehat{\E(S^2_{Fk})}$ as:
\[\widehat{\E(S^2_{Fk})} \ = \ \min\l(S^2_{Fk}, \widehat{\V(\hat \tau_{k}(F,n-m))}\r).\]
By definition of $S^2_{Fk}$, if the fourth moments of $Y_i(t)$ exist,
we have $\V(S^2_{Fk})=O(L^{-1})$, and thus as $L \to \infty$:
\begin{equation*}
	\frac{S^2_{Fk}}{\E(S^2_{Fk})} \xrightarrow{p} 1 
\end{equation*}
Let $\epsilon>0$. There exists $L_0$ such that for all $L>L_0$,
$|\frac{S^2_{Fk}}{\E(S^2_{Fk})}-1|<\epsilon$. Similarly, there exists
$m_0$ such that for all  $\widehat{\V(\hat \tau_{k}(F,n-m))} > (1-\epsilon)\E(S^2_{Fk}) \;\forall m >m_0$. Therefore, for all $m>m_0$, we have that:
\begin{equation*}
	\lim_{L \to \infty}
        \frac{\widehat{\E(S^2_{Fk})}}{\E(S^2_{Fk})} \ = \ \lim_{L \to
          \infty}\frac{\min\l(S^2_{Fk}, \widehat{\V(\hat
            \tau_{k}(F,n-m))}\r)}{\E(S^2_{Fk})} = \lim_{L \to \infty}\frac{S^2_{Fk}}{\E(S^2_{Fk})} \xrightarrow{p} 1
\end{equation*}
\qed

\section{Proof of Theorem \ref{thm:GATEsampcv}}
\label{app:GATEsampcv}

We first shows the bias is small.  
\begin{lemma}\label{lem:mean_cv}
	$\displaystyle \lim_{n \to \infty} \left|\hat{\tau}_k(F,n-m)-\tau_k(F,n-m)\right|=O\left(m^{-1}\right)$
\end{lemma}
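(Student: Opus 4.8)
The plan is to reduce the cross-fitting bias to the sample-splitting bias by conditioning on the training data within each fold, and then to integrate the resulting fold-level bound over the random draw of training sets. First I would use linearity of expectation to write $\E(\hat{\tau}_k(F,n-m)) = \frac{1}{L}\sum_{\ell=1}^L \E(\hat{\tau}_k^\ell)$. Since every fold is trained on a set of the same size $n-m$ and evaluated on an independent validation fold of size $m$, each summand carries the same bias, so it suffices to bound $|\E(\hat{\tau}_k^\ell) - \tau_k(F,n-m)|$ for a single fold $\ell$.

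Next I would condition on the training data $\cZ_{-\ell}$, which fixes the scoring rule $\hat{s}_\ell = F(\cZ_{-\ell})$. Given this scoring rule, $\hat{\tau}_k^\ell$ is exactly the sample-splitting estimator of Equation~\eqref{eq:gCATEest} computed on the validation fold of size $m$, so Lemma~\ref{lem:mean} applies conditionally and yields
\[
\l|\E(\hat{\tau}_k^\ell \mid \cZ_{-\ell}) - \tau_k(\hat{s}_\ell)\r| \ \leq \ C\l(\l|\tau_{\cZ_{-\ell}}(\tfrac{k}{K})\r| + \l|\tau_{\cZ_{-\ell}}(\tfrac{k-1}{K})\r|\r) m^{-1} + o(m^{-1}),
\]
where $\tau_{\cZ_{-\ell}}(p)$ denotes the fold-specific CATE at quantile $p$ from Assumption~\ref{asm:continuity_cv}, $\tau_k(\hat{s}_\ell)$ is the GATES for the fixed rule $\hat{s}_\ell$, and $C$ is an absolute constant (which may change from line to line). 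The conditional bias constant is governed by the CATE values at the two thresholds, which I would dominate by the integrable envelope $\max_{p\in[0,1]}\tau_{\cZ_{-\ell}}(p)$.

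I would then take the expectation over $\cZ_{-\ell}$. Because $\tau_k(F,n-m) = \E_{\cZ_{-\ell}}[\tau_k(\hat{s}_\ell)]$ by the definition in Equation~\eqref{eq:gCATEtr}, the tower property gives
\[
\l|\E(\hat{\tau}_k^\ell) - \tau_k(F,n-m)\r| \ \leq \ C \, \E_{\cZ_{-\ell}}\l[\max_{p\in[0,1]}\tau_{\cZ_{-\ell}}(p)\r] m^{-1} + o(m^{-1}).
\]
The integrability condition $\lim_{n\to\infty}\E_{\cZ^{n-m}_{\text{train}}}[\max_p \tau_{\cZ^{n-m}_{\text{train}}}(p)] < \infty$ in Assumption~\ref{asm:continuity_cv} makes the leading constant finite, so the right-hand side is $O(m^{-1})$. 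Averaging over the $L$ folds then delivers the claimed bound.

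The main obstacle I anticipate is justifying the interchange of the $O(m^{-1})$ rate with the expectation over training sets: the conditional rate from Lemma~\ref{lem:mean} carries a multiplicative constant that is itself a random function of $\cZ_{-\ell}$, so I must ensure this constant is integrable rather than merely finite for each fixed training set. This is precisely why Assumption~\ref{asm:continuity_cv} augments the sample-splitting continuity condition with the integrability requirement on the maximal CATE. A secondary technical point is the asymptotically measure-zero set of training sets on which the fold-specific CATE may fail to be continuous at a threshold; on this set Lemma~\ref{lem:mean} does not guarantee an $O(m^{-1})$ conditional bias, but since the probability of this set vanishes and the CATE values entering the bound are integrable, its contribution to the expectation is negligible.
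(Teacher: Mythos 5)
Your proposal is correct and follows essentially the same route as the paper's own proof: decompose the bias over the $L$ folds via the triangle inequality, condition on the training data $\cZ_{-\ell}$ so that each fold reduces to the sample-splitting setting where Lemma~\ref{lem:mean} applies, and then integrate over training sets using the integrability requirement in Assumption~\ref{asm:continuity_cv}. The only difference is one of detail — you make explicit the justification (integrable random constant, measure-zero exceptional set) that the paper's terse proof leaves implicit in the step $\E_{\cZ^{-\ell}}[O(m^{-1})] = O(m^{-1})$.
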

\begin{proof}
\begin{align*}
	\left|\hat{\tau}_k(F,n-m)-\tau_k(F,n-m)\right| & \ \leq \
	\frac{1}{L}\sum_{\ell=1}^L
	|\E(\hat\tau_k^\ell(F,n-m))-\tau_k^\ell(F,n-m)|\\
	& \ = \  \frac{1}{L} \sum_{\ell=1}^L
	\E_{\cZ^{-\ell}}\left[O\left(m^{-1}\right)\right] \\
	&  \ = \ O\left(m^{-1}\right).
\end{align*}
The first equality follows because the estimator for each fold
$\hat\tau_k^\ell(F,n-m)$ is equivalent to the non-cross-fitting
estimator under $m$ samples and so Lemma~\ref{lem:mean} is
applicable. The second equality follows from
Assumption~\ref{asm:continuity_cv}.
\end{proof}

 We first write:
\begin{equation*}
	\hat{\tau}_k(F,n-m) = \frac{1}{m} \sum_{i=1}^m \mathbf{1}\left\{\frac{(k-1)m}{K}< i \leq \frac{km}{K}\right\}U_{[i, m]}
\end{equation*}
where $U_{[i, m]} \in \mc{R}$ is defined as,
\begin{align*}
	U_{[i, m]}& :=\frac{1}{L}\sum_{\ell=1}^L
	K\hat{f}_k^{\ell}(\bX_{[i, m]}^{(\ell)})Y_{[i, m]}^{(\ell)}\left(\frac{T_{[i, m]}^{(\ell)}}{q}-\frac{1-T_{[i, m]}^{(\ell)}}{1-q}\right). \label{eq:unitdefcv}
\end{align*}
and $(Y^{(\ell)}_{[i, m]},T^{(\ell)}_{[i, m]},\bX^{(\ell)}_{[i, m]})$ are ordered separately for each split $\ell$ such that:
\[s^{\ell}(\bX^{(\ell)}_{[i, m]})\leq s^{\ell}(\bX^{(\ell)}_{[i, m]})\leq \cdots \leq s^{\ell}(\bX^{(\ell)}_{[i, m]})\]
Now by Assumption \ref{asm:stability}, there exists a fixed scoring rule $s(\bX)$ and corresponding treatment rule $f_k(\bX_i) = \mathbf{1}\{s(\bX_i) > c_{k-1}(s)\}-\mathbf{1}\{s(\bX_i) > c_k(s)\}$ such that we can write:
\begin{align*}
	U_{[i, m]}& =\tilde{U}_{[i, m]} + \epsilon_{[i, m]}\\
	\tilde{U}_{[i, m]} & := \frac{1}{L}\sum_{\ell=1}^L
	Kf_k(\bX_{[i, m]}^{(\ell)})Y_{[i, m]}^{(\ell)}\left(\frac{T_{[i, m]}^{(\ell)}}{q}-\frac{1-T_{[i, m]}^{(\ell)}}{1-q}\right)\\
	\tilde{\tau}_k(F,n-m) &=\frac{1}{m} \sum_{i=1}^m \mathbf{1}\left\{\frac{(k-1)m}{K} <i \leq \frac{km}{K}\right\}\tilde{U}_{[i, m]}
\end{align*}
where 
\begin{align*}
	\E[\epsilon_{[i,m]}]&=\E\left[\frac{1}{L}\sum_{l=1}^L K (f_k(\bX_{[i, m]}^{(\ell)})-\hat{f}_k^{\ell}(\bX_{[i, m]}^{(\ell)})) Y_{[i, m]}^{(\ell)}\left(\frac{T_{[i, m]}^{(\ell)}}{q}-\frac{1-T_{[i, m]}^{(\ell)}}{1-q}\right)\right]\\&\leq\sqrt{\E\left[\left(\frac{1}{L}\sum_{l=1}^L K (f_k(\bX_{[i, m]}^{(\ell)})-\hat{f}_k^{\ell}(\bX_{[i, m]}^{(\ell)})) Y_{[i, m]}^{(\ell)}\left(\frac{T_{[i, m]}^{(\ell)}}{q}-\frac{1-T_{[i, m]}^{(\ell)}}{1-q}\right)\right)^2\right]}\\&= o\left(m^{-1/2}\right)
\end{align*}
Then, we can apply the proof  of Theorem~\ref{thm:GATEsamp} on $\tilde{U}_{[i,m]}$ as $f_k$ is fixed, which gives:
\begin{equation*}
	\frac{\tilde{\tau}_k(F,n-m)-\E[\tilde{\tau}_k(F,n-m)]}{\sqrt{\V(\tilde{\tau}_k(F,n-m))}} \to N(0,1)
\end{equation*}
Since $\V(\tilde{\tau}_k(F,n-m))=\V(\hat{\tau}_k(F,n-m)) + o(m^{-1})$ and $\hat{\tau}_k(F,n-m)=\tilde{\tau}_k(F,n-m) + o(m^{-1/2})$, we have:
\begin{equation*}
	\frac{\hat{\tau}_k(F,n-m)-\tau_k(F,n-m)}{\sqrt{\V(\hat{\tau}_k(F,n-m))}} \to N(0,1)
\end{equation*}
\qed

 \section{Proof of Theorem \ref{thm:heterotestcv}}
 \label{app:heterotestcv}
 The proof follows identically to the proof of Theorem \ref{thm:heterotest} by applying the Cramer-Wold Device in Theorem \ref{thm:cramer_wold} to the sequence $\sum_{k=1}^K t_k \hat{\tau}_k(F,n-m)$.

\end{document}